\documentclass[%
 jmp,
cp,  
 amsmath,amsthm,amssymb,
 reprint,%
onecolumn]{revtex4-2}

\usepackage{graphicx}
\usepackage{dcolumn}
\usepackage{bm}

\usepackage[utf8]{inputenc}
\usepackage[T1]{fontenc}
\usepackage{mathptmx} 

\usepackage{amsthm}

\usepackage{hyperref}
\hypersetup{
    citecolor  = blue,
        colorlinks = true,
    }

\usepackage[dvipsnames]{xcolor}

\newtheorem{theorem}{Theorem}[section]
\newtheorem{corollary}[theorem]{Corollary}
\newtheorem{lemma}[theorem]{Lemma}
\newtheorem{proposition}[theorem]{Proposition}

\renewcommand{\Re}{\operatorname{Re}}
\renewcommand{\Im}{\operatorname{Im}}

\usepackage{tikz}
\usepackage{mathrsfs}  

\begin{document}

\title{Edge behavior of higher complex-dimensional determinantal point processes}


\author{L. D. Molag}
 \email{l.d.molag@sussex.ac.uk}
\affiliation{Department of Mathematics, University of Sussex, Brighton, BN1 9RH, United Kingdom 
%
}%

\date{\today} 

\begin{abstract}
As recently proved in generality by Hedenmalm and Wennman, it is a universal behavior of complex random normal matrix models that one finds a complementary error function behavior at the boundary (also called edge) of the droplet as the matrix size increases. Such behavior is seen both in the density of the eigenvalues, and the correlation kernel, where the Faddeeva plasma kernel emerges. These results are neatly expressed with the help of the outward unit normal vector on the edge. We prove that such universal behaviors transcend this class of random normal matrices, being also valid in a specific ``elliptic'' class of determinantal point processes defined on $\mathbb C^d$, which are higher dimensional generalizations of the determinantal point processes describing the eigenvalues of the complex Ginibre ensemble and the complex elliptic Ginibre ensemble. These models describe a system of particles in $\mathbb C^d$ with mutual repulsion, that are confined to the origin by an external field $\mathscr V(z) = |z|^2 - \tau \Re(z_1^2+\ldots+z_d^2)$, where $0\leq \tau<1$. 
Their average density of particles converges to a uniform law on a $2d$-dimensional ellipsoidal region. It is on the hyperellipsoid bounding this region that we find a complementary error function behavior and the Faddeeva plasma kernel. To the best of our knowledge, this is the first instance of the Faddeeva plasma kernel emerging in a higher dimensional model. The results provide evidence for a possible edge universality theorem for determinantal point processes on $\mathbb C^d$. 
\end{abstract}

\maketitle


\section{Introduction} \label{sec:1}

In this paper, we consider a specific ``elliptic'' subclass of determinantal point processes (DPP) on $\mathbb C^d$, with correlation kernel given by
\begin{align} \label{eq:defKnMoreGeneralB}
\mathscr K_n(z, w) &= \sqrt{\mathscr W(z) \mathscr W(w)} \sum_{j\in J_n} \mathscr P_j(z) \overline{\mathscr P_j(w)}, & z, w\in\mathbb C^d,
\end{align}
where $\mathscr W:\mathbb C^d\to [0,\infty)$  is a weight function, $\mathscr P_j(z) = \mathscr P_j(z_1, \ldots, z_d)$ are multivariate orthogonal polynomials, satisfying 
\begin{align*}
\int_{\mathbb C^d} \mathscr P_i(z) \overline{\mathscr P_j(z)} \mathscr W(z) \, d^{2d}z &= \delta_{ij}, & i,j\in J_n,
\end{align*}
and they are indexed by some set $J_n$ such that $\{\mathscr P_j(z_1, \ldots, z_d) : j\in J_n\}$ forms a basis of the space of all polynomials with complex coefficients of total degree smaller than $n$. Of course, it is assumed here that $\mathscr W(z)$ vanishes sufficiently fast as $|z|\to\infty$, such that the above inner products are well-defined. The number of points of the DPP is given by
\begin{align*}
\int_{\mathbb C^d} \mathscr K_n(z,z) d^{2d}z = |J_n| = \binom{n+d-1}{d},
\end{align*}
where the last step follows by a straightforward combinatorial argument.
The positivity requirement for the expressions
\begin{align*}
\det \left(\mathscr K_n(z_i, z_j)\right)_{1\leq i,j\leq k}, \qquad k=1,2,\ldots
\end{align*}
follows from the Cauchy-Binet formula for the determinant of a product of rectangular matrices. 
The subclass that we shall investigate, is called elliptic due to its relation to the complex elliptic Ginibre ensemble \cite{Girko}, and corresponds to the choice
\begin{align} \label{eq:startDefWeight}
\mathscr W(z) = e^{-\mathscr V(z)}, \qquad \mathscr V(z) = |z|^2 - \tau\Re\left(z_1^2+\ldots+z_d^2\right), \qquad 0\leq \tau<1.
\end{align}
Kernels in higher dimensions of the type \eqref{eq:defKnMoreGeneralB} have been studied by Berman \cite{Berman}, who considered the case of exponentially varying weights $\mathscr W(z)=e^{-n \mathscr V(z)}$, for a large class of external fields $\mathscr V:\mathbb C^d\to \mathbb R$ satisfying certain growth and regularity conditions. The kernel in \eqref{eq:defKnMoreGeneralB}, excluding the factor $\sqrt{\mathscr W(z) \mathscr W(w)}$, is called the \textit{Bergman kernel} of the Hilbert space of all polynomials $\mathscr P(z_1, \ldots, z_d)$ of total degree $<n$ with weigthed norm
\begin{align*}
\lVert \mathscr P \rVert^2 = \int_{\mathbb C^d} |\mathscr P(z_1, \ldots, z_d)|^2 \mathscr W(z) \, d^{2d}z.
\end{align*}
Under the growth and regularity conditions assumed in \cite{Berman}, Berman was able to determine that the average density of points converges to the so-called Monge-Ampère measure (see \cite{Kolodziej} for its definition in the language of complex manifolds), which has a compact support
depending on $\mathscr V$. For $d=1$, this compact support is called the \textit{droplet}. We shall adopt this terminology for the $d>1$ case. In the current paper, we will be interested in the behavior of the correlation kernel on the boundary, or \textit{edge}, of the droplet. The particular choice of weight \eqref{eq:startDefWeight} corresponds to the special case where the weight factorizes in identical planar weights, 
\begin{align} \label{eq:Wfactorization}
\mathscr W(z) = \prod_{k=1}^d \omega(z_k),
\end{align}
where $\omega:\mathbb C\to [0,\infty)$ is a planar weight, which in our specific case is given by $\omega(z) = e^{-|z|^2+\tau \Re(z^2)}$. 
Note that, for general factorized weight as in \eqref{eq:Wfactorization}, the multivariate orthogonal polynomials are given simply by
\begin{align*}
\mathscr P_j(z) &= \prod_{k=1}^d P_{j_k}(z_k), & j\in \{0,1,\ldots\}^d, \quad |j|=j_1+\ldots+j_d<n,
\end{align*}
where the $P_j$ have degree $j$ and satisfy the planar orthogonality conditions
\begin{align*}
\int_{\mathbb C} P_i(z) \overline{P_j(z)} \omega(z) \, d^2z &= \delta_{ij}, & i,j=0,1,\ldots, n-1. 
\end{align*}
To motivate our choice of planar weights $\omega$, let us first consider the case of $d=1$ in \eqref{eq:defKnMoreGeneralB}. In this case the kernel is a sum over a product of two planar orthogonal polynomials. It is well-known that the corresponding DPP describes the eigenvalues of so-called \textit{random normal matrices} for a general class of exponentially varying weights $\omega$. These models consist of $n\times n$ complex normal matrices, distributed by
\begin{align*}
\frac{1}{Z_n^V} e^{- n \operatorname{Tr} V(M)} dM, \qquad dM = \prod_{i,j=1}^n d\operatorname{Re} M_{ij} \, d\operatorname{Im} M_{ij},
\end{align*}
where $Z_n^V$ is the normalization constant, and $V$ is a function, the \textit{external field}, satisfying certain growth and regularity conditions. The corresponding eigenvalues are then distributed by
\begin{align*}
\frac{1}{c_n^V} \prod_{1\leq i<j\leq n} |z_i-z_j|^2 \prod_{j=1}^n \omega(z_j) d^2z_j, \qquad z_1,\ldots, z_n\in\mathbb C, 
\end{align*}
where $c_n^V$ is the normalization constant, and we have the planar weight $\omega(z) = e^{-n V(z)}$. The corresponding eigenvalues form a DPP, described by a correlation kernel of the form \eqref{eq:defKnMoreGeneralB} for $d=1$ with $\mathscr W(z)=\omega(z)=e^{-n V(z)}$. 
One major difference with, e.g., the Hermitian ensembles with general external fields and unitary symmetry, is that the eigenvalues of random normal matrices are not necessarily real. Since, with probability $1$, the matrices sampled from the ensemble are not Hermitian, they are called non-Hermitian random matrices. We shall focus on two such random normal matrix models, the (complex) Ginibre ensemble and the elliptic Ginibre ensemble. The normality condition is not essential for these models, and, since they are defined by homogeneous external fields $V$, the $n$ in the exponent of the weights $\omega(z)=e^{-n V(z)}$ is also not essential. The Ginibre ensemble is then defined as the $n\times n$ complex matrices $M$ with elements distributed as i.i.d. complex Gaussian random variables of mean zero and unit variance. Explicitly, these $n\times n$ complex matrices $M$ are distributed by
\begin{align*}
\frac{1}{Z_n} e^{-\operatorname{Tr}(M^* M)} dM. 
\end{align*} 
It was introduced by Ginibre in 1965 \cite{Ginibre}, along with a real and quaternion version. It is easy to check that we have $\mathscr W(z) = \omega(z) = e^{-|z|^2}$ (the external field is thus $V(z)=|z|^2$) and $\mathscr P_j(z) = P_j(z) = \frac{1}{\sqrt{j! \pi}} z^{j-1}$ in the notation of \eqref{eq:defKnMoreGeneralB}, and the eigenvalues of the Ginibre ensemble thus form a DPP on $\mathbb C$, with correlation kernel
\begin{align} \label{eq:defGinibreKer}
\mathscr K_n(z, w) = \frac{1}{\pi} e^{-\frac{|z|^2+|w|^2}{2}} \sum_{j=0}^{n-1} \frac{(z\overline{w})^j}{j!}.
\end{align}

Ginibre proved that the average density of particles $\rho_n^{(1)}(z) = \frac{1}{n} \mathscr K_n(z,z)$ converges to a uniform distribution on the unit disc under a proper scaling \cite{Ginibre}. Namely, we have
\begin{align} \label{eq:densityGinibreScaling}
\lim_{n\to\infty} n \rho_n^{(1)}(\sqrt n \, z)
= \begin{cases} \frac{1}{\pi}, & |z|<1,\\ \frac{1}{2\pi}, & |z|=1,\\ 0, & |z|>1.\end{cases} 
\end{align}
(The case $|z|=1$ was not treated in \cite{Ginibre} though.) The region $|z|\leq 1$ is the droplet, the interior $|z|<1$ is often called the \textit{bulk}. Indeed, the boundary $|z|=1$ is the edge. The pointwise limit for $|z|=1$ in \eqref{eq:densityGinibreScaling} is actually a direct consequence of a more general result concerning edge scaling limits. For any $z$ in the unit circle, and any $u, v\in\mathbb C$, we have
\begin{align} \label{eq:edgeScalingErfcGinibre}
\lim_{n\to\infty} \mathscr K_n\left(\sqrt n \, z + u, \sqrt n \, z + v\right)
= \frac{1}{2\pi} e^{u \overline{v}-\frac{|u|^2+|v|^2}{2}} \operatorname{erfc}\left(\frac{\overline{v} z + u \overline{z}}{\sqrt 2}\right),
\end{align}
where $\operatorname{erfc}$ is the complementary error function, given by
\begin{align*}
\operatorname{erfc}(z) = 1 - \frac{2}{\sqrt \pi} \int_0^z e^{-t^2} dt. 
\end{align*}
Alternatively, we may write \eqref{eq:edgeScalingErfcGinibre} as
\begin{align} \label{eq:edgeScalingErfcGinibreRewrite}
\lim_{n\to\infty} \mathscr K_n\left((\sqrt{n} + u) z, (\sqrt{n} + v) z\right)
= \frac{1}{2\pi} e^{u \overline{v}-\frac{|u|^2+|v|^2}{2}} \operatorname{erfc}\left(\frac{u+\overline v}{\sqrt 2 }\right).
\end{align}
The limiting kernel on the RHS in \eqref{eq:edgeScalingErfcGinibreRewrite} is known as the Faddeeva plasma kernel \cite{HeWe} (note that Hedenmalm and Wenmann use a different definition for the error function), and seems to have been first derived as a scaling limit in \cite{FoHo}. As explained in \cite{HeWe}, the name derives from the plasma dispersion function, which was first tabulated by Faddeeva and Terent\'{ }ev \cite{Faddeeva}.
This edge behavior is known to be universal, i.e., it arises as a scaling limit in a large class of other models. For example, Tao and Vu proved that it holds in the vague topology for random normal matrix models (real or complex), that match up to four moments with the Ginibre ensemble \cite{TV}. It was recently proved by Cipolloni, Erd\H{o}s and Schröder that this four moment condition can be removed when the moments are finite and some growth condition is satisfied by the probability density of the matrix entries \cite{CiErSc}. Ameur, Kang, and Makarov proved that it holds for random normal matrix models with radially symmetric external field \cite{AmKaMa}. 
Very recently, it was proved by Hedenmalm and Wennman that the behavior is universal in random normal matrix models, under weak assumptions on the external field \cite{HeWe} (being ``1-admissible''), essentially bringing the questions about edge universality of these models to their logical conclusion. 

Let us now turn our attention to the elliptic Ginibre ensemble of $n\times n$ complex matrices with parameter $\tau$ \cite{Girko}. These are distributed according to
\begin{align*}
\frac{1}{Z_n} e^{-\frac{1}{1-\tau^2}\operatorname{Tr}\left(M^*M - \frac{\tau}{2}(M^2+(M^*)^2)\right)}.
\end{align*}
The model describes random matrices of the form $M = \sqrt{1+\tau} G_1 + i \sqrt{1-\tau} G_2$, where $G_1, G_2$ are $n\times n$ complex Hermitian matrices picked from the GUE. As such, the elliptic Ginibre ensemble interpolates between the Ginibre ensemble ($\tau=0$) and the GUE ($\tau=1$). For esthetic reasons, we apply a scaling $M \to (1-\tau^2)^{-\frac{1}{2}} M$, and the model is then described by the external field $V(z) = |z|^2-\tau \operatorname{Re}(z^2)$. Consequently, we have $\mathscr W(z)=\omega(z)=e^{-|z|^2+\tau \operatorname{Re}(z^2)}$, and then
\begin{align*}
\mathscr P_j(z) = P_j(z) =
\frac{\sqrt{1-\tau^2}}{\pi} \frac{\left(\frac{\tau}{2}\right)^{\frac{j}{2}}}{\sqrt{j!}} H_{j}\left(\sqrt{\frac{1-\tau^2}{2\tau}} z\right),
\end{align*}
where $H_j(z) = (-1)^j e^{z^2} \frac{d^j}{dz^j} e^{-z^2}$ is the Hermite polynomial of degree $j$. The corresponding eigenvalues form a DPP, where now the correlation kernel is given by
\begin{align} \label{eq:defKerneleGinUE}
\mathscr K_n(z, w) = \frac{\sqrt{1-\tau^2}}{\pi} \sqrt{\omega(z) \omega(w)} 
\sum_{j=0}^{n-1} \frac{1}{j!} \left(\frac{\tau}{2}\right)^j H_j\left(\sqrt{\frac{1-\tau^2}{2\tau}} z\right) \overline{H_j\left(\sqrt{\frac{1-\tau^2}{2\tau}} w\right)}.
\end{align}
 The average density of particles also converges to a uniform distribution, where now the bulk is not given by the unit disc, but by the elliptic domain
\begin{align*}
\mathscr E_\tau = \left\{z\in\mathbb C : \frac{1-\tau}{1+\tau} \Re(z)^2 + \frac{1+\tau}{1-\tau} \Im(z)^2<1 \right\}. 
\end{align*}
Lee and Riser were able to obtain fine asymptotics for the density $\rho_n^{(1)}$ of the Ginibre ensemble and the elliptic Ginibre ensemble \cite{LeRi}. Picking $z\in \partial \mathscr E_\tau$, and letting $\textbf{n}$ denote the outward normal vector on $\partial \mathscr E_\tau$ at $z$, they showed that
\begin{multline} \label{eq:densityLeeRiser}
n \rho_n^{(1)}(\sqrt n \, z + \lambda \textbf{n}) = \frac{1}{2\pi} \operatorname{erfc}(\sqrt 2 \lambda)
+ \frac{\kappa}{\sqrt n} \frac{1}{3\sqrt{2\pi^3}} (\lambda^2-1) e^{-2\lambda^2}\\
+ \frac{1}{n} \frac{e^{-2\lambda^2}}{\sqrt{2\pi^3}} \left(\kappa^2\frac{2\lambda^5-8\lambda^3+3\lambda}{18}
+\left(\frac{(\partial_s\kappa)^2}{9\kappa^2}-\frac{\partial^2_s\kappa}{12\kappa}\right)\lambda\right) + \mathcal O(n^{-\frac{3}{2}+9\nu}),
\end{multline}
where $\kappa$ is the curvature of the ellipse $\partial \mathscr E_\tau$ in $z$, and we have any fixed $0<\nu<\frac{1}{6}$ ($\partial_s$ denotes the derivative with respect to the arclength parameter, see \cite{LeRi} for details). The error bound is uniform for $\lambda\in\mathbb R$ with $\lambda = \mathcal O(n^\nu)$ and $z\in\partial \mathscr E_\tau$. The first treatment of the off-diagonal case for the elliptic Ginibre ensemble can be found in Riser's thesis \cite{Ri}, and its error bounds were recently refined by Byun and Ebke in a paper about the quaternion elliptic Ginibre ensemble \cite{ByEb}. Namely, for some explicit unimodular factors $c_n : \partial \mathscr E_\tau \times \mathbb C\to \mathbb T$ (where $\mathbb T=\{z\in\mathbb C : |z|=1\}$), and any $\varepsilon>0$, we have
\begin{multline} \label{eq:FaddeevaRiserByunEbke}
c_n(z, u) \overline{c_n(z, v)} \mathscr K_n(\sqrt n \, z+u \, \textbf{n}, \sqrt n \, z+v \, \textbf{n})
= \frac{1}{2 \pi} \exp\left(u \overline v-\frac{|u|^2+|v|^2}{2}\right) \operatorname{erfc}\left(\frac{u + \overline v}{\sqrt 2}\right) \\
 +  \frac{1}{\sqrt n} \exp\left(- \frac{|u|^2+u^2+|v|^2+\overline{v}^2}{2}\right) 
 \kappa \frac{u^2+\overline v^2 - u\overline v - 1}{3\sqrt{2 \pi^3}} 
   + \mathcal O(n^{-1+\varepsilon}),
\end{multline}
as $n\to\infty$, uniformly for $z\in \partial\mathscr E_\tau$, and $u, v\in\mathbb C$ bounded (in fact, when $0<\varepsilon<\frac{1}{2}$, the condition $u, v = \mathcal O(n^{\varepsilon/3})$ is sufficient for the error bound to be uniform). Here $\kappa$ is the curvature of $\partial \mathscr E_\tau$ at $z$. Note that these unimodular factors drop out when calculating determinants, and are thus irrelevant for the $k$-point correlation functions.


Returning now to the general case $d\geq 1$, our goal is to show that
edge scaling limits such as \eqref{eq:densityLeeRiser} and \eqref{eq:FaddeevaRiserByunEbke} also hold for higher dimensional DPPs with correlation kernel of the form \eqref{eq:defKnMoreGeneralB}, where $\mathscr W$ factorizes as in \eqref{eq:Wfactorization} with the planar weights $\omega(z) = e^{-|z|^2+\tau \operatorname{Re}(z^2)}$ from either the Ginibre ($\tau=0$) or the elliptic Ginibre ensemble ($0<\tau<1$). Explicitly, for $\tau=0$, this means that we consider 
\begin{align} \label{eq:defKernelGinUEd}
\mathscr K_n(z, w) &= \frac{1}{\pi^d} e^{-\frac{|z|^2+|w|^2}{2}} \sum_{|j|<n} \prod_{k=1}^d \frac{(z_k \overline w_k)^{j_k}}{j_k!},
& z, w\in\mathbb C^d,
\end{align}
while for $0<\tau<1$, we take
\begin{align} \label{eq:defKerneleGinUEd}
\mathscr K_n(z, w) = \left(\frac{\sqrt{1-\tau^2}}{\pi}\right)^d  
\sum_{|j|< n} \frac{\left(\frac{\tau}{2}\right)^{|j|}}{j_1! \cdots j_d!} \prod_{k=1}^d \sqrt{\omega(z_k) \omega(w_k)} H_{j_k}\left(\sqrt{\frac{1-\tau^2}{2\tau}} z_k\right) \overline{H_{j_k}\left(\sqrt{\frac{1-\tau^2}{2\tau}} w_k\right)}, \qquad z, w\in \mathbb C^d.
\end{align}
Here the summation is over multi-indices $j=(j_1, \ldots, j_d)\in \{0,1,\ldots\}^d$ such that $|j|=j_1+\ldots+j_d<n$. The models describe a system of particles in $\mathbb C^d$ that repel each other (due to the determinantal structure), which are confined to the origin due to the external field $\mathscr V(z) = |z|^2 - \tau \Re(z_1^2+\ldots+z_d^2)$.
One motivation for studying such higher dimensional processes comes from \cite{ADM}. There, the DPP on $\mathbb C^d$ with correlation kernel \eqref{eq:defKerneleGinUEd}, was considered.  
In the limit $\tau\to 1$, this model corresponds to spinless free fermions in $\mathbb R^d$ in a harmonic potential \cite{DeDoMaSc3}. 
One major reason to investigate the model for $d>1$, was that it allowed to probe a weak non-Hermiticity regime \cite{FyKhSo1, FyKhSo2, FyKhSo3}. For $d=1$ it is known that interpolating kernels are found in the bulk and on the edge in the weak non-Hermiticity regime \cite{Bender, ABe, AmBy}. Recently, an interpolating kernel was also found for the rightmost eigenvalue of the elliptic Ginibre ensemble in the weak non-Hermiticity regime \cite{BoLi}. Higher dimensional versions of such interpolating kernels for the bulk and edge were indeed found for $d>1$ in \cite{ADM}. For $d=1$, the limit $\tau\to 0$ corresponds to spinless free Fermions in two dimensions in a rotating trap \cite{BSG}, with angular speed near some critical value, but it is not clear if such a physical interpretation exists for $d> 1$ for \eqref{eq:defKernelGinUEd}. The corresponding DPP can be interpreted as a higher dimensional version of the complex Ginibre ensemble.  At least formally, the approach in \cite{BSG} can be extended to $2d$ dimensions, by generalizing the Hamiltonian in a straightforward way, the model defined via \eqref{eq:defKernelGinUEd} then describing spinless free Fermions in a rotating $2d$-dimensional space (with angular speed close to some critical value). 

Contrary to the $d=1$ case, it is considerably more complicated to write down the JPDF associated to DPPs with kernel \eqref{eq:defKnMoreGeneralB} for $d>1$. For example, in what is probably the simplest case, i.e., the kernel \eqref{eq:defKernelGinUEd}, writing down the JPDF merely for $n=2$ gives 
\begin{align*}
\rho_2^{(d+1)}(z_{(1)}, \ldots, z_{(d+1)}) &= \frac{e^{-\sum_{j=1}^{d+1}|z_{(j)}|^2}}{\pi^{d(d+1)}}
\det \left(1+z_{(i)} \cdot z_{(j)}\right)_{1\leq i,j\leq d+1}
, & z_{(1)}, \ldots, z_{(d+1)}\in\mathbb C^d. 
\end{align*}
A Vandermonde type expression, with factors expressing the mutual distances between the points, is therefore not going to work, because the total degree of such a factor would be $d(d+1)$ rather than $2d+2$, and we have little hope that a closed form expression can be derived for general $n$. Consequently, it is not easy to find an explicit equilibrium problem for measures on $\mathbb C^d$ corresponding to the DPP. An associated equilibrium problem for measures on $\mathbb C^d$ can be found in \cite{Berman}, although its characterisation is somewhat abstract. For the particular cases of \eqref{eq:defKernelGinUEd} and \eqref{eq:defKerneleGinUEd}, it is implied by \cite[Theorem 3.4]{Berman} that we should find a uniform law, given by the Monge-Ampère measure \cite{Kolodziej}, although it is not a priori clear what the support is. Nevertheless, it is possible to determine the support \cite{ADM}. Under a scaling $(z, w)\mapsto \sqrt n (z, w)$, the average density of particles for both \eqref{eq:defKernelGinUEd} and \eqref{eq:defKerneleGinUEd} converges to a uniform law on a $2d$-dimensional ellipsoidal region, given by
\begin{align} \label{eq:defBulk}
\mathscr E_\tau^d = \left\{z\in\mathbb C^d : \frac{1-\tau}{1+\tau} |\Re z|^2 + \frac{1+\tau}{1-\tau} |\Im z|^2 < 1\right\}
\end{align}
(For $\tau=0$, we prove this in Corollary \ref{cor:densityd0}). The droplet is thus given by $\mathscr E_\tau^d\cup \partial\mathscr E_\tau^d$.
In the $2d$-dimensional bulk, defined via \eqref{eq:defBulk}, there is a local scaling limit in the form of a factorization in Ginibre kernels \cite{ADM}. That is, there exist unimodular functions $c_n : \mathscr E_\tau^d \times \mathbb C^d\to \mathbb T$ such that 
\begin{align*}
\lim_{n\to\infty} c_n(z, u) \overline{c_n(z, v)} \mathscr K_n\left(\sqrt n \, z+ u, \sqrt n \, z+ v\right)
= \frac{1}{\pi^d} \exp\left(u\cdot v - \frac{|u|^2+|v|^2}{2}\right),
\end{align*}
uniformly on compact sets of $z\in \mathscr E_\tau^d$ and $u, v\in\mathbb C$ (for $\tau=0$, see Corollary \ref{cor:bulkScalingLimit}). Here $z\cdot w = z_1 \overline w_1+\ldots+z_d \overline w_d$ denotes the dot product. This bulk scaling limit is well-known to be universal for random normal matrices under certain conditions ($d=1$) \cite{AmHeMa}. For $d> 1$, the bulk scaling limit is also universal, as was shown by Berman \cite{BermanFermion}. A scaling limit at the edge was not obtained in \cite{ADM} (not for fixed $\tau$). The aim of this paper is to prove that we find edge scaling limits, similar to \eqref{eq:densityLeeRiser} and \eqref{eq:FaddeevaRiserByunEbke}, for the DPPs with kernel \eqref{eq:defKernelGinUEd} or \eqref{eq:defKerneleGinUEd} on the hyperellipsoid $\partial \mathscr E_\tau^d$. 
Indeed, we find higher dimensional generalizations. We start with a result for the average density of points $\rho_n^{(1)}(z)$ (for   \eqref{eq:defKernelGinUEd} when $\tau=0$, and \eqref{eq:defKerneleGinUEd} when $0<\tau<1$). 

\begin{theorem}\label{thm:mainThmDensity}
Let $d$ be a positive integer, let $0\leq \tau<1$, and let $0<\nu<\frac{1}{6}$. Let $z\in \partial \mathscr E_\tau^d$, and denote by $\textbf{n}$ the outward unit normal vector on $\partial \mathscr E_\tau^d$ at $z$. Then we have as $n\to\infty$ that
\begin{align} \label{eq:densityScalingLimit}
n^d \, \rho_n^{(1)}\left(\sqrt n \, z+\lambda \textbf{n}\right) = \frac{d!}{2 \pi^d} \operatorname{erfc}\left(\sqrt 2 \, \lambda\right)
+ \frac{\kappa}{\sqrt n}  \frac{d!}{3\pi^d\sqrt{2\pi}}
\left(\lambda^2-1
+ \mathfrak{1}_{\tau\neq 0} \frac{d-1}{(2\kappa^2)^{\frac{1}{3}}} +\mathcal O(n^{-\frac{1}{2}+3\nu})\right) e^{-2\lambda^2},
\end{align}
uniformly for $z\in \partial\mathscr E_\tau^d$, and $\lambda\in\mathbb R$ such that $\lambda = \mathcal O(n^\nu)$, where $\kappa=\kappa(z)$ is defined by
\begin{align} \label{eq:defkappad}
\kappa(z) = \frac{1}{\left((|\Re z|^2-|\Im z|^2 - \frac{4\tau}{1-\tau^2})^2+ 4 |\Re z|^2|\Im z|^2\right)^\frac{3}{4}}.
\end{align} 
\end{theorem}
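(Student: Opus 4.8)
The plan is to collapse the $d$-dimensional diagonal kernel to a one-variable contour integral and then to analyze it by steepest descent in the regime where the pole created by the cutoff $|j|<n$ coalesces with a saddle point; the point of the proof is that this coalescence occurs exactly on the edge $\partial\mathcal E_\tau^d$. To set it up, note that on the diagonal $\mathcal K_n(z,z)$ depends on $z$ only through $|z|^2$ and $\Re(z_1^2+\dots+z_d^2)$; writing $\mathcal K_n(z,z)=\sum_{|j|<n}\prod_{k=1}^{d}f_{j_k}(z_k)$ with $f_j(w)=\omega(w)|P_j(w)|^2$ the one-dimensional density factors from \eqref{eq:defKnMoreGeneral}, the simplex sum decouples as
\begin{align*}
\mathcal K_n(z,z)=\sum_{m=0}^{n-1}[x^m]\prod_{k=1}^{d}F(x;z_k),\qquad F(x;w):=\sum_{j\ge0}f_j(w)\,x^j ,
\end{align*}
where $[x^m]$ is coefficient extraction. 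For $0<\tau<1$, Mehler's formula gives $F(x;w)=\tfrac{\sqrt{1-\tau^2}}{\pi}\,\omega(w)\,(1-\tfrac{\tau^2x^2}{4})^{-1/2}\exp(\Phi(x;w))$ with $\Phi$ affine in $|w|^2,\Re(w^2)$; for $\tau=0$ one has $F(x;w)=\pi^{-1}e^{-|w|^2(1-x)}$. Taking the product over $k$, using $\sum_{m=0}^{n-1}x^{-m-1}=(1-x^{-n})/(x-1)$, and substituting $z\mapsto Z:=\sqrt n\,z+\textbf{n}\lambda$, one obtains
\begin{align*}
\mathcal K_n(Z,Z)=\Bigl(\tfrac{\sqrt{1-\tau^2}}{\pi}\Bigr)^{d}\frac{1}{2\pi i}\oint\frac{e^{\Psi_n(x)}}{(1-\tfrac{\tau^2x^2}{4})^{d/2}}\,\frac{1-x^{-n}}{x-1}\,dx
\end{align*}
over a small positively oriented circle about $0$, where $\Psi_n$ is explicit, of size $O(n)$, and affine in $|Z|^2$ and $\Re(Z_1^2+\dots+Z_d^2)$ (for $\tau=0$ the prefactor is $1$ and $\Psi_n(x)=|Z|^2(x-1)$). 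Introducing the single complex number $w_Z:=|\Re Z|+i|\Im Z|$, so that $|w_Z|^2=|Z|^2$ and $\Re(w_Z^2)=\Re(Z_1^2+\dots+Z_d^2)$, this is \emph{exactly} the integral governing the diagonal of the one-dimensional (elliptic) Ginibre kernel at $w_Z$, \emph{except} that the Mehler prefactor carries power $d/2$ in place of $1/2$. A short computation with \eqref{eq:defBulk}--\eqref{eq:defkappad} shows that $z\mapsto w_z$ maps $\partial\mathcal E_\tau^d$ onto the planar ellipse $\partial\mathcal E_\tau$, sends the outward unit normal to the outward unit normal, preserves the normal coordinate $\lambda$, and carries $\kappa(z)$ to the curvature of $\partial\mathcal E_\tau$ at $w_z$.

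When $\tau=0$ the integral is elementary: since $\textbf{n}=z$ and $|z|=1$ on $\partial\mathcal E_0^d$ we have $|Z|^2=(\sqrt n+\lambda)^2=n+2\sqrt n\,\lambda+\lambda^2$, whence $\mathcal K_n(Z,Z)=\pi^{-d}e^{-|Z|^2}\sum_{m=0}^{n-1}|Z|^{2m}/m!=\pi^{-d}\Gamma(n,|Z|^2)/\Gamma(n)$. The statement then follows from the classical uniform error-function asymptotics of the normalized incomplete Gamma function, which produce $\tfrac12\operatorname{erfc}(\sqrt2\,\lambda)$ together with the $\tfrac1{\sqrt n}(\lambda^2-1)e^{-2\lambda^2}$-correction with precisely the stated constant; multiplying by $n^d/\#\{j:|j|<n\}=d!+O(n^{-1})$ yields \eqref{eq:densityScalingLimit}, with $\kappa\equiv1$ and no $(d-1)$-term, consistently with $\mathfrak{1}_{\tau\neq0}$.

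When $0<\tau<1$ I would carry out the coalescing saddle-point analysis. Write the integrand as $A(x)\,e^{n\psi_n(x)}\,(1-x^{-n})/(x-1)$ with $A(x)=(1-\tfrac{\tau^2x^2}{4})^{-d/2}$ slowly varying and $n\psi_n=\Psi_n$, the $\lambda$-dependence of $\psi_n$ entering at relative order $\lambda/\sqrt n$. Differentiating $\psi_n-\log x$ shows that $z\in\partial\mathcal E_\tau^d$ is equivalent to the saddle $x_\ast$ of $\psi_n-\log x$ being at $x_\ast=1$ when $\lambda=0$ — so at the edge the saddle collides with the pole at $x=1$ — and that for $\lambda\neq0$ it is displaced by $O(\lambda/\sqrt n)$. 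The standard coalescing saddle–pole method (as performed for $d=1$ by Lee--Riser \cite{LeRi} and refined by Byun--Ebke \cite{ByEb}), i.e. the substitution $x=1+\xi/\sqrt{c(z)\,n}$ with $c(z)>0$ determined by $\psi_n''(1)$, which turns the leading behaviour into an error-function integral, yields $\tfrac{d!}{2\pi^d}\operatorname{erfc}(\sqrt2\,\lambda)$ at leading order and, after expanding $A$ and $\psi_n$ one step further, the $\tfrac1{\sqrt n}$-term. That term splits: the expansion of $\psi_n$ together with the pole reproduces the $d$-independent part $(\lambda^2-1)e^{-2\lambda^2}$ — precisely Lee--Riser's planar-ellipse formula \eqref{eq:densityLeeRiser} at $w_Z$, the $\partial_s\kappa$-terms there sitting at order $1/n$ and being absorbed into the error — whereas expanding $A(x)=(1-\tfrac{\tau^2x^2}{4})^{-d/2}$ about $x=1$ contributes a term proportional to $d$ (its logarithmic derivative at $x=1$ equals $\tfrac{d\tau^2}{4-\tau^2}$); subtracting the piece already present in the $d=1$ base case leaves $(d-1)$ times a factor which, once $\psi_n''(1)$ is written out through the geometry of $\partial\mathcal E_\tau^d$, is identified as $(2\kappa^2)^{-1/3}$. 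Multiplying by $n^d/\#\{j:|j|<n\}=d!+O(n^{-1})$ finishes the argument.

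The error bound requires the above to be uniform for $\lambda=O(n^\nu)$: then $x_\ast$ stays within $O(n^{\nu-1/2})$ of $1$, the Taylor expansions are done on a window of width $O(n^{-1/2+\nu'})$, and the contributions off the window are controlled by steepest-descent decay, giving the relative error $O(n^{-1/2+3\nu})$ (with a constant depending on $\tau$, which is why the case $\tau=0$ must be treated on its own). I expect the main effort to be twofold: (i) making this uniform steepest-descent analysis with a pole approaching the saddle rigorous — for $d>1$ it differs from the established $d=1$ analysis only through the harmless extra power $(d-1)/2$ of the Mehler prefactor, but the $\lambda$-uniform estimates must be redone; and (ii) computing $\psi_n''(1)$ explicitly and matching it with the curvature, so that the new term emerges as exactly $\tfrac{d-1}{(2\kappa^2)^{1/3}}$ rather than some other function of $\tau$ and $z$.
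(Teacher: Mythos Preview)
Your proposal is correct and follows essentially the same route as the paper: both reduce the diagonal kernel via Mehler's formula (for $\tau>0$) or directly (for $\tau=0$) to a single contour integral in which $d$ enters only through the power of a slowly-varying prefactor, then perform a coalescing saddle--pole steepest descent at the edge, with the $(d-1)$-term arising from that extra power. Your geometric observation that $Z\mapsto|\Re Z|+i|\Im Z|$ sends the diagonal problem exactly to the planar elliptic Ginibre density, preserving the normal coordinate $\lambda$, is precisely what the paper encodes computationally through its auxiliary variables $z_\pm=\sqrt{\sinh 2\xi_\tau}\,(|\Re Z|\pm i|\Im Z|)$ and its specialization $\Delta_+=\Delta_-$ in the case $u=v=\lambda\textbf{n}$.
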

This result should be seen as an extension of \eqref{eq:densityLeeRiser} to higher dimensions (although we give fewer terms in the expansion). This is not entirely surprising. The asymptotic behavior in terms of the complementary error function was also found near the edge in higher dimensional models investigated by Ross-Singer \cite{RoSi} and Zelditch-Zhou \cite{ZeZh}. Both references investigate Bergman kernels, or rather their associated densities, in the geometric setting of complex manifolds with holomorphic line bundles, Kähler manifolds in particular. In \cite{RoSi}, the tail of the Bergman kernel is considered, this situation is different from ours but nevertheless yields the complementary error function behavior. 
We suspect that \eqref{eq:densityScalingLimit} can be derived to leading order from \cite{ZeZh} after making the relevant identifications.  
It is an interesting question whether the universality class contains other models of the form \eqref{eq:defKnMoreGeneralB}. 
At the moment, it is not known whether, analogous to the $d=1$ case, the expression $\kappa$ in \eqref{eq:defkappad} has a geometric interpretation pertaining to $\partial\mathscr E_\tau^d$ (although obviously, it yields the curvature of $\partial \mathscr E_\tau$ in $|\Re z|+i|\Im z|$).\\


We also derive a scaling limit, essentially a higher dimensional analogue of the Faddeeva plasma kernel, for the kernel with general arguments (i.e., not necessarily diagonal or in the direction of the outward normal vector) near the edge. 
\begin{theorem}\label{thm:mainThmFaddeevaKernel}
Let $d$ be a positive integer, let $0\leq\tau<1$, and let $0<\nu<\frac{1}{6}$.  Let $z\in \partial \mathscr E_\tau^d$, and denote by $\textbf{n}$ the outward unit normal vector on $\partial \mathscr E_\tau^d$ at $z$. Then there exist continuous unimodular functions $c_n : \partial \mathscr E_\tau^d \times \mathbb C^d\to \mathbb T$ such that 
\begin{multline*}
c_n(z, u) \overline{c_n(z, v)} \pi^d \exp\left(\frac{|u|^2+|v|^2}{2}-u\cdot v\right) \mathscr K_n(\sqrt n \, z+u, \sqrt n \, z+v)
= \frac{1}{2} \operatorname{erfc}\left(\frac{u\cdot \textbf{n}+\textbf{n}\cdot v}{\sqrt 2}\right) \\
 +  \exp\left(- \frac{(u\cdot \textbf{n}+\textbf{n}\cdot v)^2}{2}\right) \frac{\mathcal O\left(1+|u|^2+|v|^2\right)}{\sqrt n},
\end{multline*}
as $n\to\infty$, uniformly for $z\in \partial\mathscr E_\tau^d$, and $u, v\in\mathbb C^d$ such that $|u|, |v| = \mathcal O(n^\nu)$.
\end{theorem}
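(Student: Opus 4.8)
The plan is to reduce the $d$-dimensional summation over the simplex $|j|<n$ to a single contour integral in an auxiliary variable, to obtain a closed Gaussian-type expression for the summand through a Mehler/Hardy--Hill identity, and then to run a steepest-descent analysis in which the edge $\partial\mathcal E_\tau^d$ corresponds to a saddle point colliding with a fixed simple pole --- the standard mechanism producing a complementary error function.

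\textbf{Step 1 (contour integral representation).} Writing $a_j(z,w)=\sqrt{\omega(z)\omega(w)}\,P_j(z)\overline{P_j(w)}$ for the one-dimensional building blocks and $A(t;z,w)=\sum_{j\ge0}a_j(z,w)t^j$ for their generating function, the product $\prod_{k=1}^d A(t;z_k,w_k)$ has $\ell$-th Taylor coefficient $\sum_{|j|=\ell}\prod_k a_{j_k}(z_k,w_k)$; combined with $\sum_{\ell=0}^{n-1}[t^\ell]F(t)=[t^{n-1}]\bigl(F(t)/(1-t)\bigr)$ this gives
\begin{align*}
\mathcal K_n(z,w)=\frac{1}{2\pi i}\oint_{|t|=r}\frac{dt}{t^n(1-t)}\prod_{k=1}^d A(t;z_k,w_k),\qquad 0<r<1 .
\end{align*}
For $\omega(z)=e^{-|z|^2+\tau\Re(z^2)}$ the Hardy--Hill (Mehler) formula for Hermite polynomials yields the closed form
\begin{align*}
A(t;z,w)=\frac{\sqrt{1-\tau^2}}{\pi\sqrt{1-\tau^2t^2}}\sqrt{\omega(z)\omega(w)}\exp\!\left(\frac{(1-\tau^2)z\overline w\,t-\tfrac{\tau(1-\tau^2)}{2}(z^2+\overline w^2)t^2}{1-\tau^2t^2}\right),
\end{align*}
reducing to $A(t;z,w)=\pi^{-1}\sqrt{\omega(z)\omega(w)}\,e^{z\overline w t}$ when $\tau=0$. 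This is analytic in $t$ for $|t|<1/\tau$, so on the chosen contour the only singularity encountered on the way out to $t=1$ is the simple pole of $1/(1-t)$.

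\textbf{Step 2 (scaling; the edge as a saddle--pole collision).} Substitute $z_k\mapsto\sqrt n\,z_k+u_k$, $w_k\mapsto\sqrt n\,z_k+v_k$ with $z\in\partial\mathcal E_\tau^d$ fixed, and factor out the Gaussian prefactor $\pi^{-d}\exp\bigl(\tfrac12(|u|^2+|v|^2)-u\cdot v\bigr)$ appearing on the left-hand side of the statement. Collecting $t^{-n}$ together with the $n$-homogeneous parts of $\sqrt{\omega}$ and of the exponent of $A$, the integrand becomes $e^{n\Phi(t;z)}\,e^{\sqrt n\,\Psi(t;z,u,v)}\,R(t;z,u,v)$, where $\Phi(t;z)=-\log t+\sum_k\phi(t;z_k)$, $\Psi$ is linear in $(u,v)$, and $R$ and its $t$-derivatives are bounded on the contour. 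A direct computation of $\partial_t\Phi(t;z)=0$ produces a leading-order saddle $t_*(z)$, and one checks that the locus $\{z:t_*(z)=1\}$ is exactly $\partial\mathcal E_\tau^d$; inside $\mathcal E_\tau^d$ one has $t_*(z)<1$ (recovering the factorised Ginibre bulk limit of \cite{ADM}), outside it the saddle lies past $t=1$ and $\mathcal K_n$ is exponentially small. Thus on the edge the saddle merges with the pole at $t=1$.

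\textbf{Step 3 (local analysis and emergence of $\operatorname{erfc}$).} Deform to the steepest-descent contour through $t=1$ and set $t=1+\xi/\sqrt n$. Then $1/(1-t)=-\sqrt n/\xi$, the factor $e^{n\Phi(t;z)}$ becomes $e^{-\tfrac12\Phi''(1;z)\xi^2+O(\xi^3/\sqrt n)}$, and $e^{\sqrt n\,\Psi}$ contributes $e^{\beta(z,u,v)\,\xi}$, where $\alpha:=\Phi''(1;z)$ and $\beta$ are both expressible through the outward unit normal $\textbf{n}$ on $\partial\mathcal E_\tau^d$. The integral collapses to a Gaussian-times-simple-pole integral
\begin{align*}
\frac{1}{2\pi i}\int \xi^{-1}\exp\!\left(-\tfrac12\alpha\,\xi^2+\beta\,\xi\right)d\xi=\tfrac12\operatorname{erfc}\!\bigl(-\beta/\sqrt{2\alpha}\bigr),
\end{align*}
and matching $\alpha$ and $\beta$ gives precisely $\tfrac12\operatorname{erfc}\bigl((u\cdot\textbf{n}+\textbf{n}\cdot v)/\sqrt2\bigr)$; the residual purely oscillatory factor, which is independent of the summation, is absorbed into the unimodular functions $c_n(z,u)\overline{c_n(z,v)}$ (immaterial for the $k$-point functions). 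The weight $\exp\bigl(-(u\cdot\textbf{n}+\textbf{n}\cdot v)^2/2\bigr)$ in front of the error term is the natural scale of $\operatorname{erfc}'$ and falls out of the same expansion, the $O(n^{-1/2})$ coming from the cubic term in $\Phi$ and the quadratic terms in $\Psi$ and $R$. The part of the contour outside a shrinking neighbourhood of $t=1$ is $O(e^{-cn})$ uniformly in $z\in\partial\mathcal E_\tau^d$ by compactness of $\partial\mathcal E_\tau^d$ and a uniform lower bound on $\Re\bigl(\Phi(1;z)-\Phi(t;z)\bigr)$, and in the regime $|u|,|v|=O(n^\nu)$ the linear coefficient $\beta$ is of size $O(n^\nu)$ and is dominated by the Gaussian.

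\textbf{Main obstacle.} I expect the genuinely hard part to be the combination of Steps 2 and 3: verifying that the saddle locus is exactly $\partial\mathcal E_\tau^d$ for all $\tau\in[0,1)$, and --- more delicately --- proving that $\beta$ depends on $(u,v)$ only through the single scalar $u\cdot\textbf{n}+\textbf{n}\cdot v$. For $d>1$ this forces one to understand how the normal direction of the ellipsoid interacts with the $d$ separate one-dimensional saddles, which is where the geometry of $\mathcal E_\tau^d$ genuinely enters, and it must be done together with making the steepest-descent remainder estimates uniform over the non-compact parameter set $\{|u|,|v|=O(n^\nu)\}$.
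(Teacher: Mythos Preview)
Your outline is correct and is essentially the approach the paper takes: a single contour integral coming from the Mehler generating function, a steepest-descent analysis in which the edge corresponds to the saddle hitting the simple pole, and a local Gaussian-with-pole integral producing $\operatorname{erfc}$; the oscillatory prefactor is then packaged into the unimodular $c_n$'s (this is Lemma~\ref{lem:Ftauuvz} in the paper, with $c_n$ written down explicitly). Your identification of the ``main obstacle'' is exactly right: the paper spends most of Section~\ref{sec:relevant} (Lemma~\ref{lem:asympPhiuvz}) on precisely the computation that the linear term in $(u,v)$ collapses to the single scalar $u\cdot\textbf{n}+\textbf{n}\cdot v$, uniformly in $z\in\partial\mathcal E_\tau^d$.

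There is one organizational difference worth flagging. You keep the $d$-fold product $\prod_k A(t;z_k,w_k)$ and plan to verify the saddle locus and the normal-vector dependence directly in $d$ variables. The paper instead, following \cite{ADM}, first observes that the Mehler exponent depends on $(z,w)$ only through the two scalars $\sum_k(z_k+\overline w_k)^2$ and $\sum_k(z_k-\overline w_k)^2$, and introduces $z_\pm$ (see \eqref{eq:defz+-}) so that the integral is literally a one-dimensional kernel integral in $z_\pm$ with an extra $(1-s^2)^{-d/2}$ factor. This means all the saddle-point machinery (location of saddles, deformations, the collision with the pole at $s=\tau$) is inherited verbatim from the $d=1$ analysis of \cite{ADM}; the $d$-dependence only reappears mildly as a $(d-1)$-term in the subleading correction. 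The price is that $z_\pm$ involve square roots with branch cuts, and one has to check separately (as the paper does around \eqref{eq:z+-asympn}--\eqref{eq:z+-asympnImz=0}) that the relevant estimates are uniform through $\Re z=0$ and $\Im z=0$. Your route avoids those branch-cut issues but would require redoing the saddle identification for the rational $\Phi(t;z)$ when $\tau>0$; both are fine, and in the $\tau=0$ case they are literally the same calculation (compare your $\Phi(t;z)=|z|^2t-\log t$ with Proposition~\ref{prop:defFd0}).
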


For $d=1$, we can be more explicit about the error, and indeed, we manage to rederive a slightly stronger version of \eqref{eq:FaddeevaRiserByunEbke} (see Proposition \ref{prop:d=1FaddeevaSlightlyStronger}). We consider our derivation of this result of independent interest. 
For general $d\geq 1$, the scaling limit is expressed in terms of the geometry of the droplet, i.e., in terms of the unit normal vector. We can directly extract a weaker result where this is not the case, where the scaling limit becomes the Faddeeva plasma kernel. 

\begin{corollary} \label{cor:Fadd}
Under the conditions of Theorem \ref{thm:mainThmFaddeevaKernel}, there exist continuous unimodular functions $c_n : \partial \mathscr E_\tau^d \times \mathbb C^d\to \mathbb T$ such that 
\begin{align*}
\lim_{n\to\infty} c_n(z, u \, \textbf{n}) \overline{c_n(z, v \, \textbf{n})} \mathscr K_n(\sqrt n \, z+u \, \textbf{n}, \sqrt n \, z+v \, \textbf{n}) = \frac{1}{2 \pi^d} \exp\left(\frac{|u|^2+|v|^2}{2}-u \overline v\right) 
\operatorname{erfc}\left(\frac{u+\overline{v}}{\sqrt 2}\right) ,
\end{align*}
uniformly for $z\in \partial\mathscr E_\tau^d$, and $u, v\in\mathbb C$ such that $u, v = \mathcal O(n^\nu)$.
\end{corollary}
Indeed, this result shows that the universality of  \eqref{eq:edgeScalingErfcGinibreRewrite} is not limited to the eigenvalues of random normal matrices and other non-Hermitian random matrices, but its universality class includes higher dimensional models as well. To the best of our knowledge, a scaling limit of this form (excluding the diagonal case) has not yet appeared in the literature for higher dimensional models.  While we cannot claim that the elliptic class is general enough to imply a universality theorem for generic weights $\mathscr W$, our results do provide a target for such a result. One possible direction to pursue is as follows. It is known that a version of the Mehler kernel exists for generalized Laguerre polynomials, expressed via the so-called Hardy-Hille formula, and an adaptation of the approach of the current paper and \cite{ADM} will likely work here. Our approach likely also works for distinct eccentricity  parameters, i.e., we may consider $\mathscr W(z) = \omega_1(z_1) \cdots \omega_d(z_d)$ with $\omega_k(z) = e^{-|z|^2+\tau_k \operatorname{Re}(z^2)}$ and $\tau_1,\ldots, \tau_d\in (0,1)$. This case can be treated with a single integral representation as well, and likely leads to a similar situation where a saddle point and pole coalesce in the limit $n\to\infty$, which, as we shall see, produces the complementary error function. In the end, a \textit{universality theorem} that we wish to prove in a future work, is that some version of Corollary \ref{cor:Fadd} holds for a general class of multivariate weights $\mathscr W(z) = e^{-n \mathscr V(z)}$ on $\mathbb C^d$, much larger than our restricted class of factorized weights with external field $\mathscr V(z) = |z|^2 - \tau \operatorname{Re}(z_1^2+\ldots+z_d^2)$ (in that case, we should change $\sqrt n z+u$ and $\sqrt n z+v$ to $z+\frac{u}{\sqrt n}$ and $z+\frac{v}{\sqrt n}$ due to the different scaling).\\ 



The paper is built up as follows. In Section \ref{sec:set-up} we recap some results from \cite{ADM}, that where used for a steepest descent analysis to derive the asymptotic behavior of \eqref{eq:defKerneleGinUEd}. The reason that the edge limit was not treated in \cite{ADM}, is that it corresponds to a more complicated situation where a saddle point and pole coalesce in the limit $n\to\infty$. In Section \ref{sec:steepest} we clarify how this situation can be treated. Some preparatory relevant identities and estimates are derived in Section \ref{sec:relevant}, and finally, in Section \ref{sec:proofs} we prove the main theorems.\\ 





\section*{Acknowledgments}
The author is funded by the Deutsche Forschungsgemeinschaft (DFG, German Research Foundation) – SFB 1283/2 2021
– 317210226 "Taming uncertainty and profiting from randomness and low regularity in analysis, stochastics and their applications", and the Royal Society grant RF\textbackslash ERE\textbackslash 210237. The author thanks Gernot Akemann, Sung-Soo Byun, Maurice Duits, Markus Ebke, Ivan Parra and Roman Riser for useful discussions.

%

\section{Approach and set up} \label{sec:set-up}

\subsection{Single integral representation}

Our primary tool to prove the main results will be a steepest descent analysis. In \cite{ADM} it was shown, using a formula for the Mehler kernel, that \eqref{eq:defKerneleGinUEd} (with a different scaling) admits a single integral representation of the form
\begin{align} \label{eq:defKnInIn}
\mathscr K_n(\sqrt n \, z, \sqrt n \, w) = \left(\frac{\sqrt{1-\tau^2}}{\pi}\right)^d 
\sqrt{ \omega\left(\sqrt n \, z_+\right)  \omega\left(\sqrt n \, z_-\right)} \, I_{n,\tau}^d(z_\pm),
\end{align}
valid for any $z, w\in\mathbb C^d$, where
\begin{align} \label{eq:defz+-}
z_\pm = \frac{\sqrt{\sinh 2\xi_\tau}}{2} \left(\sqrt{\sum_{j=1}^d (z_j+\overline w_j)^2} \pm \sqrt{\sum_{j=1}^d (z_j-\overline w_j)^2}\right),
\end{align}
and, with $\gamma_0$ a small positively oriented loop around $s=0$, the integral $I_{n,\tau}^d(z_\pm)$ is given by
\begin{align} \label{eq:defIntaud}
I_{n,\tau}^d(z_\pm) = -\frac{1}{2\pi i} \oint_{\gamma_0} \frac{e^{n F(s)}}{s-\tau} \frac{ds}{(1-s^2)^\frac{d}{2}}
\end{align}
and $F(s) = F_\tau(z_\pm;s)$  is given by
\begin{align} \label{eq:defF}
F(s) = \frac{s}{1+s} \frac{(z_++z_-)^2}{2}-\frac{s}{1-s} \frac{(z_+-z_-)^2}{2} - \log s + \log \tau.
\end{align}
We do not have to be explicit about the choice of branch of the square roots in \eqref{eq:defz+-}, because the choice is irrelevant for \eqref{eq:defF}, and thus for $\mathscr K_n(\sqrt n \, z, \sqrt n \, w)$. The same holds for the branch of the logarithm in \eqref{eq:defF}. Nevertheless, let us use the convention that logarithms and power functions will be defined as $\log z = \log|z|+i\arg z$ and $z^\alpha = |z|^\alpha e^{i\alpha\arg z}$, where $\arg z\in (-\pi, \pi]$. 
The saddle points of $F(s)$ have a surprisingly simple form in elliptic coordinates. We write
\begin{align*}
z_\pm = \sqrt 2 \cosh(\xi_\pm + i \eta_\pm),
\end{align*}
where $\xi_\pm \geq 0$ and $\eta_\pm\in (-\pi,\pi]$ when $\xi_\pm>0$, while $\eta_\pm\in [0,\pi]$ when $\xi_\pm=0$. Note that any constant value of $\xi_\pm$ corresponds to an ellipse with vertex $\sqrt 2 \cosh \xi_\pm$ and co-vertex $\sqrt 2 \sinh \xi_\pm$.
The edge, i.e., the ellipse forming the boundary of the droplet, is described by the particular choice $\xi_\pm = \xi_\tau$, where $\xi_\tau = \frac{1}{2} \log \frac{1}{\tau}$. We define
\begin{align} \label{eq:defab}
a &= e^{\xi_++\xi_-+i(\eta_++\eta_-)}, \quad \text{ and } \quad b = e^{\xi_+-\xi_-+i(\eta_+-\eta_-)}.
\end{align}
With these notations, the following result was derived in \cite{ADM}.

\begin{proposition} [$0<\tau<1$] \label{prop:saddlePointsDef}
\text{ }\\
If $z_+, z_- \in \mathbb C\setminus \{-\sqrt 2, \sqrt 2\}$, then the saddle points of $s\mapsto F_\tau(z_\pm;s)$ are simple, and we have the following:
\begin{itemize}
\item[(i)] When $z_+ \neq  \pm z_-$, there are exactly four saddle points given by $a, a^{-1}, b$ and $b^{-1}$.
\item[(ii)] When $z_+=\pm z_-$ and $z_+\neq 0$, there are exactly two  saddle points, which are given by $a$ and $a^{-1}$.
\end{itemize}
If $z_+\in \{-\sqrt{2},\sqrt{2}\}$ or $z_-\in \{-\sqrt{2},\sqrt{2}\}$, then all saddle points have order two and we have the following:
\begin{itemize}
    \item[(iii)] When $z_\pm \in \{-\sqrt 2, \sqrt 2\}$ and $z_\mp\not\in \{-\sqrt 2, \sqrt 2\}$, then we have two saddle points $a=b^{\mp 1}$ and $a^{-1}=b^{\pm 1}$.
    \item[(iv)] When $z_+=\pm z_-\in\{-\sqrt 2, \sqrt 2\}$, then we have one saddle point $a^{-1}=a=b=b^{-1}=\pm 1$.
\end{itemize}
Finally, when $z_+=z_-=0$ there are no saddle points.
\end{proposition}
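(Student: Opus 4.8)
The plan is to turn the saddle equation $F'(s)=0$ into a polynomial equation with a self-reciprocal structure. Differentiating \eqref{eq:defF},
\[
F'(s)=\frac{(z_++z_-)^2}{2(1+s)^2}-\frac{(z_+-z_-)^2}{2(1-s)^2}-\frac1s ,
\]
and putting the right-hand side over the common denominator $s(1-s^2)^2$ gives $F'(s)=-Q(s)\big/\bigl(s(1-s^2)^2\bigr)$, where
\[
Q(s)=s^4-2z_+z_-\,s^3+2(z_+^2+z_-^2-1)s^2-2z_+z_-\,s+1
\]
is a palindromic quartic. Since $Q(0)=1$, $Q(1)=2(z_+-z_-)^2$ and $Q(-1)=2(z_++z_-)^2$, the point $s=0$ is never a root, while $s=1$ (resp. $s=-1$) is a root precisely when $z_+=z_-$ (resp. $z_+=-z_-$); away from $\{0,\pm1\}$ the zeros of $Q$ are exactly the saddle points of $F$, with matching multiplicities. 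The palindromic structure means that the substitution $u=s+s^{-1}$ collapses $Q(s)/s^2$ to the quadratic $u^2-2z_+z_-u+2(z_+^2+z_-^2-2)$, with roots $u_\pm=z_+z_-\pm\sqrt{(z_+^2-2)(z_-^2-2)}$.

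Next I would feed in the elliptic parametrization $z_\pm=\sqrt2\cosh(\xi_\pm+i\eta_\pm)$, so that $z_\pm^2-2=2\sinh^2(\xi_\pm+i\eta_\pm)$ and $z_+z_-=2\cosh(\xi_++i\eta_+)\cosh(\xi_-+i\eta_-)$; the addition formula for $\cosh$ then yields $u_+=2\cosh\bigl((\xi_++i\eta_+)+(\xi_-+i\eta_-)\bigr)=a+a^{-1}$ and $u_-=2\cosh\bigl((\xi_++i\eta_+)-(\xi_-+i\eta_-)\bigr)=b+b^{-1}$, with $a,b$ as in \eqref{eq:defab}. Solving $s+s^{-1}=a+a^{-1}$ gives $s\in\{a,a^{-1}\}$, and likewise for $b$, so the four zeros of $Q$ are $a,a^{-1},b,b^{-1}$. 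Under the standing normalization $\xi_\pm\ge0$ with $\eta_\pm$ in the prescribed range one has $a=b$ only if $z_-\in\{-\sqrt2,\sqrt2\}$ and $a=b^{-1}$ only if $z_+\in\{-\sqrt2,\sqrt2\}$; hence for $z_+,z_-\notin\{-\sqrt2,\sqrt2\}$ the only possible collision among the four values is $a=a^{-1}$ or $b=b^{-1}$, which by the evaluation of $Q(\pm1)$ happens iff $z_+=\pm z_-$. This settles (i): when additionally $z_+\ne\pm z_-$ the four points are distinct, lie off $\{0,\pm1\}$, and are simple roots of $Q$, hence simple saddle points; and (ii): when $z_+=\pm z_-$ with $z_+\ne0$, the double root of $Q$ at $\pm1$ is spurious (a direct check shows $F'(\pm1)\ne0$), leaving the simple pair $a,a^{-1}$, while $b=b^{-1}=\pm1$.

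It remains to analyse the coalescing cases. If $z_\pm\in\{-\sqrt2,\sqrt2\}$ and $z_\mp\notin\{-\sqrt2,\sqrt2\}$, then $(z_+^2-2)(z_-^2-2)=0$ forces $u_+=u_-=z_+z_-$, so $Q(s)=(s^2-z_+z_-s+1)^2$ is a perfect square: two roots, each of multiplicity two (and, since $z_\mp\notin\{-\sqrt2,\sqrt2\}$, neither equals $\pm1$), so two saddle points of order two. From $\sinh(\xi_\pm+i\eta_\pm)=0$ one gets $\xi_\pm+i\eta_\pm\in\{0,i\pi\}$, whence $a=b^{\mp1}$ and $a^{-1}=b^{\pm1}$, which is (iii). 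If $z_+=\pm z_-\in\{-\sqrt2,\sqrt2\}$, then one of $(z_++z_-)^2$, $(z_+-z_-)^2$ vanishes and $F'$ reduces directly to $-(s\mp1)^2\big/\bigl(s(1\pm s)^2\bigr)$, exhibiting the unique order-two saddle point $s=\pm1=a=a^{-1}=b=b^{-1}$, which is (iv). Finally, $z_+=z_-=0$ makes both quadratic forms in \eqref{eq:defF} vanish, so $F'(s)=-1/s$ has no zeros.

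I expect the main difficulty to lie in the bookkeeping of the degenerate cases rather than in the computation: reliably separating genuine saddle points of $F$ from the spurious roots of $Q$ at $s=\pm1$ that clearing denominators introduces; pinning the order of the coalescing saddles at exactly two (via $Q$ being a perfect square, resp. the explicit reduction of $F'$) rather than higher; and invoking the normalization conventions on $(\xi_\pm,\eta_\pm)$ carefully enough that the map $z_\pm\mapsto(a,b)$ is single-valued on the relevant sets, so that the identifications such as ``$a=b^{\mp1}$'' or ``$a=a^{-1}=b=b^{-1}=\pm1$'' hold on the nose and not merely up to the sign ambiguities inherent in the $\cosh$ parametrization.
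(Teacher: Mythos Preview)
The paper does not prove this proposition; it merely quotes it from \cite{ADM} (``the following result was derived in \cite{ADM}''). So there is no in-paper proof to compare against. Your argument is correct and is almost certainly the one in \cite{ADM}: the only hint the present paper gives is that, in the proof of Lemma~\ref{lem:abAsympDelta}, it cites from \cite{ADM} the closed forms
\[
a^{\pm1}=\tfrac12\bigl(z_+\pm\sqrt{z_+^2-2}\bigr)\bigl(z_-\pm\sqrt{z_-^2-2}\bigr),\qquad
b^{\pm1}=\tfrac12\bigl(z_+\pm\sqrt{z_+^2-2}\bigr)\bigl(z_-\mp\sqrt{z_-^2-2}\bigr),
\]
which are precisely what your palindromic-quartic reduction and the substitution $u=s+s^{-1}$ produce once one feeds in $z_\pm=\sqrt2\cosh(\xi_\pm+i\eta_\pm)$.

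Your handling of the degenerate cases is sound: the key checks are (a) that the double root of $Q$ at $s=\pm1$ arising when $z_+=\pm z_-$ cancels against the double zero of $(1-s^2)^2$ in the denominator, leaving $F'(\pm1)=(z_+^2-2)/2\neq0$ in case~(ii); (b) that when exactly one of $z_\pm$ lies in $\{\pm\sqrt2\}$ the discriminant $(z_+^2-2)(z_-^2-2)$ vanishes so $Q$ is a perfect square with its double roots off $\{\pm1\}$, giving genuine order-two saddles; and (c) that when both conditions hold $F'$ collapses to $-(s\mp1)^2/\bigl(s(1\pm s)^2\bigr)$. You have all three. The only thing you might tighten is the sentence ``a direct check shows $F'(\pm1)\ne0$'': since $F$ generically has a pole at $s=\pm1$, it is worth stating explicitly that in case~(ii) the relevant pole has disappeared (because $(z_+\mp z_-)^2=0$), so $F'$ extends analytically there and the limit is $(z_+^2-2)/2$.
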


Furthermore, in order to understand what deformations of $\gamma_0$ were allowed, the following theorem was proved.
Notice in particular that one can deform $\gamma_0$ to the circle $|s|=|a|^{-1}$ when $\xi_+, \xi_->0$. 

\begin{theorem}[$0<\tau<1$] \label{lem:as=1}
With the notations as above, we have the inequality
\begin{align} \label{eq:as=1}
\operatorname{Re} F(s) &\leq \operatorname{Re} F(a^{-1}), & |s| = |a|^{-1}.
\end{align} 
\begin{itemize}
\item[(i)] When $\xi_+>0$ and $\xi_->0$, we have equality if and only if $s = a^{-1}$.
\item[(ii)] When $\xi_+>0$ and $\xi_-=0$, we have equality if and only if $s = a^{-1}$ or $s=b^{-1}$. 
\item[(iii)] When $\xi_+=0$ and $\xi_->0$, we have equality if and only if $s = a^{-1}$ or $s=b$.
\item[(iv)] When $\xi_+=0$ and $\xi_-=0$, we have equality for all $s$. 
\end{itemize}
\end{theorem}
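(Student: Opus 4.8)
\emph{Setup.} Since the dimension $d$ enters the integrand \eqref{eq:defIntaud} only through the factor $(1-s^2)^{-d/2}$ and not through $F$, this is a purely scalar estimate. As $|a^{-1}| = e^{-(\xi_++\xi_-)} = |a|^{-1}$, the point $a^{-1}$ lies on the circle $\{|s|=|a|^{-1}\}$, and there $\Re(-\log s + \log\tau) = \xi_++\xi_-+\log\tau$ is constant; so \eqref{eq:as=1} is equivalent to the statement that $s\mapsto \Re\bigl(\tfrac{s}{1+s}\tfrac{(z_++z_-)^2}{2}-\tfrac{s}{1-s}\tfrac{(z_+-z_-)^2}{2}\bigr)$ attains its maximum on $\{|s|=|a|^{-1}\}$ at $s=a^{-1}$, with the asserted equality set. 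From $z_\pm = \sqrt2\cosh(\xi_\pm+i\eta_\pm)$ and the half-angle identities one records the factorisations $\tfrac{(z_++z_-)^2}{2} = \tfrac14(1+a)(1+a^{-1})(1+b)(1+b^{-1})$ and $\tfrac{(z_+-z_-)^2}{2} = \tfrac14(1-a)(1-a^{-1})(1-b)(1-b^{-1})$, from which (matching the principal parts at $s=0,\pm1$ and the leading coefficient) $F'(s) = -\tfrac{(s-a)(s-a^{-1})(s-b)(s-b^{-1})}{s(1-s^2)^2}$, consistently with Proposition \ref{prop:saddlePointsDef}.

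\emph{Reduction to a one-variable inequality.} Write $s=|a|^{-1}e^{i\theta}$ and $\theta_0=\arg a^{-1}$. Using $\overline s = |a|^{-2}s^{-1}$ on the circle, $\Re F(s)$ equals, up to an additive constant, an explicit ratio $g(\theta)$ of trigonometric polynomials whose denominator is $1+|a|^{-4}-2|a|^{-2}\cos 2\theta$; in cases (i)--(iii) we have $\xi_++\xi_->0$, so this denominator is $\ge (1-|a|^{-2})^2>0$ and the claim becomes $\mathcal N(\theta) := g(\theta_0)\cdot(\text{denominator}) - (\text{numerator}) \ge 0$, where $\mathcal N$ is a trigonometric polynomial of degree $2$. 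Since $a^{-1}$ is a saddle of $F$ one has $g'(\theta_0)=0$, hence $\mathcal N$ has a \emph{double} zero at $\theta_0$; a degree-$2$ trigonometric polynomial with a double zero at $\theta_0$ factors as $\mathcal N(\theta) = 2\sin^2\bigl(\tfrac{\theta-\theta_0}{2}\bigr)\,\ell(\theta)$ with $\ell$ a trigonometric polynomial of degree $1$, so the whole theorem reduces to the single inequality $\ell\ge 0$, i.e.\ $\min_\theta\ell = \ell_0-(\ell_1^2+\ell_2^2)^{1/2}\ge 0$ when $\ell(\theta)=\ell_0+\ell_1\cos(\theta-\theta_0)+\ell_2\sin(\theta-\theta_0)$.

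\emph{Equality cases.} Writing $\ell_0,\ell_1,\ell_2$ out via $a,b$ (equivalently $\xi_\pm,\eta_\pm$) one verifies $\min_\theta\ell\ge 0$, with equality precisely when $\xi_+=0$ or $\xi_-=0$ — exactly the cases in which $b$, respectively $b^{-1}$, has modulus $|a|^{-1}$, i.e.\ lands on the circle as an extra saddle. When $\min_\theta\ell>0$ we get strict inequality away from $\theta_0$ (case (i)); when $\min_\theta\ell=0$, the unique minimiser of $\ell$, a short computation shows, sits at $\arg b$ (resp.\ $\arg b^{-1}$), giving the second equality point in case (iii) (resp.\ (ii)). Case (iv), $\xi_+=\xi_-=0$, is separate: then $|a|^{-1}=1$, $z_\pm=\sqrt2\cos\eta_\pm\in\mathbb R$, and $\Re\tfrac{s}{1+s}\equiv\tfrac12$, $\Re\tfrac{s}{1-s}\equiv-\tfrac12$ on $|s|=1$ (the poles of $F$ at $\pm1$ contribute only to the imaginary part along the circle), so $\Re F\equiv\tfrac14(z_++z_-)^2+\tfrac14(z_+-z_-)^2+\log\tau$ is constant there.

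\emph{Main obstacle.} The work is concentrated in the third step: computing $\ell_0,\ell_1,\ell_2$ — each of $A=\tfrac{(z_++z_-)^2}{2}$ and $B=\tfrac{(z_+-z_-)^2}{2}$ being a product of one $a$-factor and one $b$-factor, the algebra is bulky — and then exhibiting $\ell_0-(\ell_1^2+\ell_2^2)^{1/2}\ge 0$ in a form where both the inequality and the identification of its equality locus with $b$, $b^{-1}$ are visible; I would expect $\min_\theta\ell$ to come out as a manifestly non-negative expression that vanishes iff $\xi_+=0$ or $\xi_-=0$. A structurally lighter route that avoids $\ell$ is to combine the local computation $F''(a^{-1})\,a^{-2} = \tfrac{(a+a^{-1})-(b+b^{-1})}{a-a^{-1}} = \tfrac{2}{\coth(\xi_++i\eta_+)+\coth(\xi_-+i\eta_-)}$, whose real part is positive for $\xi_+,\xi_->0$ since $\Re\coth(\xi+i\eta)=\tfrac{\sinh 2\xi}{\cosh 2\xi-\cos 2\eta}>0$ for $\xi>0$, so that $a^{-1}$ is a strict local maximum of $\Re F$ along the circle, with an argument-principle count of the critical points of $\Re F$ on slightly perturbed circles, using that the winding number of $\theta\mapsto sF'(s)$ about $0$ jumps by $1$ as $|s|$ crosses $|a|^{-1}$ (only the zero $a^{-1}$ of $sF'$ then enters the disc, the poles $\pm1$ staying outside in cases (i)--(iii)). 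In either approach the genuinely delicate point is to handle uniformly the degenerations in which a second saddle — or, for $\xi_+=\xi_-=0$, the poles $\pm1$ of $F$ — reaches the circle.
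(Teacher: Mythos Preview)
The paper does not actually prove this theorem: it is quoted verbatim from \cite{ADM} (see the sentence ``Furthermore, in order to understand what deformations of $\gamma_0$ were allowed, the following theorem was proved'' preceding the statement, and note that no proof follows it). So there is nothing to compare your argument against in the present paper.

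On its own merits, your proposal is a reasonable \emph{outline} but not a proof. The reductions in your first two steps are correct: on $|s|=|a|^{-1}$ the logarithmic term contributes a constant, the remaining piece is a ratio of trigonometric polynomials of degree $2$ with denominator $1+|a|^{-4}-2|a|^{-2}\cos 2\theta>0$ (for $\xi_++\xi_->0$), and the double zero of $\mathcal N$ at $\theta_0$ follows from $F'(a^{-1})=0$. The factorisation $\mathcal N(\theta)=\bigl(1-\cos(\theta-\theta_0)\bigr)\ell(\theta)$ with $\ell$ of degree $1$ is also correct, since $1-\cos(\theta-\theta_0)$ corresponds to the factor $(z-e^{i\theta_0})^2/z$ and $\mathcal N$ is real-valued.

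The gap is exactly where you say it is: you never compute $\ell_0,\ell_1,\ell_2$ and never establish $\ell_0\ge(\ell_1^2+\ell_2^2)^{1/2}$, nor do you identify the minimiser of $\ell$ with $\arg b$ or $\arg b^{-1}$. Saying ``one verifies'' and then, in the next paragraph, conceding that ``the work is concentrated in the third step'' and ``the algebra is bulky'' means the proof is not done. Your alternative route (local strict maximum from $\Re\bigl(a^{-2}F''(a^{-1})\bigr)>0$ plus an argument-principle count) is also only a sketch: a local maximum plus a winding-number jump does not by itself rule out additional critical points of $\Re F$ on the circle of equal or greater value, and you would still need to control what happens globally, including the degenerations you flag. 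In short, the skeleton is sound, but the load-bearing computation is missing.
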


So far, these results describe $\mathscr K_n$ for $0<\tau<1$ (as defined in \eqref{eq:defKerneleGinUEd}). For $\tau=0$, we can significantly simplify $\mathscr K_n$ (as defined in \eqref{eq:defKernelGinUEd}). 

\begin{proposition}[$\tau=0$] \label{prop:KnisKnford=1}
Let $z, w\in \mathbb C^d$. Then we may write
\begin{align} \label{eq:defKnInIn0}
\mathscr K_n(\sqrt n \, z, \sqrt n \, w)
= \frac{1}{\pi^d} e^{-n \frac{|z|^2+|w|^2}{2}} \sum_{j=0}^{n-1} \frac{(n \, z\cdot w)^j}{j!}. 
\end{align}
\end{proposition}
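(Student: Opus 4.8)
The plan is to work directly from the definition \eqref{eq:defKernelGinUEd} of the $\tau=0$ kernel rather than passing to the limit $\tau\to 0$ in the integral representation \eqref{eq:defKnInIn}. First I would substitute $z\mapsto\sqrt n\,z$ and $w\mapsto\sqrt n\,w$ into \eqref{eq:defKernelGinUEd}. This produces the Gaussian prefactor $\tfrac{1}{\pi^d}e^{-n(|z|^2+|w|^2)/2}$ together with the multi-index sum $\sum_{|j|<n}\prod_{k=1}^d\frac{(n z_k\overline w_k)^{j_k}}{j_k!}$. Everything in sight is a finite sum, so there is no convergence or interchange issue to worry about, and it remains only to match this multi-index sum with the single sum $\sum_{j=0}^{n-1}\frac{(n\,z\cdot w)^j}{j!}$ appearing in \eqref{eq:defKnInIn0}.

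The key and essentially only step is the multinomial theorem. I would regroup the multi-indices $(j_1,\dots,j_d)$ according to the value of $|j|=j_1+\dots+j_d$, writing
\begin{align*}
\sum_{|j|<n}\prod_{k=1}^d\frac{(n z_k\overline w_k)^{j_k}}{j_k!}
=\sum_{m=0}^{n-1}\ \sum_{|j|=m}\frac{n^{m}}{j_1!\cdots j_d!}\prod_{k=1}^d(z_k\overline w_k)^{j_k},
\end{align*}
where I used $n^{|j|}=n^m$ on the shell $|j|=m$. By the multinomial theorem the inner sum equals $\frac{1}{m!}\bigl(n\sum_{k=1}^d z_k\overline w_k\bigr)^m=\frac{(n\,z\cdot w)^m}{m!}$, using the convention $z\cdot w=z_1\overline w_1+\dots+z_d\overline w_d$ fixed in the introduction. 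Summing over $m$ from $0$ to $n-1$ then reproduces the right-hand side of \eqref{eq:defKnInIn0}, and multiplying back the prefactor finishes the proof.

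There is no real obstacle; the statement is a bookkeeping identity. The only points requiring a little care are that the constraint $|j|<n$ on the multi-index sum corresponds precisely to the shells $m=|j|\in\{0,\dots,n-1\}$, which are exactly the terms of the single sum, and that the power of $n$ on each shell is $n^{|j|}$ rather than $n^{j_k}$ factor by factor — this is why, after regrouping, the factor $n$ attaches to the whole dot product $z\cdot w$ and not to each individual $z_k\overline w_k$. As a sanity check one verifies the identity directly for $d=1$, where it is a tautology, and for $n=1$, where both sides reduce to $\tfrac{1}{\pi^d}$.
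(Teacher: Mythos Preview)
Your argument is correct and is the most direct route: the identity $\sum_{|j|=m}\prod_k\frac{(z_k\overline w_k)^{j_k}}{j_k!}=\frac{(z\cdot w)^m}{m!}$ is exactly the multinomial theorem, and everything else is substitution.

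The paper arrives at the same combinatorial identity but packages it differently. Instead of invoking the multinomial theorem, it writes $e^{(z\cdot w)s}=\prod_k e^{(z_k\overline w_k)s}$, expands both sides as power series in $s$, and then extracts the coefficients with $m<n$ via a contour integral $\frac{1}{2\pi i}\oint_{\gamma_0} e^{(z\cdot w)s}\bigl(1+s^{-1}+\dots+s^{-(n-1)}\bigr)\frac{ds}{s}$, which by the residue theorem equals $\sum_{j=0}^{n-1}\frac{(z\cdot w)^j}{j!}$. This is of course just a generating-function restatement of what you did. The payoff of the paper's detour is that the same contour integral, after rewriting the geometric factor as $\frac{1-s^{-n}}{1-s^{-1}}$, immediately yields the single-integral representation $I_{n,0}^d$ of Proposition~\ref{prop:defFd0} that drives the steepest descent analysis. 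Your proof is cleaner for the proposition as stated; the paper's is chosen with the next step in mind.
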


\begin{proof}
We start by noticing that
\begin{align*}
e^{(z\cdot w) s} 
= \prod_{k=1}^d e^{(z_k \overline w_k) s} 
= \prod_{k=1}^d \sum_{j_k=0}^\infty \frac{(z_k \overline w_k s)^{j_k}}{j_k!}
= \sum_{m=0}^\infty \sum_{|j|=m} s^m \prod_{k=1}^d \frac{(z_k \overline w_k)^{j_k}}{j_k!}.
\end{align*}
Then by applying the residue theorem in two directions, we have
\begin{align} \label{eq:residueTrickIntegral}
\sum_{|j|<n} \prod_{k=1}^d \frac{(z_k \overline w_k)^{j_k}}{j_k!}
= \frac{1}{2\pi i} \oint_{\gamma_0} e^{(z\cdot w) s} \left(1+\frac{1}{s}+\ldots+\frac{1}{s^{n-1}}\right) \frac{ds}{s}
= \sum_{j=0}^{n-1} \frac{(z\cdot w)^j}{j!}. 
\end{align}
Now substituting $(z, w) \mapsto \sqrt n (z, w)$, and multiplying with the remaining factors in \eqref{eq:defKernelGinUEd}, we arrive at \eqref{eq:defKnInIn0}. 
\end{proof}

Comparing with \eqref{eq:defGinibreKer}, we infer that $\mathscr K_n$ for $d\geq 1$ and $\tau=0$ is closely related to the Ginibre ensemble ($d=1$). In particular, some properties are immediately inherited from the Ginibre ensemble. For instance, the model has a bulk, which is given by the $2d$-dimensional unit ball in $\mathbb C^d$. 

\begin{corollary}[$\tau=0$] \label{cor:densityd0}
The average density of particles converges to a uniform law on the unit ball in $\mathbb C^d$.\\
More precisely, we have
\begin{align} \label{eq:densityGinibreScalingd0}
\lim_{n\to\infty} n^d \, \rho_n^{(1)}(\sqrt n \, z)
= \begin{cases} \displaystyle \frac{d!}{\pi^d}, & |z|<1,\\ \displaystyle \frac{d!}{2\pi^d}, & |z|=1,\\ \displaystyle 0, & |z|>1.\end{cases} 
\end{align}
The convergence is uniform on compact subsets such that $|z|\neq 1$. 
\end{corollary}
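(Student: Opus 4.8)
The plan is to reduce Corollary~\ref{cor:densityd0} to the one-dimensional Ginibre asymptotics by exploiting the explicit single-sum formula~\eqref{eq:defKnInIn0} on the diagonal $z=w$. Setting $z=w$ and using $z\cdot z = |z|^2$, Proposition~\ref{prop:KnisKnford=1} gives
\begin{align*}
n^d\,\rho_n^{(1)}(\sqrt n\, z) = n^{d-1}\,\mathcal K_n(\sqrt n\, z,\sqrt n\, z)
= \frac{n^{d-1}}{\pi^d}\, e^{-n|z|^2}\sum_{j=0}^{n-1}\frac{(n|z|^2)^j}{j!}.
\end{align*}
Thus the whole problem is controlled by the classical partial exponential sum $e_{n-1}(x) := \sum_{j=0}^{n-1} x^j/j!$ evaluated at $x=n|z|^2$, which is exactly the object governing~\eqref{eq:densityGinibreScaling} (indeed, for $d=1$ this is precisely Ginibre's computation, up to the factor $e^{-n|z|^2}$).

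The key steps, in order, are as follows. First I would record the elementary asymptotics of $R_n(x):=e^{-nx}\sum_{j=0}^{n-1}(nx)^j/j!$, namely the well-known fact (from the normal approximation to the Poisson distribution, or Tricomi's asymptotics for the incomplete gamma function) that
\begin{align*}
R_n(x) \longrightarrow
\begin{cases}
1, & 0\le x<1,\\[1mm]
\tfrac12, & x=1,\\[1mm]
0, & x>1,
\end{cases}
\end{align*}
as $n\to\infty$, with convergence uniform on compact subsets of $\{x\ge 0: x\neq 1\}$; the value $\tfrac12$ at $x=1$ follows from Ramanujan's classical result $e^{-n}\sum_{j=0}^{n-1} n^j/j! \to \tfrac12$ (equivalently, a central limit theorem for a sum of $n$ i.i.d.\ Poisson$(1)$ variables, whose median-mean coincidence at integer parameter gives the $\tfrac12$). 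Second, I substitute $x=|z|^2$ and multiply by $n^{d-1}/\pi^d$. For $|z|<1$ one has $R_n(|z|^2)\to 1$, but the prefactor $n^{d-1}$ diverges when $d>1$; here I need the sharper statement that $1-R_n(x) = O(e^{-cn})$ for $x$ in a compact subset of $[0,1)$ (exponentially small, by a standard large-deviation/Chernoff bound on the Poisson tail), which kills any polynomial prefactor --- wait, that is the wrong direction. Let me restate: for $|z|<1$, $R_n(|z|^2)\to 1$ and $n^{d-1}R_n\to\infty$ if $d>1$, so the claimed limit $d!/\pi^d$ cannot come from this crude estimate.

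Here I must correct the reduction: the diagonal kernel is not simply $n^{d-1}$ times a scalar sum, because in the $d$-dimensional case the density involves $n^d$ normalization against $\mathcal K_n(\sqrt n z,\sqrt n z)$ where the latter already carries the $n^{d-1}$ growth intrinsically --- more precisely, one should keep $\mathcal K_n(\sqrt n z,\sqrt n z) = \pi^{-d} e^{-n|z|^2}\sum_{j<n}(n|z|^2)^j/j!$ as is, note this sum, being a partial sum of $e^{n|z|^2}$, satisfies $e^{-n|z|^2}\sum_{j<n}(n|z|^2)^j/j! \asymp$ a constant only when $|z|=1$ and is $\sim 1$ when $|z|<1$; so $n^{d-1}\mathcal K_n$ indeed blows up for $d>1$ unless I have miscounted the power of $n$. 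The resolution is that $\rho_n^{(1)}(z)=\tfrac{1}{n}\mathcal K_n(z,z)$ is the $d=1$ normalization, whereas for the $d$-dimensional DPP with $n$-level truncation the correct normalization of the one-point function is $\rho_n^{(1)}(z)=\tfrac{1}{N_n}\mathcal K_n(z,z)$ with $N_n=\binom{n-1+d}{d}\sim n^d/d!$ the total number of particles; then
\begin{align*}
n^d\rho_n^{(1)}(\sqrt n z) = \frac{n^d}{N_n}\,\mathcal K_n(\sqrt n z,\sqrt n z)\cdot\frac{1}{n^{d-1}}\cdot n^{d-1} = \frac{n^d}{N_n}\cdot n^{d-1}\cdot\frac{\mathcal K_n}{n^{d-1}},
\end{align*}
and since $n^d/N_n\to d!$ the factor $n^{d-1}$ must be absorbed by $\mathcal K_n$ itself. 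So the genuinely correct step is: show $\mathcal K_n(\sqrt n z,\sqrt n z)=n^{d-1}\cdot\big(\pi^{-d}R_n(|z|^2)+o(1)\big)$? That is false as literally written. The honest plan is therefore to \emph{not} go through a scalar reduction but to argue as follows: $\mathcal K_n(\sqrt n z,\sqrt n z)/n^{d-1}$ does not converge; instead I write $n^d\rho_n^{(1)}(\sqrt n z)=\dfrac{n^d}{N_n}\cdot\dfrac{1}{\pi^d}e^{-n|z|^2}\sum_{j<n}\dfrac{(n|z|^2)^j}{j!}$, and this is wrong too unless $\mathcal K_n(z,z)=\pi^{-d}e^{-|z|^2}\sum_{j<n}|z|^{2j}/j!$ holds, which it does by~\eqref{eq:defKnInIn0} with the $\sqrt n$ scaling giving $(n|z|^2)^j$. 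Then
\begin{align*}
n^d\rho_n^{(1)}(\sqrt n z)=\frac{n^d}{N_n}\cdot\frac{1}{\pi^d}e^{-n|z|^2}\sum_{j=0}^{n-1}\frac{(n|z|^2)^j}{j!}=\big(d!+o(1)\big)\cdot\frac{1}{\pi^d}R_n(|z|^2),
\end{align*}
and now the three cases of~\eqref{eq:densityGinibreScalingd0} fall out directly from the three cases of the limit of $R_n$ above, with uniformity on compacts avoiding $|z|=1$ inherited from the uniform convergence of $R_n$ and of $n^d/N_n\to d!$.

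The main obstacle is thus pinning down the correct normalization $N_n$ (the number of multi-indices with $|j|<n$) and verifying $n^d/N_n\to d!$, together with invoking the Ramanujan/Poisson-CLT value $\tfrac12$ at the edge $|z|=1$ and the exponential (large-deviation) control of the Poisson tail to get uniformity off the edge; none of these is deep, but the bookkeeping of powers of $n$ is where an error is easy to make, so I would state the $R_n$ asymptotics as a clearly labeled lemma and cite Ramanujan (or a Poisson central limit theorem plus a Chernoff bound) for its proof, then assemble the corollary in two lines.
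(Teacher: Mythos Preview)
Your final argument is correct and is exactly the paper's approach: reduce via Proposition~\ref{prop:KnisKnford=1} to $\mathcal K_n(\sqrt n\, z,\sqrt n\, z)=\pi^{-d}e^{-n|z|^2}\sum_{j<n}(n|z|^2)^j/j!$, observe this is (up to the factor $\pi^{1-d}$) the $d=1$ Ginibre diagonal kernel so~\eqref{eq:densityGinibreScaling} applies directly, and combine with $n^d/N_n\to d!$ where $N_n=\binom{n+d-1}{d}$ is the number of particles. Your detours stem only from initially using the $d=1$ normalization $\rho_n^{(1)}=\tfrac{1}{n}\mathcal K_n$ instead of $\rho_n^{(1)}=\tfrac{1}{N_n}\mathcal K_n$; once that is fixed at the outset the argument is the two lines you arrive at in the end.
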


\begin{proof}
By some straightforward combinatorial arguments, the number of points of the DPP defined via \eqref{eq:defKernelGinUEd} is given by the binomial coefficient $\binom{n+d-1}{d}$, which behaves as $\frac{n^d}{d!} (1+\mathcal O(1/n))$ for large $n$.
For the average density of particles, one takes $z=w$. Then \eqref{eq:defKnInIn0} turns into
\begin{align*} 
\mathscr K_n(\sqrt n \, z, \sqrt n \, z)
= \frac{1}{\pi^d} e^{-n |z|^2} \sum_{j=0}^{n-1} \frac{(n |z|^2)^j}{j!}. 
\end{align*}
This, apart from a factor $\pi^{d}$ rather than $\pi$, is simply the (rescaled) average density of points of the Ginibre ensemble ($d=1$) in $|z|$ (or any rotation in the plane thereof), and we obtain the result directly from \eqref{eq:densityGinibreScaling}. 
\end{proof}

Though Proposition \ref{prop:KnisKnford=1} gives an immediate relation between the $d>1$ and $d=1$ model for $\tau=0$, it will turn out to be both instructive and beneficial to find a single integral representation for this model as well. We have the following result. 

\begin{proposition} [$\tau=0$] \label{prop:defFd0}
Let $z, w\in \mathbb C^d$. We may write
\begin{align} \label{eq:defKnInIn0B}
\mathscr K_n(\sqrt n \, z, \sqrt n \, w)
= \frac{1}{\pi^d} e^{-n \frac{|z|^2+|w|^2}{2}} I_{n,0}^d(z\cdot w), 
\end{align}
where, for $\zeta\in\mathbb C$, we have
\begin{align} \label{eq:defInod}
I_{n,0}^d(\zeta) = - \frac{1}{2\pi i} \oint_{\gamma_0} \frac{e^{n F(s)}}{s-1} ds,
\end{align}
with $\gamma_0$ a small positively oriented loop around $0$, and $F(s) = F(\zeta; s)$ is defined by
\begin{align} \label{eq:defF0}
F(s) = \zeta s - \log s. 
\end{align}
\end{proposition}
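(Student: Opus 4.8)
The plan is to start from the explicit finite sum furnished by Proposition \ref{prop:KnisKnford=1} and to recognise that sum as a contour integral, in the same spirit as the manipulation \eqref{eq:residueTrickIntegral}. By \eqref{eq:defKnInIn0}, and since the prefactor $\frac{1}{\pi^d}e^{-n(|z|^2+|w|^2)/2}$ already appears on both sides of the claimed identity, it suffices to prove that, writing $\zeta=z\cdot w$,
\[
\sum_{j=0}^{n-1}\frac{(n\zeta)^j}{j!}
= -\frac{1}{2\pi i}\oint_{\gamma_0}\frac{e^{nF(s)}}{s-1}\,ds,\qquad F(s)=\zeta s-\log s .
\]
Here $e^{nF(s)}=e^{n\zeta s}\,s^{-n}$ involves only an integer power of $s$, so the choice of branch of $\log$ plays no role.

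For the right-hand side I would proceed by residues. Taking $\gamma_0$ to be a circle of radius $r$ with $0<r<1$, we may rewrite $-\frac{1}{s-1}=\frac{1}{1-s}=\sum_{m\geq 0}s^m$, a series converging uniformly on $\gamma_0$, so that
\[
-\frac{1}{2\pi i}\oint_{\gamma_0}\frac{e^{nF(s)}}{s-1}\,ds
= \frac{1}{2\pi i}\oint_{\gamma_0} e^{n\zeta s}\,s^{-n}\sum_{m\geq 0}s^m\,ds
= \frac{1}{2\pi i}\oint_{\gamma_0}\sum_{k,m\geq 0}\frac{(n\zeta)^k}{k!}\,s^{k+m-n}\,ds .
\]
Interchanging summation and integration (justified by uniform convergence on $\gamma_0$) and using that $\frac{1}{2\pi i}\oint_{\gamma_0}s^{\ell}\,ds$ vanishes unless $\ell=-1$, only the terms with $k+m=n-1$ contribute; since $m\geq 0$ forces $k\leq n-1$, the integral collapses to $\sum_{k=0}^{n-1}\frac{(n\zeta)^k}{k!}$, as required. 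An equivalent bookkeeping, closer to \eqref{eq:residueTrickIntegral}, is to split $\frac{1}{1-s}=1+s+\cdots+s^{n-1}+\frac{s^n}{1-s}$; after multiplication by $s^{-n}e^{n\zeta s}$ the remainder becomes $\frac{e^{n\zeta s}}{1-s}$, holomorphic inside $\gamma_0$ and hence of vanishing integral, while the $j$-th of the remaining $n$ terms contributes $\frac{1}{2\pi i}\oint_{\gamma_0}e^{n\zeta s}s^{-j-1}\,ds=\frac{(n\zeta)^j}{j!}$ for $j=0,\ldots,n-1$.

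I do not anticipate a genuine obstacle here: the only thing to watch is that $\gamma_0$ should have radius strictly less than $1$ so that the geometric series for $\frac{1}{1-s}$ converges on the contour, which is compatible with ``$\gamma_0$ a small positively oriented loop around $0$''. (One could instead try to obtain \eqref{eq:defKnInIn0B} as the $\tau\to 0$ degeneration of \eqref{eq:defKnInIn}--\eqref{eq:defF}, but that would force one to track $\xi_\tau=\tfrac12\log\tfrac1\tau\to\infty$ and the ensuing degeneration of $z_\pm$ in \eqref{eq:defz+-}, so the direct argument above is cleaner.)
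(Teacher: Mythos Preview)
Your proof is correct and follows essentially the same route as the paper. The paper starts from the intermediate identity \eqref{eq:residueTrickIntegral}, writes $1+\tfrac{1}{s}+\cdots+\tfrac{1}{s^{n-1}}=\dfrac{1-s^{-n}}{1-s^{-1}}$, and drops the ``$1$'' part since $\dfrac{e^{(z\cdot w)s}}{s-1}$ is holomorphic inside $\gamma_0$; this is precisely your ``equivalent bookkeeping'' with the finite geometric splitting. Your double-series expansion via $\sum_{m\ge 0}s^m$ is a harmless variant of the same residue computation.
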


\begin{proof}
Since $s=1$ is not enclosed by $\gamma_0$ (which is assumed to be small), the residue theorem implies that \eqref{eq:residueTrickIntegral} can alternatively be written as
\begin{align} \label{eq:residueTrickIntegral2}
\sum_{|j|<n} \prod_{k=1}^d \frac{(z_k \overline w_k)^{j_k}}{j_k!}
= \frac{1}{2\pi i} \oint_{\gamma_0} e^{(z\cdot w) s} \frac{1-s^{-n}}{1-s^{-1}} \frac{ds}{s}
= -\frac{1}{2\pi i} \oint_{\gamma_0} e^{(z\cdot w) s} s^{-n} \frac{ds}{s-1}.
\end{align}
Now substituting $(z, w) \mapsto \sqrt n (z, w)$, writing $s^{-n} = e^{-n\log s}$, and multiplying with the remaining factors in \eqref{eq:defKernelGinUEd}, we arrive at \eqref{eq:defKnInIn0B}, with $I_{n,0}^d$ and $F$ as defined in \eqref{eq:defInod} and \eqref{eq:defF0} respectively. 
\end{proof}

We state the following proposition. The proof is trivial, and is therefore omitted. 

\begin{proposition}[$\tau=0$] \label{prop:saddlePointstau0d}
Let $\zeta\in\mathbb C$, and let $F(s) = F(\zeta; s)$ be as in Proposition \ref{prop:defFd0}, i.e., $F(s) = \zeta s - \log s$.
\begin{itemize}
\item[(i)] When $\zeta\neq 0$, there is only one saddle point $s_0=\zeta^{-1}$, which is simple.\\
In this case $F(s_0) = 1+\log \zeta$ and $F''(s_0) = \zeta^2$.
\item[(ii)] When $\zeta=0$, there are no saddle points.
\end{itemize}
\end{proposition}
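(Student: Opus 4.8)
\emph{Proof proposal.} The plan is the obvious one: differentiate $F$, solve $F'=0$, inspect $F''$ at the solution, and substitute back. With $F(s)=\zeta s-\log s$ one has $F'(s)=\zeta-s^{-1}$ and $F''(s)=s^{-2}$. For part (ii) I would simply observe that when $\zeta=0$ the equation $F'(s)=-s^{-1}=0$ has no solution, so there are no saddle points. For part (i), when $\zeta\neq 0$ the equation $F'(s)=0$ is equivalent to $\zeta=s^{-1}$, which has the single solution $s_0=\zeta^{-1}$; since $F''(s_0)=s_0^{-2}=\zeta^2\neq 0$, this saddle point is simple. Finally I would substitute to obtain $F(s_0)=\zeta\cdot\zeta^{-1}-\log(\zeta^{-1})=1+\log\zeta$ and $F''(s_0)=\zeta^2$, as claimed.

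The computation is a single differentiation, so there is no real obstacle to overcome. The only point I would flag is the branch of the logarithm in $F(s_0)=1+\log\zeta$: with the convention $\arg\in(-\pi,\pi]$ one has $\log(\zeta^{-1})=-\log\zeta$ except when $\zeta$ lies on the negative real axis, where the two sides differ by $2\pi i$. Since only $e^{nF}$ enters the integral \eqref{eq:defInod}, and the loop $\gamma_0$ can be rotated away from that exceptional ray, this ambiguity causes no difficulty — which is precisely why the proposition is stated without proof.
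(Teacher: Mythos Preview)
Your proposal is correct and matches the paper's approach: the paper explicitly states the proof is trivial and omits it, and your direct differentiation is precisely that trivial argument. Your remark about the logarithm branch is a nice extra observation but not needed for the proposition itself.
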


\subsection{Set up}

We shall need the results of the previous subsection, tailored to a specific situation. In our present case, we investigate the asymptotic behavior of $\mathscr K_n(\sqrt n \, z + u, \sqrt n \, z + v)$, where $z\in \partial \mathscr E_\tau^d$, and $u, v\in\mathbb C^d$ satisfy $u, v = \mathcal O(n^\nu)$ as $n\to\infty$, for some fixed $0<\nu<\frac{1}{6}$. First, we focus on the case $0<\tau<1$. Using the definition in \eqref{eq:defz+-}, we have that
\begin{align} \label{eq:z+-OurCase}
\frac{z_\pm}{\sqrt{\sinh 2\xi_\tau}} = \sqrt{|\Re z|^2+\frac{u+\overline v}{\sqrt n}\cdot (\Re z)+\frac{(u+\overline v)^2}{4 n}}
\pm i \sqrt{|\Im z|^2 + \frac{u-\overline v}{\sqrt n}\cdot (i\Im z)-\frac{(u-\overline v)^2}{4 n}},
\end{align}
where we use a short-hand notation $\zeta^2 = \zeta_1^2+\ldots+\zeta_d^2$ for any $\zeta\in\mathbb C^d$. In particular, $z_\pm$ depend piecewise continuously on $z$ and $\frac{u}{\sqrt n}, \frac{v}{\sqrt n}$, and we have uniformly for $z\in \partial \mathscr E_\tau^d$ and $u, v=\mathcal O(n^\nu)$ that
\begin{align*}
\lim_{n\to\infty} \frac{z_\pm}{\sqrt{\sinh 2\xi_\tau}} = \frac{\hat z_\pm}{\sqrt{\sinh 2\xi_\tau}} :=  |\Re z|\pm i|\Im z| \in \partial \mathscr E_\tau.
\end{align*}
In particular, there is an $\eta\in [0,\frac{\pi}{2}]$ such that
\begin{align*}
\hat z_\pm = \sqrt 2 \cosh(\xi_\tau\pm i\eta). 
\end{align*}
When $d=1$, we simply have $z_+ = \sqrt{\sinh 2\xi_\tau} \, (z+u)$ and $z_- = \sqrt{\sinh 2\xi_\tau} \, \, \overline{(z+v)}$. When $d>1$, there is an essential difference though, and this is caused by the branch-cut of the square roots in \eqref{eq:z+-OurCase}. In the end, these branch-cuts will somehow drop out (at least to the dominant order), since $I_{n,\tau}^d(z_\pm)$ depends only on $(z_+\pm z_-)^2$, which have no branch-cut. However, in our derivation we shall have to take the presence of these branch-cuts into account. When $z$ is in a subset of $\partial \mathscr E_\tau^d$ such that $\Re z, \Im z\neq 0$, we have
\begin{align} \label{eq:z+-asympn}
\frac{z_\pm}{\sqrt{\sinh 2\xi_\tau}} &= \frac{\hat z_\pm}{\sqrt{\sinh 2\xi_\tau}} +  \frac{u+\overline v}{\sqrt n} \cdot \frac{\Re z}{|\Re z|}  \pm \frac{u-\overline v}{\sqrt n} \cdot\frac{\Im z}{|\Im z|}  
+ \mathcal O\left(\frac{n^{-1+2\nu}}{|\Re z|}+\frac{n^{-1+2\nu}}{|\Im z|}\right)
\end{align} 
uniformly for $u, v = \mathcal O(n^\nu)$ as $n\to\infty$.
On the other hand, when $\Re z = 0$, we have
\begin{align} \label{eq:z+-asympnRez=0}
\frac{z_\pm}{\sqrt{\sinh 2\xi_\tau}} &= \frac{\hat z_\pm}{\sqrt{\sinh 2\xi_\tau}} +  \frac{\sqrt{(u+\overline v)^2}}{\sqrt n}  \pm \frac{u-\overline v}{\sqrt n} \cdot\frac{\Im z}{|\Im z|}  
+ \mathcal O\left(n^{-1+2\nu}\right),
\end{align} 
while the case $\Im z=0$ yields
\begin{align} \label{eq:z+-asympnImz=0}
\frac{z_\pm}{\sqrt{\sinh 2\xi_\tau}} &= \frac{\hat z_\pm}{\sqrt{\sinh 2\xi_\tau}} +  \frac{u+\overline v}{\sqrt n} \cdot \frac{\Re z}{|\Re z|}  \pm \frac{\sqrt{(u-\overline v)^2}}{\sqrt n}
+ \mathcal O\left(n^{-1+2\nu}\right),
\end{align} 
uniformly for $u, v = \mathcal O(n^\nu)$ as $n\to\infty$. Although we have $z_\pm = \hat z_\pm + \mathcal O(n^{-\frac{1}{2}+\nu})$ in all cases, it is not a priori clear that the constant implied by this $\mathcal O$ term can be taken uniformly for $z\in \partial \mathscr E_\tau^d$. Nevertheless, this does indeed hold. 

\begin{lemma}[$0<\tau<1$]
We have $z_\pm-\hat z_\pm = \mathcal O(n^{-\frac{1}{2}+\nu})$ as $n\to\infty$, uniformly for $z\in \partial\mathscr E_\tau^d$ and $u,v=\mathcal O(n^\nu)$. 
\end{lemma}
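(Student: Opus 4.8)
The plan is to exhibit a single bound that works uniformly on all of $\partial\mathcal E_\tau^d$, thereby patching together the three regimes \eqref{eq:z+-asympn}, \eqref{eq:z+-asympnRez=0}, \eqref{eq:z+-asympnImz=0}. The key observation is that, rather than differentiating the two square roots in \eqref{eq:z+-OurCase} (which is exactly what produces the apparently non-uniform factors $1/|\Re z|$ and $1/|\Im z|$), one should estimate the increment of each square root directly via the elementary inequality $|\sqrt{A+h}-\sqrt A|\le \sqrt{|h|}$ valid for all $A\ge 0$ and all complex $h$ (with the principal branch), together with the Lipschitz-type bound $|\sqrt{A+h}-\sqrt A|\le |h|/\sqrt A$ when $\sqrt{A}$ is bounded below. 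Applying the first inequality to $A=|\Re z|^2$ and $h=\frac{u+\overline v}{\sqrt n}\cdot(\Re z)+\frac{(u+\overline v)^2}{4n}$, and similarly to the second square root, already gives $z_\pm-\hat z_\pm=\mathcal O(\sqrt{|h|})=\mathcal O(n^{-1/4+\nu/2})$ uniformly; this is weaker than claimed but uniform, and it handles the region where $\Re z$ or $\Im z$ is tiny.

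To get the sharp exponent $-\tfrac12+\nu$ I would split $\partial\mathcal E_\tau^d$ into two overlapping pieces depending on a threshold. On the piece where $|\Re z|\ge n^{-1/2+\nu}$ and $|\Im z|\ge n^{-1/2+\nu}$, the Lipschitz bound $|\sqrt{A+h}-\sqrt A|\le|h|/\sqrt A$ applies to both factors with $\sqrt A\ge n^{-1/2+\nu}$, and since $|h|=\mathcal O(n^{-1/2+\nu}\cdot\sqrt A)+\mathcal O(n^{-1+2\nu})$ — note the leading term of $h$ is literally $\frac{u+\overline v}{\sqrt n}\cdot(\Re z)$, which is $\mathcal O(n^{-1/2+\nu}|\Re z|)$ — the increment is $\mathcal O(n^{-1/2+\nu})+\mathcal O(n^{-1+2\nu}/\sqrt A)=\mathcal O(n^{-1/2+\nu})$ because $1/\sqrt A\le n^{1/2-\nu}$ and $n^{-1+2\nu}\cdot n^{1/2-\nu}=n^{-1/2+\nu}$. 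On the complementary piece, say $|\Re z|<n^{-1/2+\nu}$ (the case $|\Im z|$ small being symmetric), one has $|\hat z_+|$ and $|\Re z_\pm/\sqrt{\sinh 2\xi_\tau}|$ both $\mathcal O(n^{-1/2+\nu})$ directly from \eqref{eq:z+-OurCase} and the formula $\hat z_\pm=\sqrt 2\cosh(\xi_\tau\pm i\eta)$, so that $z_\pm-\hat z_\pm$, being a difference of two quantities each within $\mathcal O(n^{-1/2+\nu})$ of $i|\Im z|\sqrt{\sinh2\xi_\tau}$ up to the $\Im$-square-root which is Lipschitz there, is again $\mathcal O(n^{-1/2+\nu})$. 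Since the bound $\mathcal O(n^{-1/2+\nu})$ holds on both pieces with constants depending only on $\tau$, $d$, $\nu$ and the implied constants in $u,v=\mathcal O(n^\nu)$, it holds uniformly on all of $\partial\mathcal E_\tau^d$.

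The main obstacle is purely the branch-cut bookkeeping: one must make sure that the two square roots in \eqref{eq:z+-OurCase} are evaluated on the correct sheet so that $z_\pm\to\hat z_\pm$ rather than $\overline{\hat z_\pm}$ or $-\hat z_\pm$, and that the estimate $|\sqrt{A+h}-\sqrt A|\le\sqrt{|h|}$ is applied with $A+h$ lying in the domain of the principal branch — which it does, since for $n$ large the perturbation $h$ is small and $A\ge 0$, so $A+h$ stays away from the negative real axis except possibly when $A$ itself is near $0$, precisely the borderline case which is absorbed by the $\sqrt{|h|}$-type (rather than Lipschitz) estimate. Once the covering argument is set up, the remaining computation is the routine triangle-inequality bookkeeping indicated above, and I would not write it out in full.
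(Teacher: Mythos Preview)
Your proposal is correct and follows essentially the same route as the paper: split $\partial\mathcal E_\tau^d$ according to whether $|\Re z|$ and $|\Im z|$ exceed the threshold $n^{-1/2+\nu}$, use the Lipschitz bound $|\sqrt{A+h}-\sqrt A|\le |h|/\sqrt A$ on the piece where both are large, and bound each square root and its limit separately by $\mathcal O(n^{-1/2+\nu})$ on the small piece. One slip: where you write ``$|\hat z_+|=\mathcal O(n^{-1/2+\nu})$'' you presumably mean the first square root in \eqref{eq:z+-OurCase} (or equivalently $|\Re z|$), since $|\hat z_+|$ itself is bounded away from zero; with that correction the argument is exactly the paper's.
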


\begin{proof}
One simply has to divide into the case $|\Re z|, |\Im z|> n^{-\frac{1}{2}+\nu}$, the case $|\Re z|\leq n^{-\frac{1}{2}+\nu}$, and the case $|\Im  z|\leq n^{-\frac{1}{2}+\nu}$. In the first case, we have
\begin{align} \nonumber
\left|\sqrt{|\Re z|^2+\frac{u+\overline v}{\sqrt n}\cdot (\Re z)+\frac{(u+\overline v)^2}{4 n}} - |\Re z|\right|
&=  \left|\frac{\frac{u+\overline v}{\sqrt n}\cdot \frac{\Re z}{|\Re z|}+\frac{(u+\overline v)^2}{4 |\Re z| n}}{1+\sqrt{1+\frac{u+\overline v}{\sqrt n}\cdot \frac{\Re z}{|\Re z|}+\frac{(u+\overline v)^2}{4 |\Re z| n}}}\right|
\leq \frac{|u+\overline v|}{\sqrt n} + \frac{|u+\overline v|^2}{4 n^{\frac{1}{2}+\nu}},\\ \label{eq:absSqrtImz}
\left|\sqrt{|\Im z|^2+\frac{u-\overline v}{\sqrt n}\cdot (i\Im z)-\frac{(u-\overline v)^2}{4 n}} - |\Im z|\right|
&= \left|\frac{\frac{u-\overline v}{\sqrt n}\cdot \frac{i\Im z}{|\Im z|}-\frac{(u-\overline v)^2}{4 |\Im z| n}}{1+\sqrt{1+\frac{u+\overline v}{\sqrt n}\cdot \frac{\Re z}{|\Re z|}+\frac{(u+\overline v)^2}{4 |\Re z| n}}}\right|
\leq \frac{|u-\overline v|}{\sqrt n} + \frac{|u-\overline v|^2}{4 n^{\frac{1}{2}+\nu}}.
\end{align}
We used here that for any $\zeta\in\mathbb C$ we have $\Re \sqrt \zeta\geq 0$, and thus for all $\zeta\in\mathbb C\cup \{\infty\}$ one has
\begin{align*}
\left|\frac{1}{1+\sqrt \zeta}\right|\leq 1.
\end{align*}
When we are in the second case, we have
\begin{align*}
\left|\sqrt{|\Re z|^2+\frac{u+\overline v}{\sqrt n}\cdot (\Re z)+\frac{(u+\overline v)^2}{4 n}} - |\Re z|\right|
&\leq \sqrt{n^{-1+2\nu}+|u+\overline v| n^{-1+\nu}+\frac{|u+\overline v|^2}{4 n}} + n^{-\frac{1}{2}+\nu}
= n^{-\frac{1}{2}+\nu} \left(2+\frac{|u+\overline v|}{2 n^\nu}\right),
\end{align*}
while we still have the same estimate \eqref{eq:absSqrtImz}. The third case is analogous. Obviously, we can find a constant $C>0$ such that $|z_\pm-\hat z_\pm|\leq C n^{-\frac{1}{2}+\nu}$ for the three cases simultaneously. 
\end{proof}

It will turn out to be convenient to introduce
\begin{align} \label{eq:defDelta+-}
\Delta_\pm = \frac{z_\pm - \hat z_\pm}{\sqrt{\hat z_\pm^2-2}} = \frac{\cosh(\xi_\pm+i\eta_\pm) - \cosh(\xi_\tau\pm i\eta)}{\sinh(\xi_\tau\pm i\eta)},
\end{align}
where, as before, $z_\pm = \sqrt 2 \cosh(\xi_\pm+i\eta_\pm)$ and $\hat z_\pm = \sqrt 2 \cosh(\xi_\tau\pm i\eta)$. 
Rather than working with $u, v=\mathcal O(n^\nu)$ we shall simply consider $z_\pm = \hat z_\pm + \sqrt{\hat z_\pm^2-2} \, \Delta_\pm$, and demand that $\Delta_\pm = \mathcal O(n^{-\frac{1}{2}+\nu})$ as $n\to\infty$. An incidental advantage of this perspective, is that we only need to consider limits that are uniform in $\hat z_\pm \in \sqrt{\sinh 2\xi_\tau} \, \partial \mathscr E_\tau = \{\sinh(2\xi_\tau) \, w : w\in \partial \mathscr E_\tau\}$. For notational convenience, we introduce 
\begin{align*}
\hat z = \frac{z_+}{\sqrt{\sinh 2\xi_\tau}}.
\end{align*}
Summarizing, our aim is to understand the asymptotic behavior of $I_{n,\tau}^d(z_\pm)$, where
\begin{align*}
z_\pm = \hat z_\pm+\sqrt{\hat z_\pm^2-2} \, \Delta_\pm,
\end{align*}
uniformly for $\hat z \in \partial \mathscr E_\tau$, under the assumption that $\Delta_\pm=\mathcal O(n^{-\frac{1}{2}+\nu})$ as $n\to\infty$. 

\begin{lemma}[$0<\tau<1$] \label{lem:abAsympDelta}
Let $a$ and $b$ be defined as in \eqref{eq:defab}. Then we have
\begin{align} \label{eq:a+-1Deltas}
a^{\pm 1} &= \tau^{\mp 1} 
\pm \tau^{\mp 1}(\Delta_++\Delta_-)
+ \mathcal O(n^{-1+2\nu}).\\ \label{eq:b+-1Deltas}
b^{\pm 1} &= e^{\pm 2 i \eta} 
\pm e^{\pm 2 i \eta}  (\Delta_+-\Delta_-) 
+ \mathcal O(n^{-1+2\nu}).
\end{align}
as $n\to\infty$, uniformly for $\hat z \in \partial \mathscr E_\tau$. Furthermore, we have 
\begin{align} \label{eq:atoHighAccuracy}
\tau a - 1 = \Delta_++\Delta_-+\frac{(\Delta_++\Delta_-)^2}{2}
- \coth(\xi_\tau+i\eta) \frac{\Delta_+^2}{2} 
- \coth(\xi_\tau-i\eta) \frac{\Delta_-^2}{2}
+ \mathcal O(n^{-\frac{3}{2}+3\nu})
\end{align}
as $n\to\infty$, uniformly for $\hat z \in \partial \mathscr E_\tau$.
\end{lemma}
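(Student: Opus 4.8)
The plan is to express everything through the elliptic coordinates $\xi_\pm + i\eta_\pm$ and the reference point $\xi_\tau + i\eta$, and then Taylor-expand in the small quantities $\Delta_\pm = \mathcal O(n^{-1/2+\nu})$. The starting point is the defining relation $z_\pm = \sqrt 2\cosh(\xi_\pm + i\eta_\pm)$ together with $\hat z_\pm = \sqrt 2\cosh(\xi_\tau \pm i\eta)$ and the identity in \eqref{eq:defDelta+-}, namely
\begin{align*}
\cosh(\xi_\pm + i\eta_\pm) = \cosh(\xi_\tau \pm i\eta) + \sinh(\xi_\tau \pm i\eta)\,\Delta_\pm .
\end{align*}
Writing $w_\pm := (\xi_\pm + i\eta_\pm) - (\xi_\tau \pm i\eta)$ for the (complex) increment of the elliptic coordinate, this says $\cosh(\xi_\tau\pm i\eta + w_\pm) = \cosh(\xi_\tau\pm i\eta) + \sinh(\xi_\tau\pm i\eta)\Delta_\pm$; expanding the left side to second order in $w_\pm$ and matching gives $w_\pm = \Delta_\pm - \tfrac12\coth(\xi_\tau\pm i\eta)\,\Delta_\pm^2 + \mathcal O(\Delta_\pm^3)$. (One must check $w_\pm = \mathcal O(\Delta_\pm)$ first, which follows since $\sinh(\xi_\tau\pm i\eta)\ne 0$ on the relevant ellipse — the degenerate vertices $\hat z_\pm = \pm\sqrt2$ are excluded here because $\hat z \in \partial\mathcal E_\tau$ stays at $\xi_\tau>0$; uniformity in $\hat z$ comes from compactness of the ellipse and continuity of $\coth(\xi_\tau\pm i\eta)$ there.)

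Next, recall from \eqref{eq:defab} that $a = e^{\xi_++\xi_-+i(\eta_++\eta_-)} = e^{(\xi_\tau+i\eta)+(\xi_\tau-i\eta)}e^{w_++w_-} = \tau^{-1}e^{w_++w_-}$, since $2\xi_\tau = \log(1/\tau)$; similarly $b = e^{\xi_+-\xi_-+i(\eta_+-\eta_-)} = e^{2i\eta}e^{w_+-w_-}$. Then I substitute the expansion of $w_\pm$. For \eqref{eq:a+-1Deltas} I only need first order: $a^{\pm1} = \tau^{\mp1}e^{\pm(w_++w_-)} = \tau^{\mp1}(1 \pm (w_++w_-) + \mathcal O(n^{-1+2\nu})) = \tau^{\mp1} \pm \tau^{\mp1}(\Delta_++\Delta_-) + \mathcal O(n^{-1+2\nu})$, using $w_+ + w_- = \Delta_+ + \Delta_- + \mathcal O(n^{-1+2\nu})$; likewise $b^{\pm1} = e^{\pm2i\eta}e^{\pm(w_+-w_-)}$ gives \eqref{eq:b+-1Deltas}. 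For \eqref{eq:atoHighAccuracy} I need the next order: $\tau a - 1 = e^{w_++w_-} - 1 = (w_++w_-) + \tfrac12(w_++w_-)^2 + \mathcal O(n^{-3/2+3\nu})$. Now plug in $w_\pm = \Delta_\pm - \tfrac12\coth(\xi_\tau\pm i\eta)\Delta_\pm^2 + \mathcal O(n^{-3/2+3\nu})$: the linear part contributes $\Delta_++\Delta_-$ from the first-order pieces plus $-\tfrac12\coth(\xi_\tau+i\eta)\Delta_+^2 - \tfrac12\coth(\xi_\tau-i\eta)\Delta_-^2$ from the quadratic corrections, while $\tfrac12(w_++w_-)^2 = \tfrac12(\Delta_++\Delta_-)^2 + \mathcal O(n^{-3/2+3\nu})$ since $w_\pm = \Delta_\pm + \mathcal O(n^{-1+2\nu})$. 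Collecting terms yields exactly \eqref{eq:atoHighAccuracy}.

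The main obstacle is bookkeeping the error orders consistently: with $\Delta_\pm = \mathcal O(n^{-1/2+\nu})$ one has $\Delta_\pm^2 = \mathcal O(n^{-1+2\nu})$ and $\Delta_\pm^3 = \mathcal O(n^{-3/2+3\nu})$, so the cubic remainder in the $\cosh$-inversion and in the exponential expansion is precisely the claimed $\mathcal O(n^{-3/2+3\nu})$ — one must verify nothing coarser sneaks in, in particular that the implied constants in the inversion $w_\pm \mapsto \Delta_\pm$ are uniform over $\hat z\in\partial\mathcal E_\tau$. This last point reduces to the statement that $\xi_\tau+i\eta$ ranges over a fixed compact arc bounded away from the zeros of $\sinh$, on which $\coth(\xi_\tau\pm i\eta)$ and all higher derivatives of $\cosh$ are uniformly bounded; an application of the inverse function theorem (or a direct Lagrange-inversion estimate) then gives the uniform cubic control, and the rest is routine algebra.
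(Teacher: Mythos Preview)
Your proof is correct. It takes a route closely related to, but organized differently from, the paper's. The paper starts from the explicit formula $a^{\pm 1} = \tfrac12\bigl(z_+ \pm \sqrt{z_+^2-2}\bigr)\bigl(z_- \pm \sqrt{z_-^2-2}\bigr)$ (and the analogous one for $b^{\pm 1}$), Taylor-expands $\sqrt{z_\pm^2-2}$ about $\hat z_\pm$ to obtain
\[
\frac{z_\pm+\sqrt{z_\pm^2-2}}{\sqrt 2}=e^{\xi_\tau\pm i\eta}\Bigl(1+\Delta_\pm\Bigr)-\frac{\Delta_\pm^2}{2\sinh(\xi_\tau\pm i\eta)}+\mathcal O(\Delta_\pm^3),
\]
and then multiplies. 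You instead exploit the exponential definition $a=\tau^{-1}e^{w_++w_-}$, $b=e^{2i\eta}e^{w_+-w_-}$ directly, obtaining the increments $w_\pm$ by inverting the $\cosh$ relation. The two computations are equivalent via the identity $\tfrac{1}{\sqrt 2}\bigl(z_\pm+\sqrt{z_\pm^2-2}\bigr)=e^{\xi_\pm+i\eta_\pm}$: the paper expands the left-hand side, you expand the exponent on the right. Your route is conceptually a bit cleaner (all the bookkeeping happens additively in the exponent), at the modest cost of an inverse-function step; the paper's route avoids that step by working with an explicit algebraic expression. Your uniformity argument is fine: since $\xi_\tau>0$, the point $\xi_\tau\pm i\eta$ stays in a fixed compact set on which $\sinh$ is bounded away from zero, which is exactly what controls the implied constants.
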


\begin{proof}
As shown in \cite{ADM}, we can alternatively write the saddle points as
\begin{align} \label{eq:a+-1z+z-}
a^{\pm 1} &= \frac{1}{2} \left(z_+\pm \sqrt{z_+^2-2}\right) \left(z_-\pm\sqrt{z_-^2-2}\right)\\
b^{\pm 1} &= \frac{1}{2} \left(z_+\pm \sqrt{z_+^2-2}\right) \left(z_-\mp\sqrt{z_-^2-2}\right).
\end{align}
We observe by Taylor series expansion that
\begin{align*}
\sqrt{z_\pm^2-2} &= \sqrt{\hat z_\pm^2-2} + \frac{\hat z_\pm}{\sqrt{\hat z_\pm^2-2}} (z_\pm-\hat z_\pm)
- \frac{1}{(\hat z_\pm^2-2)^\frac{3}{2}} (z_\pm - \hat z_\pm)^2 + \mathcal O(\Delta_\pm^3)\\
&= \sqrt{\hat z_\pm^2-2} + \hat z_\pm \Delta_\pm
- \frac{\Delta_\pm^2}{\sqrt{\hat z_\pm^2-2}} + \mathcal O(\Delta_\pm^3).
\end{align*}
Henceforth
\begin{align} \label{eq:zWz+-}
\frac{z_\pm+\sqrt{z_\pm^2-2}}{\sqrt 2} =
e^{\xi_\tau\pm i\eta} + e^{\xi_\tau\pm i\eta} \Delta_\pm
- \frac{1}{\sinh(\xi_\tau\pm i\eta)} \frac{\Delta_\pm^2}{2} + \mathcal O(\Delta_\pm^3).
\end{align}
Plugging these in \eqref{eq:a+-1z+z-}, we obtain
\begin{align} \label{eq:a+-1Deltas}
a^{\pm 1} = \tau^{\mp 1} 
\pm \tau^{\mp 1}(\Delta_++\Delta_-)
+ \mathcal O(n^{-1+2\nu}). 
\end{align}
Analogously, we find
\begin{align} \label{eq:b+-1Deltas}
b^{\pm 1} = e^{\pm 2 i \eta} 
\pm e^{\pm 2 i \eta}  (\Delta_+-\Delta_-) 
+ \mathcal O(n^{-1+2\nu}). 
\end{align}
Using \eqref{eq:zWz+-} to one order higher, we find that
\begin{align*} 
\tau a - 1 = \Delta_++\Delta_-+\Delta_+\Delta_-
- \frac{e^{-\xi_\tau-i\eta}}{\sinh(\xi_\tau+i\eta)} \frac{\Delta_+^2}{2} 
- \frac{e^{-\xi_\tau+i\eta}}{\sinh(\xi_\tau-i\eta)} \frac{\Delta_-^2}{2}
+ \mathcal O(n^{-\frac{3}{2}+3\nu}).
\end{align*}
\end{proof}

Lemma \ref{lem:abAsympDelta} shows in particular that $a, b$ and $b^{-1}$ lie outside the circle $|s|=|a|^{-1}$ for $n$ big enough. Following Lemma \ref{lem:as=1}, we intend to deform the integration contour $\gamma_0$ to the circle $|s|=|a|^{-1}$, and we should only take the saddle point $s=a^{-1}$ into account. What makes this situation complicated, is that the saddle point $a^{-1}$ and the pole at $\tau$ of the integrand of $I_{n,\tau}^d(\hat z_\pm + \sqrt{\hat z_\pm^2-2} \, \Delta_\pm)$ coalesce in the limit $n\to\infty$.\\

Since $a^{-1}$ is the only saddle point that will give a contribution, we shall need to know the values of $F$ and $F''$ only in $a^{-1}$. By \cite{ADM}, these are given by
\begin{align}  \nonumber
F(a^{-1}) &= 1 + \log \tau + \xi_+ + \xi_- + i(\eta_++\eta_-) +  \frac{1}{2} e^{-2 (\xi_++ i \eta_+)} + \frac{1}{2} e^{-2 (\xi_-+ i \eta_-)},\\ \label{eq:F''a-1}
F''(a^{-1}) &= 2 a^{2} \frac{\sinh(\xi_++i\eta_+) \sinh(\xi_-+i\eta_-)}{\sinh(\xi_++i\eta_++\xi_-+i\eta_-)}.
\end{align}
We would rather express $F''(a^{-1})$ in terms of $\Delta_\pm$ however, hence the following lemma. As it turns out, we need not be precise about the behavior of $F(a^{-1})$. 

\begin{lemma}[$0<\tau<1$] 
Uniformly for $\hat z \in \partial \mathscr E_\tau$, we have as $n\to\infty$ that
\begin{align} \label{eq:F''a-1Explicit}
\frac{1}{2} a^{-2} F''(a^{-1}) = \frac{|\sinh(\xi_\tau+i\eta)|^2}{\sinh(2\xi_\tau)} 
\left(1 
 + \coth(\xi_\tau+i\eta) \Delta_+
 + \coth(\xi_\tau-i\eta) \Delta_-
 - \coth(2\xi_\tau) (\Delta_++\Delta_-)\right) 
 + \mathcal O(n^{-1+2\nu}).
\end{align}
\end{lemma}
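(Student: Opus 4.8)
The plan is to start from the closed-form expression \eqref{eq:F''a-1} for $F''(a^{-1})$, namely
\[
\tfrac{1}{2} a^{-2} F''(a^{-1}) = \frac{\sinh(\xi_++i\eta_+)\sinh(\xi_-+i\eta_-)}{\sinh(\xi_++i\eta_++\xi_-+i\eta_-)},
\]
and expand each of the three hyperbolic factors to first order in $\Delta_\pm$. The key observation is that, via \eqref{eq:defDelta+-}, the quantity $\xi_\pm + i\eta_\pm$ is a perturbation of $\xi_\tau \pm i\eta$ governed by $\Delta_\pm$. First I would make this precise: from $\cosh(\xi_\pm+i\eta_\pm) = \cosh(\xi_\tau\pm i\eta) + \sinh(\xi_\tau\pm i\eta)\,\Delta_\pm$ one gets $\xi_\pm + i\eta_\pm = (\xi_\tau\pm i\eta) + \Delta_\pm + \mathcal O(\Delta_\pm^2)$, since differentiating $\cosh$ gives $\sinh$ and the increment in the argument equals the increment in $\cosh$ divided by $\sinh$, to leading order. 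Then $\sinh(\xi_\pm+i\eta_\pm) = \sinh(\xi_\tau\pm i\eta)\bigl(1 + \coth(\xi_\tau\pm i\eta)\Delta_\pm + \mathcal O(\Delta_\pm^2)\bigr)$, using $\sinh(x+\delta) = \sinh x + \cosh x\,\delta + \mathcal O(\delta^2) = \sinh x(1+\coth x\,\delta + \dots)$.

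Next I would handle the denominator. We have $(\xi_++i\eta_+) + (\xi_-+i\eta_-) = 2\xi_\tau + (\Delta_++\Delta_-) + \mathcal O(n^{-1+2\nu})$ (the $i\eta$ terms cancel), so
\[
\sinh(\xi_++i\eta_++\xi_-+i\eta_-) = \sinh(2\xi_\tau)\bigl(1 + \coth(2\xi_\tau)(\Delta_++\Delta_-) + \mathcal O(n^{-1+2\nu})\bigr).
\]
Multiplying the two numerator factors and dividing by this, and using $\sinh(\xi_\tau+i\eta)\sinh(\xi_\tau-i\eta) = |\sinh(\xi_\tau+i\eta)|^2$ (valid since $\xi_\tau,\eta$ are real), together with $(1+x)^{-1} = 1 - x + \mathcal O(x^2)$, collects the linear terms into
\[
\frac{|\sinh(\xi_\tau+i\eta)|^2}{\sinh(2\xi_\tau)}\Bigl(1 + \coth(\xi_\tau+i\eta)\Delta_+ + \coth(\xi_\tau-i\eta)\Delta_- - \coth(2\xi_\tau)(\Delta_++\Delta_-) + \mathcal O(n^{-1+2\nu})\Bigr),
\]
which is exactly \eqref{eq:F''a-1Explicit}. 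Throughout, the error terms are $\mathcal O(\Delta_\pm^2) = \mathcal O(n^{-1+2\nu})$ by the standing assumption $\Delta_\pm = \mathcal O(n^{-1/2+\nu})$, and uniformity in $\hat z\in\partial\mathcal E_\tau$ follows because $\xi_\tau$ is fixed and $\eta$ ranges over the compact set $[0,\tfrac{\pi}{2}]$, on which $\sinh(\xi_\tau\pm i\eta)$ is bounded away from $0$ (since $\xi_\tau > 0$), so all the $\coth$ factors and the implied constants in the Taylor remainders are uniformly bounded.

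The main obstacle, such as it is, is bookkeeping: one must verify that the second-order Taylor remainders really are uniformly $\mathcal O(n^{-1+2\nu})$ and that no linear term has been dropped — in particular that the shift $\xi_\pm+i\eta_\pm - (\xi_\tau\pm i\eta)$ agrees with $\Delta_\pm$ only up to $\mathcal O(\Delta_\pm^2)$, which is why the precise second-order form of \eqref{eq:zWz+-} and \eqref{eq:atoHighAccuracy} in Lemma \ref{lem:abAsympDelta} are needed but their $\mathcal O(n^{-3/2+3\nu})$ corrections are harmless here. Since only the linear-in-$\Delta$ accuracy is claimed in \eqref{eq:F''a-1Explicit}, no delicate cancellation is required, and the argument is a routine (if somewhat tedious) first-order perturbation computation.
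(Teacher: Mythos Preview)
Your proposal is correct but takes a genuinely different route from the paper. The paper does not work directly with the hyperbolic expression \eqref{eq:F''a-1}; instead it uses the alternative saddle-point identity
\[
F''(a^{-1}) = \frac{(a-b)(a-b^{-1})}{1-a^{-2}},
\]
and then plugs in the algebraic expansions of $a$, $a^{-1}$, $b$, $b^{-1}$ from Lemma~\ref{lem:abAsympDelta} (equations \eqref{eq:a+-1Deltas} and \eqref{eq:b+-1Deltas}), computing $a-b^{\pm 1}$ and $a^2-1$ to first order in $\Delta_\pm$ and then simplifying the prefactor $|\tau^{-1}-e^{2i\eta}|^2/(\tau^{-2}-1)$ into $2|\sinh(\xi_\tau+i\eta)|^2/\sinh(2\xi_\tau)$. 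Your approach, by contrast, perturbs the elliptic-coordinate arguments $\xi_\pm+i\eta_\pm$ directly around $\xi_\tau\pm i\eta$ via the implicit-function step $\xi_\pm+i\eta_\pm=(\xi_\tau\pm i\eta)+\Delta_\pm+\mathcal O(\Delta_\pm^2)$ and then Taylor-expands the three $\sinh$ factors in \eqref{eq:F''a-1}. This is arguably more transparent: the target coefficients $\coth(\xi_\tau\pm i\eta)$ and $\coth(2\xi_\tau)$ fall out immediately from the derivatives of $\sinh$, and you never need to translate between the $(a,b)$ and $(\xi,\eta)$ pictures. The paper's route has the advantage of reusing Lemma~\ref{lem:abAsympDelta} verbatim, so no new implicit-function argument is required. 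One small remark: your closing reference to \eqref{eq:zWz+-} and \eqref{eq:atoHighAccuracy} being ``needed'' is inaccurate --- your argument does not actually use them, only \eqref{eq:F''a-1} and \eqref{eq:defDelta+-}.
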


\begin{proof}
As observed in \cite{ADM}, we can write $F''(a^{-1})$ entirely in terms of the saddle points. Namely
\begin{align*}
F''(a^{-1}) = \frac{(a-b)(a-b^{-1})}{1-a^{-2}}. 
\end{align*}
Plugging in \eqref{eq:a+-1Deltas} and \eqref{eq:b+-1Deltas}, we find that
\begin{align*}
a - b^{\pm 1} &= (\tau^{-1}-e^{\pm 2i\eta}) \left(1 + \Delta_\pm + \coth(\xi_\tau\mp i\eta) \Delta_\mp\right) + \mathcal O(n^{-1+2\nu}),\\
a^2 - 1 &= (\tau^{-2}-1) \left(1 + 2\frac{\Delta_++\Delta_-}{1-\tau^2} \right) + \mathcal O(n^{-1+2\nu}).
\end{align*}
We thus have
\begin{align*}
a^{-2} F''(a^{-1}) = \frac{|\tau^{-1}-e^{2i\eta}|^2}{\tau^{-2}-1} 
\left(1 
 + \coth(\xi_\tau+i\eta) \Delta_+
 + \coth(\xi_\tau-i\eta) \Delta_-
 - \coth(2\xi_\tau) (\Delta_++\Delta_-)\right) 
 + \mathcal O(n^{-1+2\nu}).
\end{align*}
\end{proof}

So far, we set up everything for the case $0<\tau<1$. The case $\tau=0$ is considerably easier, and there is not much to set up. Substituting $\sqrt n (z, w)\mapsto (\sqrt n \, z+u, \sqrt n \, z+v)$ in Proposition \ref{prop:saddlePointstau0d}, we find that there is exactly one saddle point (for $n$ large enough), which is simple, and given by
\begin{align} \label{eq:defSaddlePoint0n}
s_0 = \frac{1}{\left(z+\frac{u}{\sqrt n}\right)\cdot \left(z+\frac{v}{\sqrt n}\right)}
= \left(1+\frac{u\cdot z+z\cdot v}{\sqrt n}+\frac{u\cdot v}{n}\right)^{-1}
= 1 - \frac{u\cdot z+z\cdot v}{\sqrt n} + \frac{(u\cdot z+z\cdot v)^2-u\cdot v}{n} + \mathcal O(n^{-\frac{3}{2}+3\nu})
\end{align}
as $n\to\infty$, uniformly for $|z|=1$ and $u, v\in\mathbb C^d$ with $u, v = \mathcal O(n^\nu)$. 
Note that, in this case, just as in the case $0<\tau<1$, the saddle point ($s=s_0$) coalescs with the pole ($s=1$) in the limit $n\to\infty$.
We also find that
\begin{align} \nonumber
F(s_0) &= 1-\log s_0 = 1 +\log \left(1+\frac{u\cdot z+z\cdot v}{\sqrt n}+\frac{u\cdot v}{n}\right)
= 1+\frac{u\cdot z+z\cdot v}{\sqrt n}-\frac{(u\cdot z+z\cdot u)^2-u\cdot v}{n} + \mathcal O(n^{-\frac{3}{2}+3\nu}),\\ \label{eq:F''ins0}
F''(s_0) &= \frac{1}{s_0^{2}} = \left(1+\frac{u\cdot z+z\cdot v}{\sqrt n}+\frac{u\cdot v}{n}\right)^2,
\end{align}
as $n\to\infty$, uniformly for $|z|=1$ and $u, v\in\mathbb C^d$ with $u, v = \mathcal O(n^\nu)$. As before, rather than using $u, v$, we shall consider the asymptotic behavior of $I_{n,0}^d(1+\Delta)$, where we assume that $\Delta = \mathcal O(n^{-\frac{1}{2}+\nu})$ as $n\to\infty$. In the end, we then have to substitute
\begin{align} \label{eq:defDelta}
\Delta = \frac{u\cdot z+z\cdot v}{\sqrt n}+\frac{u\cdot v}{n}. 
\end{align}

Having prepared all the ingredients necessary for the steepest descent analysis, we shall explicitly derive the asymptotic behavior of $I_{n,\tau}^d$ in the next section.

\section{Steepest descent analysis} \label{sec:steepest}

For $0<\tau<1$, our goal is to derive the asymptotic behavior of $I_{n,\tau}^d(\hat z_\pm + \sqrt{\hat z_\pm^2-2} \, \Delta_\pm)$ with a steepest descent analysis, under the assumption that $\Delta_\pm = \mathcal O(n^{-\frac{1}{2}+\nu})$ as $n\to\infty$. We deform the integration contour $\gamma_0$ to the circle $|s|=|a|^{-1} = e^{-\xi_+-\xi_-}$. We need to understand the behavior of $F(s)$ around the saddle point $s=a^{-1}$ and the pole (of the integrand) $s=\tau$. 

When $\tau=0$, the situation is comparable. Here we attempt to find the large $n$ behavior of $I_{n,0}^d(1+\Delta)$, under the condition that $\Delta=\mathcal O(n^{-\frac{1}{2}+\nu})$ as $n\to\infty$. The saddle point will coalesce with the pole $s=1$ in the limit $n\to\infty$. \\

\subsection{A saddle point coalescing with a pole} \label{sec:coalesc}

Under proper conditions, it is possible to perform a steepest descent analysis where a saddle point and a pole coalesce. 
Although our arguments can be extended to more general situations, we consider only the quadratic case on the real line. After the right preparations, Proposition \ref{prop:saddlePoleSteepest} turns out to be enough for our purposes. 

\begin{proposition} \label{prop:saddlePoleSteepest}
Let $\ell_1<0<\ell_2$. Uniformly for $p\in\mathbb C$ in compact sets 
with $\ell_1 <\Re p<\ell_2$, we have that 
\begin{align*}
\int_{\ell_1}^{\ell_2} e^{- n t^2} \frac{dt}{t-p} = -\pi i e^{- n p^2} \operatorname{erfc}(i \sqrt n p) + \mathcal O\left(\frac{e^{-n \ell_1^2} + e^{-n \ell_2^2}}{n}\right),
\end{align*}
as $n\to\infty$, where the path from $\ell_1$ to $\ell_2$ is such that $p$ is to the right of the path. 
\end{proposition}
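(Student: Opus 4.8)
The plan is to reduce the integral to a known representation of the complementary error function by deforming the contour through the saddle at $t=0$ and accounting for the pole at $t=p$ via a residue. First I would write $\frac{1}{t-p} = -\frac{1}{p}\cdot\frac{1}{1-t/p}$ only as a sanity check on the $p\to\infty$ regime, but the substantive step is the following: complete the square is unnecessary since the Gaussian weight $e^{-nt^2}$ is already centered at the saddle $t=0$, which is the key simplification the authors have engineered. I would like to push the contour of integration onto the steepest-descent ray through $t=0$, i.e. the real axis is already (essentially) that ray for $e^{-nt^2}$, so the main work is to handle the pole. Since the prescribed path keeps $p$ to its right, deforming the straight segment $[\ell_1,\ell_2]$ to a path that passes on the other side of $p$ picks up $-2\pi i\,\mathrm{Res}_{t=p} = -2\pi i\, e^{-np^2}$ (with a sign depending on orientation); comparing the two choices of path pins down the $-\pi i\, e^{-np^2}\operatorname{erfc}(i\sqrt n\,p)$ term, because the ``principal value''–type symmetric contribution is exactly what produces an error function.

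Concretely, I would proceed as follows. Step 1: substitute $t\mapsto t/\sqrt n$, turning the integral into $\sqrt n\int_{\sqrt n\,\ell_1}^{\sqrt n\,\ell_2} e^{-t^2}\,\frac{dt}{t-\sqrt n\,p}$, so that the endpoints run off to $\mp\infty$ and the pole sits at $q := \sqrt n\, p$ with $\Re q$ bounded away from the (now far-away) endpoints. Step 2: recall the classical identity that for $\Im q\neq 0$ (say $\Im q>0$, i.e.\ the pole above the real axis, consistent with ``$p$ to the right of an upward path''),
\begin{align*}
\int_{-\infty}^{\infty} e^{-t^2}\,\frac{dt}{t-q} = -\pi i\, e^{-q^2}\operatorname{erfc}(-iq),
\end{align*}
which is essentially the Hilbert transform of the Gaussian, i.e.\ the Faddeeva function $w(q)$. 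One proves this by differentiating both sides in $q$ and checking the resulting first-order ODE plus the behavior as $|q|\to\infty$, or by contour-shifting to the line $\Im t = \Im q$ and using $\frac{1}{t-q}=-\int_0^\infty e^{-(t-q)s}\,ds$ on the appropriate half-line. Step 3: replace the infinite contour by the finite one: the difference is $\sqrt n\big(\int_{-\infty}^{\sqrt n\ell_1} + \int_{\sqrt n\ell_2}^{\infty}\big) e^{-t^2}\frac{dt}{t-q}$, and since $\Re q$ is bounded while the tails start at $\pm\sqrt n\,\ell_i$, the denominator is bounded below there by a constant times $\sqrt n$, giving a bound $O\big(\frac{1}{\sqrt n}\int_{\sqrt n\,\ell_2}^\infty e^{-t^2}dt\big) = O\big(\frac{1}{\sqrt n}\cdot\frac{e^{-n\ell_2^2}}{\sqrt n}\big)=O(e^{-n\ell_2^2}/n)$ and similarly at $\ell_1$. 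Step 4: translate back to $p$, noting $e^{-q^2}=e^{-np^2}$ and $\operatorname{erfc}(-iq)=\operatorname{erfc}(-i\sqrt n\,p)$; adjusting the sign convention for the side of the path on which $p$ lies converts $\operatorname{erfc}(-i\sqrt n\,p)$ into $\operatorname{erfc}(i\sqrt n\,p)$ as in the statement (using $\operatorname{erfc}(z)+\operatorname{erfc}(-z)=2$ together with the residue $-2\pi i\,e^{-np^2}$ to switch sides).

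The main obstacle I anticipate is bookkeeping of signs and branch/side conventions: which sign of $\Im p$ (equivalently, which side of the path the pole sits on) corresponds to $\operatorname{erfc}(i\sqrt n\,p)$ versus $\operatorname{erfc}(-i\sqrt n\,p)$, and making sure the residue contribution is counted with the correct orientation so that the two descriptions agree. A secondary point needing care is \emph{uniformity} in $p$ over the strip $\ell_1+\delta<\Re p<\ell_2-\delta$: when $\Im p\to 0$ the pole approaches the contour, but the prescribed indentation keeps a fixed distance, and the only place $\Re p$ enters the error term is through the lower bound $|t-q|\gtrsim \sqrt n\,\delta$ on the tails, which is uniform by the strip condition; I would make this explicit to justify the uniform $O$-bound. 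Everything else is a routine Gaussian tail estimate.
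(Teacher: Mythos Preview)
Your strategy is correct and matches the paper's at the structural level: both arguments replace the segment $[\ell_1,\ell_2]$ by the whole real line at the cost of the stated $\mathcal O$ tail (using $|t-p|\ge\delta$ there and the Gaussian tail bound), and then identify the full-line Cauchy integral of the Gaussian with $-\pi i\,e^{-np^2}\operatorname{erfc}(i\sqrt n\,p)$. The difference lies in how that identity is established. You invoke the classical Faddeeva/plasma-dispersion formula and propose to verify it by the first-order ODE in $q$ (or the Laplace representation of $1/(t-q)$), which is slick and standard. The paper instead shifts the contour to the horizontal line through $p$, picking up a half-residue $\pi i\,e^{-np^2}$ and a principal-value integral $\operatorname{p.v.}\!\int_{\mathbb R} e^{-n(t+p)^2}\,dt/t$; it then expands the cross term $e^{-2npt}$ as a power series, integrates term by term against the odd Gaussian moments, and resums via the Legendre duplication formula to recognise the Taylor series of $\operatorname{erf}(i\sqrt n\,p)$. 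Your route is shorter and avoids the series manipulation; the paper's route is more self-contained in that it does not appeal to any named special-function identity.

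One small slip to fix: your substitution in Step~1 produces $\int_{\sqrt n\,\ell_1}^{\sqrt n\,\ell_2} e^{-t^2}\,\frac{dt}{t-\sqrt n\,p}$ without the extra prefactor $\sqrt n$ (the Jacobian $1/\sqrt n$ cancels the $\sqrt n$ coming from $t-p=(t'-\sqrt n\,p)/\sqrt n$). With that correction your tail bound indeed gives $\mathcal O(e^{-n\ell_j^2}/n)$ uniformly on the strip, exactly as claimed.
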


\begin{proof}
The conditions imply that $\ell_1+\delta <\Re p<\ell_2-\delta$ for some constant $\delta>0$. 
Without loss of generality, we assume that $\Im p<0$ and that the curve from $\ell_1$ to $\ell_2$ is a line segment. First we note that
\begin{align*}
\left|\int_{-\infty}^{\ell_1} e^{- n t^2} \frac{dt}{t-p} + \int_{\ell_2}^\infty e^{- n t^2} \frac{dt}{t-p}\right|
&\leq \frac{1}{\Re p - \ell_1} \sqrt{\frac{\pi}{n}} \operatorname{erfc}\left(\sqrt n \ell_1\right) 
+ \frac{1}{\ell_2-\Re p} \sqrt{\frac{\pi}{n}} \operatorname{erfc}\left(\sqrt n \ell_2 \right)\\
&\leq \frac{1}{\delta} \frac{e^{-n \ell_1^2}}{\pi n \ell_1}
+\frac{1}{\delta} \frac{e^{-n \ell_2^2}}{\pi n \ell_2},
\end{align*}
as $n\to\infty$. Hence, we may just as well replace $\ell_1$ by $-T$ and $\ell_2$ by $T$ for some large $T>0$, and integrate over the line segment $[-T, T]$ in what follows. We create a closed integration contour by adding the line segments $[\Im p -T, p -\varepsilon]$, $[p+\varepsilon, \Im p+T]$ and $\pm T + i [\Im p, 0]$, and the upper semicircle from $p+\varepsilon$ to $p-\varepsilon$ for some small $\varepsilon>0$ (see Figure \ref{Fig1}). Performing the corresponding contour integration, and letting $T\to\infty$ and $\varepsilon\to 0$, we infer that
\begin{align*}
\int_{-\infty}^\infty e^{- n t^2} \frac{dt}{t-p}  = \pi i e^{- n p^2} - \operatorname{p. v. } \int_{-\infty}^\infty e^{- n (t+p)^2} \frac{dt}{t}.
\end{align*}
\begin{figure}[t]
\begin{center}
\begin{tikzpicture}[>=latex]
	\draw[-] (-7,0)--(7,0);
	\draw[-] (0,-4)--(-0,2);


	\node[above] at (0.2,0) {$0$};	
	\node[above] at (4.2,0) {$T$};	
	\node[above] at (-4.2,0) {$-T$};	
	\node[below] at (4.2,-2.1) {$T+i\Im p$};	
	\node[below] at (-4.2,-2.1) {$-T+i\Im p$};
	\node[below] at (1.05,-2.15) {$p$};	
	\node[below] at (0.4,-2.1) {$p-\varepsilon$};	
	\node[below] at (1.7,-2.1) {$p+\varepsilon$};	
	
	\draw[->, ultra thick] (-2,0) -- (-1.9,0);
	\draw[-, ultra thick] (-4,-2) -- (-4,0);
	\draw[-, ultra thick] (4,-2) -- (4,0);
	\draw[-, ultra thick] (-4,0) -- (4,0);
	\draw[-, ultra thick] (-4,-2) -- (0.5,-2);
	\draw[-, ultra thick] (1.5,-2) -- (4,-2);
	
	\draw[fill] (0,0) circle (0.08cm);
	\draw[fill] (-4,0) circle (0.08cm);
	\draw[fill] (4,0) circle (0.08cm);
	\draw[fill] (-4,-2) circle (0.08cm);
	\draw[fill] (4,-2) circle (0.08cm);
	\draw[fill] (1,-2) circle (0.08cm);
	\draw[fill] (0.5,-2) circle (0.08cm);
	\draw[fill] (1.5,-2) circle (0.08cm);
	
	\draw[-, ultra thick] (1.5,-2) arc (0:180:0.5);

%
%
	

	
\end{tikzpicture}
\caption{The integration contour used in the proof of Proposition \ref{prop:saddlePoleSteepest}. \label{Fig1}}
\end{center}
\end{figure}
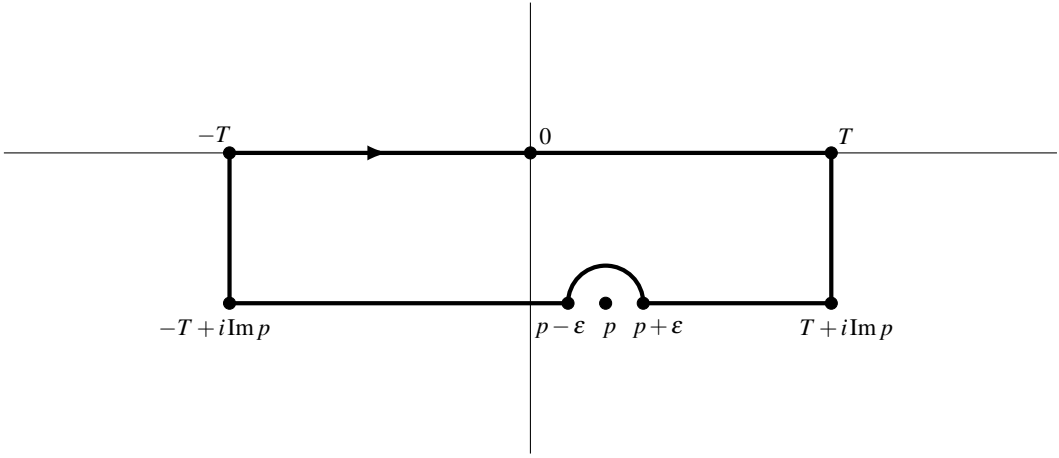
We can rewrite the principle value integral as
\begin{align*}
\operatorname{p. v. } \int_{-\infty}^\infty e^{- n (t+p)^2} \frac{dt}{t}
&= e^{-n p^2} \int_{-\infty}^\infty e^{- n t^2} \sum_{k=0}^\infty \frac{(-2 n p t)^{2k+1}}{(2 k+1)!} \frac{dt}{t}\\
&= \frac{e^{-n p^2}}{\sqrt n} \sum_{k=1}^\infty \frac{(-2 \sqrt n p)^{2 k+1}}{(2k+1)!} \Gamma\left(k+1/2\right)\\
&= 2 e^{- n p^2} \sqrt \pi \sum_{k=0}^\infty \frac{(-\sqrt{n} p)^{2k+1}}{k! (2k+1)} 
= -\pi i e^{- n p^2} \operatorname{erf}(i \sqrt{n} p)
\end{align*}
where we used the Legendre duplication formula $\Gamma(s)\Gamma(s+1/2) = 2^{1-2s} \sqrt\pi \, \Gamma(2s)$
for $s=k+1/2$, and the Taylor series expansion for the error function. Together with the half residue from before, this yields the complementary error function.
\end{proof}

\subsection{Conformal map}

Let us consider $\tau=0$. By \eqref{eq:F''ins0} we know that $F''(s_0)$ is approximately equal to $1$. Hence the steepest descent path going through $s_0$ is almost a a vertical line segment in a neighborhood of $s=s_0$ (and $s=1$). There is a conformal map transforming this steepest descent path to a real line segment, and this, after some rewriting, will allow the application of Proposition \ref{prop:saddlePoleSteepest}. Then by a simple Taylor series expansion of $F(s) = (1+\Delta) s - \log s$, we have
\begin{align*}
F(s) = F(s_0) + \sum_{k=2}^\infty \frac{(-1)^k}{k} \left(\frac{s-s_0}{s_0}\right)^k.
\end{align*}
The convergence radius of this series is given by $|s_0| = 1 + \mathcal O(n^{-\frac{1}{2}+\nu})$. For $n$ large enough, we may assume that the convergence radius $\rho$ is at least, say, $\frac{1}{2}$. Clearly then, we can find a conformal map $\phi : D(s_0; \rho) \to \mathbb C$ such that
\begin{align*}
F(s) - F(s_0) = - \phi(s)^2.
\end{align*}
Here $D(s_0; \rho)$ denotes the disc with radius $\rho$, centered at $s_0$. 
There are two such maps, but let us take the one with expansion
\begin{align} \label{eq:defConformalMap0}
\phi(s) = -\frac{i}{\sqrt 2} \frac{s-s_0}{s_0} + \frac{i}{3\sqrt 2} \left(\frac{s-s_0}{s_0}\right)^2 + \mathcal O((s-s_0)^3). 
\end{align}
Since the saddle point converges to $1$, we want to find out how $\phi$ behaves there. To this end, we have the following lemma. 

\begin{lemma}[$\tau=0$] \label{lem:asympPhi0}
Let $\phi$ be as defined above. Then we have as $n\to\infty$ that
\begin{align} \label{eq:lem:asympPhi0}
i \phi(1) = \frac{\Delta}{\sqrt 2} - \frac{\Delta^2}{3 \sqrt 2} + \mathcal O(n^{-\frac{3}{2}+3\nu}).
\end{align}
\end{lemma}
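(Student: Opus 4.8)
The plan is to remove all $n$-dependence by passing to the shifted and rescaled variable $w := (s-s_0)/s_0$, in which $\phi$ becomes the pullback of a single fixed analytic function, and then to exploit that the pole $s=1$ corresponds to the especially simple value $w=\Delta$. First I would rewrite $F(s)-F(s_0)$ in this variable. Using $F(s) = (1+\Delta)s - \log s$ together with $s_0 = (1+\Delta)^{-1}$, so that $(1+\Delta)s_0 = 1$ and $s/s_0 = 1+w$, a one-line computation gives
\begin{align*}
F(s) - F(s_0) = (1+\Delta)(s-s_0) - \log\frac{s}{s_0} = w - \log(1+w) = \sum_{k\ge 2}\frac{(-1)^k}{k}\,w^k,
\end{align*}
which is exactly the expansion recorded before the lemma, and whose radius of convergence in $w$ is $1$, independently of $n$.

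Next I would identify $\phi$ with the pullback of a universal function. From $F(s)-F(s_0) = -\phi(s)^2$ we get $\phi(s)^2 = \log(1+w) - w$, hence $\phi(s)^2/w^2 = g(w)$, where $g(w) := \bigl(\log(1+w)-w\bigr)/w^2$ extends analytically to $\{|w|<1\}$ with $g(0) = -\tfrac{1}{2}\neq 0$. Therefore there is an absolute constant $\rho_0\in(0,1]$ and an analytic function $\Phi$ on $\{|w|<\rho_0\}$, namely $\Phi(w) = w\sqrt{g(w)}$ with the branch of the square root pinned down by $\Phi'(0) = -i/\sqrt2$, such that $\phi(s) = \Phi\bigl((s-s_0)/s_0\bigr)$; this $\Phi$ has the Taylor expansion
\begin{align*}
\Phi(w) = -\frac{i}{\sqrt2}\,w + \frac{i}{3\sqrt2}\,w^2 + \mathcal O(w^3)
\end{align*}
that fixes the choice in \eqref{eq:defConformalMap0}, the $\mathcal O(w^3)$ having an absolute implied constant, and — crucially — $\Phi$ depends on none of $n$, $z$, $u$, $v$.

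Finally I would evaluate at $s=1$. Since $s_0^{-1} = 1+\Delta$, the value of $w$ at $s=1$ is $s_0^{-1}-1 = \Delta$ exactly, so $\phi(1) = \Phi(\Delta)$. Because $\Delta = \mathcal O(n^{-1/2+\nu})\to 0$ uniformly for $|z|=1$ and $u,v = \mathcal O(n^\nu)$, for all large $n$ we have $|\Delta| < \rho_0/2$, and Taylor's theorem for the fixed function $\Phi$ gives $\Phi(\Delta) = -\tfrac{i}{\sqrt2}\Delta + \tfrac{i}{3\sqrt2}\Delta^2 + \mathcal O(\Delta^3)$ with an absolute implied constant (bounded e.g. by $\tfrac{1}{6}\sup_{|w|\le\rho_0/2}|\Phi'''(w)|$). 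Multiplying by $i$ and using $\Delta^3 = \mathcal O(n^{-3/2+3\nu})$ yields
\begin{align*}
i\phi(1) = \frac{\Delta}{\sqrt2} - \frac{\Delta^2}{3\sqrt2} + \mathcal O(n^{-3/2+3\nu}),
\end{align*}
uniformly in the stated range, which is \eqref{eq:lem:asympPhi0}.

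I do not anticipate a genuine obstacle: once one observes that in the variable $w$ the conformal map is $n$-independent and that $s=1$ lands precisely on $w=\Delta$, the lemma reduces to a second-order Taylor expansion of a universal analytic function. The only points needing a little care are verifying that the implied constant in the $\mathcal O(w^3)$ error for $\Phi$ is absolute — so that the final estimate is uniform in $z,u,v$ — and noting that $n$ must be large enough for $\Delta$ to lie inside the fixed disc on which $\Phi$ is analytic.
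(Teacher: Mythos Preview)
Your proof is correct and follows essentially the same route as the paper: both plug $s=1$ into the expansion \eqref{eq:defConformalMap0} of $\phi$ in the variable $(s-s_0)/s_0$, observe that this variable equals $s_0^{-1}-1=\Delta$ exactly, and then use $\Delta^3=\mathcal O(n^{-3/2+3\nu})$. Your additional observation that the map $\Phi$ is $n$-independent (so the implied constant in the $\mathcal O(w^3)$ term is absolute and the estimate is automatically uniform) is a worthwhile clarification that the paper leaves implicit.
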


\begin{proof}
From \eqref{eq:defConformalMap} and \eqref{eq:defSaddlePoint0n} we have
\begin{align*}
i \phi(1) = \frac{1}{\sqrt 2} (s_0^{-1}-1) - \frac{1}{3\sqrt 2} (s_0^{-1}-1)^2+\mathcal O((s_0^{-1}-1)^3)
= \frac{\Delta}{\sqrt 2} - \frac{\Delta^2}{3\sqrt 2} +\mathcal O(n^{-\frac{3}{2}+3\nu}),
\end{align*}
as $n\to\infty$, where we used that $s_0^{-1} = 1+\Delta$.  
\end{proof}

Let us move to the case $0<\tau<1$, and derive an analogous result. 
As $n$ becomes large, by \eqref{eq:F''a-1Explicit}, we have that
\begin{align*}
F''(a^{-1}) = 2 \tau^{-2} \frac{|\sinh(\xi_\tau+i\eta)|^2}{\sinh(2\xi_\tau)} + \mathcal O(n^{-\frac{1}{2}+\nu})
\end{align*}
as $n\to\infty$. This is close to being a positive real number, hence the steepest descent direction is almost a vertical line in a neighborhood of $s=a^{-1}$ (and $s=\tau$). Again, there is a conformal map transforming this steepest descent path to a real line segment. Explicitly, we can write
\begin{align*}
F(s) - F(a^{-1}) &= \sum_{k=2}^\infty \left(\frac{(z_++z_-)^2}{2} \frac{(-1)^k}{(1+a^{-1})^{k+1}}-\frac{(z_+-z_-)^2}{2} \frac{1}{(1-a^{-1})^{k+1}}+\frac{(-a)^k}{k}\right) (s-a^{-1})^k
\end{align*}
The convergence radius of this series is given by $\min(\tau, 1-\tau)+\mathcal O(n^{-\frac{1}{2}+\nu})$. In particular, for $n$ large enough, we may assume that it has convergence radius at least $\rho := \frac{1}{2} \min(\tau, 1-\tau)$, uniformly for $\hat z \in \partial \mathscr E_\tau$. Thus, for $n$ large enough, we have
\begin{align*}
F(s) - F(a^{-1}) &= - \phi(s)^2,
\end{align*}
for some conformal map $\phi:D(a^{-1}, \rho)\to \mathbb C$, that depends on $n$. There are two such conformal maps, but we choose the one with expansion
\begin{align} \label{eq:defConformalMap}
\phi(s) = -i \sqrt{\frac{1}{2} F''(a^{-1})} (s-a^{-1}) - \frac{i}{6\sqrt 2} \frac{F'''(a^{-1})}{\sqrt{F''(a^{-1})}} (s-a^{-1})^2+ \mathcal O((s-a^{-1})^3).
\end{align}
By possibly taking $\rho$ smaller, the reader may convince oneself that the length of the real line segment, corresponding to the inverse image of the steepest descent contour, can be arranged to be bounded, both from below and from above, by a positive constant, uniformly for $n$ and $\hat z\in\partial\mathscr E_\tau$. In particular, we will consider a single line segment $[-r,r]$, with $r>0$, for all $\hat z\in\partial\mathscr E_\tau$ and all $n$, for which $\phi^{-1}([-r,r]) \subset D(a^{-1}, \rho)$ lies on the steepest descent contour through $s=a^{-1}$. 

\begin{lemma}[$0<\tau<1$] \label{lem:asympPhi}
We have
\begin{align} \label{eq:asympPhitaup}
i\, \phi(\tau)
= \frac{|\sinh(\xi_\tau+i\eta)|}{\sqrt{\sinh 2\xi_\tau}} (\Delta_++\Delta_-)
- \frac{\sqrt{\sinh 2\xi_\tau}}{|\sinh(\xi_\tau+i\eta)|} \frac{\Delta_+^2-\Delta_+\Delta_-+\Delta_-^2}{6}
+ \mathcal O(n^{-\frac{3}{2}+3\nu}),
\end{align}
as $n\to\infty$, uniformly for $\hat z \in \partial \mathscr E_\tau$.
\end{lemma}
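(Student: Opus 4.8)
The plan is to follow the pattern of the proof of Lemma~\ref{lem:asympPhi0}: evaluate the defining expansion \eqref{eq:defConformalMap} of $\phi$ at $s=\tau$ (which lies in $D(a^{-1},\rho)$ for $n$ large, since $\tau-a^{-1}=\mathcal O(n^{-\frac12+\nu})$ by Lemma~\ref{lem:abAsympDelta}), and estimate each ingredient. Because the coefficients of \eqref{eq:defConformalMap} are built from $F''(a^{-1}),F'''(a^{-1}),\dots$ with $a^{-1}=\tau+o(1)$ staying in a fixed compact subset of $\mathbb C\setminus\{0,\pm1\}$, those coefficients are bounded uniformly, so the cubic remainder contributes only $\mathcal O((\tau-a^{-1})^3)=\mathcal O(n^{-\frac32+3\nu})$ and it suffices to control
\begin{align*}
i\,\phi(\tau)=\sqrt{\tfrac12 F''(a^{-1})}\,(\tau-a^{-1})+\frac{1}{6\sqrt2}\,\frac{F'''(a^{-1})}{\sqrt{F''(a^{-1})}}\,(\tau-a^{-1})^2+\mathcal O(n^{-\frac32+3\nu}).
\end{align*}
For this I need $\tau-a^{-1}$ to second order in $\Delta_\pm$, $\sqrt{\tfrac12 F''(a^{-1})}$ to first order, and $F'''(a^{-1})$ only to leading order, since it multiplies a quantity of order $n^{-1+2\nu}$.

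The three ingredients come out as follows. First, inverting \eqref{eq:atoHighAccuracy} via $a^{-1}=\tau\bigl(1+(\tau a-1)\bigr)^{-1}$ gives $\tau-a^{-1}=\tau\bigl[(\tau a-1)-(\tau a-1)^2\bigr]+\mathcal O(n^{-\frac32+3\nu})$, an explicit quadratic polynomial in $\Delta_+,\Delta_-$ whose coefficients involve $\coth(\xi_\tau\pm i\eta)$. Second, taking square roots in \eqref{eq:F''a-1Explicit} and using $a^2=\tau^{-2}\bigl(1+2(\Delta_++\Delta_-)\bigr)+\mathcal O(n^{-1+2\nu})$ from \eqref{eq:a+-1Deltas} yields $\sqrt{\tfrac12 F''(a^{-1})}=\tau^{-1}\tfrac{|\sinh(\xi_\tau+i\eta)|}{\sqrt{\sinh2\xi_\tau}}\bigl(1+(\text{linear in }\Delta_\pm)\bigr)+\mathcal O(n^{-1+2\nu})$. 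Third, differentiating \eqref{eq:defF} three times and evaluating at $s=\tau+\mathcal O(n^{-\frac12+\nu})$ with $z_\pm=\hat z_\pm+\mathcal O(n^{-\frac12+\nu})$ (recalling $\hat z_\pm=\sqrt2\cosh(\xi_\tau\pm i\eta)$, so $(z_++z_-)^2$ and $(z_+-z_-)^2$ have known limits) gives $F'''(a^{-1})$ as an explicit function of $\xi_\tau,\eta$ up to $\mathcal O(n^{-\frac12+\nu})$.

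Substituting all three expansions into the displayed two-term formula and multiplying out, the linear-in-$\Delta_\pm$ part collapses at once to $\tfrac{|\sinh(\xi_\tau+i\eta)|}{\sqrt{\sinh2\xi_\tau}}(\Delta_++\Delta_-)$. The quadratic part receives three contributions: the $(\tau a-1)^2$ correction inside $\tau-a^{-1}$, the product of the linear parts of $\sqrt{\tfrac12 F''(a^{-1})}$ and of $\tau-a^{-1}$, and the $F'''$ term. The hard part will be the purely algebraic check that these three pieces --- each carrying a different mix of $\coth(\xi_\tau\pm i\eta)$ and $\coth2\xi_\tau$ --- recombine into the single symmetric term $-\tfrac{\sqrt{\sinh2\xi_\tau}}{|\sinh(\xi_\tau+i\eta)|}\tfrac{\Delta_+^2-\Delta_+\Delta_-+\Delta_-^2}{6}$; here one rewrites everything through $\sinh$ and $\cosh$ of $\xi_\tau\pm i\eta$, uses $|\sinh(\xi_\tau+i\eta)|^2=\sinh(\xi_\tau+i\eta)\sinh(\xi_\tau-i\eta)$ (valid since $\overline{\sinh(\xi_\tau+i\eta)}=\sinh(\xi_\tau-i\eta)$ for real $\xi_\tau,\eta$) together with the addition formula for $\sinh2\xi_\tau$, and it is easy to make a sign slip. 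Uniformity over $\hat z\in\partial\mathcal E_\tau$ is then automatic: with $0<\tau<1$ fixed, $\xi_\tau>0$ is a fixed constant and $\eta\in[0,\tfrac\pi2]$, so $\sinh(\xi_\tau\pm i\eta)$ and $\cosh(\xi_\tau\pm i\eta)$ stay bounded and bounded away from $0$; combined with the uniform bound $\Delta_\pm=\mathcal O(n^{-\frac12+\nu})$, every $\mathcal O$-estimate above is uniform. One should also confirm the branch of the square root in \eqref{eq:defConformalMap} makes $i\phi(\tau)$ carry the stated sign, which holds because for large $n$ the map $\phi$ is close to $s\mapsto-i\sqrt{\tfrac12 F''(a^{-1})}(s-a^{-1})$ with $F''(a^{-1})$ close to the positive real number $2\tau^{-2}\tfrac{|\sinh(\xi_\tau+i\eta)|^2}{\sinh2\xi_\tau}$.
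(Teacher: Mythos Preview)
Your plan is correct and follows essentially the same approach as the paper: plug $s=\tau$ into the expansion \eqref{eq:defConformalMap}, control $\tau-a^{-1}$ to second order via \eqref{eq:atoHighAccuracy}, $\sqrt{\tfrac12 F''(a^{-1})}$ to first order via \eqref{eq:F''a-1Explicit}, and $F'''(a^{-1})$ to leading order, then combine. The only organizational difference is that the paper computes the product $\sqrt{\tfrac12 F''(a^{-1})}(\tau-a^{-1})$ directly as one expression (which makes some cancellations between the $\coth$ terms appear a bit sooner), and handles the $F'''$ term by rewriting it through $(\Re\hat z_+)^2/(1+\tau)^4+(\Im\hat z_+)^2/(1-\tau)^4$ and the identity $\tfrac{(\Re\hat z_+)^2}{(1+\tau)^4}+\tfrac{(\Im\hat z_+)^2}{(1-\tau)^4}=\tfrac{|\sinh(\xi_\tau+i\eta)|^2}{\tau\sinh(2\xi_\tau)(1-\tau^2)}$, rather than expanding $F'''$ in hyperbolic functions from scratch; otherwise the two computations are the same.
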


\begin{proof}
Using \eqref{eq:F''a-1Explicit} and \eqref{eq:atoHighAccuracy}, we infer that
\begin{multline} \label{eq:F''a-1Explicit2}
\frac{\sqrt{\sinh 2\xi_\tau}}{|\sinh(\xi_\tau+i\eta)|} \sqrt{\frac{1}{2} F''(a^{-1})} (\tau-a^{-1}) =
\Delta_++\Delta_-+\frac{(\Delta_++\Delta_-)^2}{2}
- \coth(\xi_\tau+i\eta) \frac{\Delta_+^2}{2}
- \coth(\xi_\tau-i\eta) \frac{\Delta_-^2}{2}\\
+\frac{1}{2} (\Delta_++\Delta_-)
\left(\coth(\xi_\tau+i\eta) \Delta_+
 + \coth(\xi_\tau+i\eta) \Delta_-
 - \coth(2\xi_\tau) (\Delta_++\Delta_-)\right)
 +\mathcal O(n^{-\frac{3}{2}+3\nu})\\
 = \Delta_++\Delta_--\frac{\tau^2}{1-\tau^2} (\Delta_++\Delta_-)^2
 + \frac{\sinh(2\xi_\tau)}{|\sinh(\xi_\tau+i\eta)|^2} \frac{\Delta_+\Delta_-}{2}
 +\mathcal O(n^{-\frac{3}{2}+3\nu}).
\end{multline}
To arrive at the last line, we used the identities
\begin{align*}
\coth(2\xi_\tau) - 1 &= \frac{2\tau^2}{1-\tau^2},\\
\coth(\xi_\tau+i\eta)+\coth(\xi_\tau-i\eta) &= \frac{\sinh(2\xi_\tau)}{|\sinh(\xi_\tau+i\eta)|^2}.
\end{align*}
Also, we have
\begin{align} \label{eq:F''a-1Explicit3}
\frac{i}{6\sqrt 2}\frac{F'''(a^{-1})}{\sqrt{F''(a^{-1})}} (\tau-a^{-1})^2
= i\tau \frac{\sqrt{\sinh 2\xi_\tau}}{|\sinh(\xi_\tau+i\eta)|} \left(\frac{(\Re \hat z_+)^2}{(1+\tau)^4}+\frac{(\Im \hat z_+)^2}{(1-\tau)^4} - \frac{1}{6\tau^3}\right) \tau^2 (\Delta_++\Delta_-)^2
+\mathcal O(n^{-\frac{3}{2}+3\nu}).
\end{align}
This we can simplify by noticing that
\begin{align} \label{eq:F''a-1Explicit4}
\frac{(\Re \hat z_+)^2}{(1+\tau)^4}+\frac{(\Im \hat z_+)^2}{(1-\tau)^4}
= \frac{\cos^2 \eta}{2\tau (1+\tau)^2}+\frac{\sin^2 \eta}{2\tau (1-\tau)^2}
= \frac{1+\tau^2 - 2\tau \cos 2\eta}{2\tau (1-\tau^2)^2}
= \frac{|\sinh(\xi_\tau+i\eta)|^2}{\tau \sinh(2\xi_\tau) (1-\tau^2)}.
\end{align}
Combining \eqref{eq:F''a-1Explicit2} with \eqref{eq:F''a-1Explicit3} and \eqref{eq:F''a-1Explicit4}, and the defining relation \eqref{eq:defConformalMap} for $\phi$, we conclude that
\begin{multline*}
i\phi(\tau) = \frac{|\sinh(\xi_\tau+i\eta)|}{\sqrt{\sinh 2\xi_\tau}} (\Delta_++\Delta_-)
+ \frac{1}{2}\frac{\sqrt{\sinh 2\xi_\tau}}{|\sinh(\xi_\tau+i\eta)|} \Delta_+\Delta_-\\
+ \left(-\frac{|\sinh(\xi_\tau+i\eta)|}{\sqrt{\sinh 2\xi_\tau}} \frac{\tau^2}{1-\tau^2}+\frac{\sqrt{\sinh 2\xi_\tau}}{|\sinh(\xi_\tau+i\eta)|} 
\left(\frac{\tau^2|\sinh(\xi_\tau+i\eta)|^2}{\sinh(2\xi_\tau) (1-\tau^2)} - \frac{1}{6}\right)\right) (\Delta_++\Delta_-)^2
+\mathcal O(n^{-\frac{3}{2}+3\nu})\\
=  \frac{|\sinh(\xi_\tau+i\eta)|}{\sqrt{\sinh 2\xi_\tau}} (\Delta_++\Delta_-)
+ \frac{1}{2}\frac{\sqrt{\sinh 2\xi_\tau}}{|\sinh(\xi_\tau+i\eta)|} \Delta_+\Delta_-
- \frac{1}{6} \frac{\sqrt{\sinh 2\xi_\tau}}{|\sinh(\xi_\tau+i\eta)|} (\Delta_+^2+2\Delta_+\Delta_-+\Delta_-^2)
+\mathcal O(n^{-\frac{3}{2}+3\nu}),
\end{multline*}
as $n\to\infty$, uniformly for $\hat z\in\partial \mathscr E_\tau$, and this yields the result. 
\end{proof}

\subsection{Asymptotic behavior of $I_{n,\tau}^d$}

In this section, we perform a steepest descent analysis to obtain the asymptotic behavior of $I_{n,\tau}^d$, both for $0<\tau<1$ and $\tau=0$. Let us start with the latter, $\tau=0$. Here, we simply deform $\gamma_0$ to the steepest descent path through $s=s_0$. This is allowed because the steepest descent path ends in $-\infty$ as $n\to\infty$, from both directions (to see this, one should solve the contour lines $\Im(s_0^{-1} s - \log s) = \arg s_0$).  We may then write
\begin{align*}
I_{n,0}^d(z\cdot w) &= -\frac{1}{2\pi i} \oint_{\gamma_0} e^{n F(s)} 
\frac{1}{s-1}\left(1-\frac{\phi'(s)(s-1)}{\phi(s)-\phi(1)}\right) ds
-\frac{e^{n F(s_0)}}{2\pi i} \oint_{\gamma_0} \frac{e^{-n \phi(s)^2}}{s-1} \frac{\phi'(s)(s-1)}{\phi(s)-\phi(1)} ds\\
&= -\frac{1}{2\pi i} \oint_{\gamma_0} e^{n F(s)} 
\frac{1}{s-1}\left(1-\frac{\phi'(s)(s-1)}{\phi(s)-\phi(1)}\right) ds
-\frac{1}{2\pi i} e^{n F(s_0)} \int_{-\infty}^\infty \frac{e^{-n s^2}}{s-\phi(1)} ds.
\end{align*}
The first integral can be done with a standard steepest descent procedure, where we rescale variables locally around $s=s_0$ (with a factor $n^{-\frac{1}{2}+\nu}$). The reader may verify that the assumption that $0<\nu<\frac{1}{6}$ is necessary for this procedure to work (the cubic term has to be small). We see that
\begin{align*}
\frac{1}{s-1}\left(1-\frac{\phi'(s)(s-1)}{\phi(s)-\phi(1)}\right)
= \frac{\phi(s)-\phi(1)+\phi'(s)(1-s)}{(s-1)(\phi(s)-\phi(1))}
= - \frac{1}{2} \frac{\phi''(1)}{\phi'(1)}+\mathcal O(s-1)
= \frac{1}{3} + \mathcal O(\Delta) +\mathcal O(s-1).
\end{align*}
Performing the steepest descent analysis and invoking Proposition \ref{prop:saddlePoleSteepest}, we get
\begin{align} \label{eq:prop:asympIntaudX0pre}
I_{n,0}^d(z\cdot w)
= \frac{1}{2} e^{n F(1)} \operatorname{erfc}(i\sqrt n\phi(1)) 
+ \left(-\frac{1}{3}+\mathcal O(n^{-1+2\nu})\right) \frac{e^{n F(s_0)}}{\sqrt{2\pi F''(s_0) n}}.
\end{align}
We thus get the following proposition.

\begin{proposition}[$\tau=0$] \label{prop:asympIntaudX0}
We have as $n\to\infty$ that
\begin{align} \label{eq:prop:asympIntaudX0}
e^{-n F(1)} I_{n,0}^d(1+\Delta)
= \frac{1}{2} \operatorname{erfc}\left(\sqrt n\frac{\Delta}{\sqrt 2}\right)
+ \frac{1}{\sqrt{2\pi n}} \exp\left(-n\frac{\Delta^2}{2}\right)
\left(\frac{n \Delta^2 - 1}{3}+\mathcal O(n^{-\frac{1}{2}+3\nu})\right).
\end{align}
\end{proposition}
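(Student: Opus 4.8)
The plan is to read the proposition directly off the asymptotic identity \eqref{eq:prop:asympIntaudX0pre} established just above, namely
\[
I_{n,0}^d(1+\Delta) = \tfrac12 e^{nF(1)}\operatorname{erfc}\!\big(i\sqrt n\,\phi(1)\big) + \Big(-\tfrac13 + \mathcal O(n^{-1+2\nu})\Big)\frac{e^{nF(s_0)}}{\sqrt{2\pi F''(s_0)\,n}},
\]
by multiplying through by $e^{-nF(1)}$ and simplifying the two terms on the right with the explicit formulas for $F$, $s_0$, $F''(s_0)$ and $\phi(1)$ from the set-up. Throughout one uses the standing assumption $\Delta = \mathcal O(n^{-1/2+\nu})$ with $0<\nu<\tfrac16$.

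For the subleading term I would first evaluate $F(1)$ and $F(s_0)$ directly from $F(s)=(1+\Delta)s-\log s$ and $s_0=(1+\Delta)^{-1}$, getting $F(1)=1+\Delta$ and $F(s_0)=1+\log(1+\Delta)$, so $F(s_0)-F(1)=\log(1+\Delta)-\Delta=-\tfrac12\Delta^2+\mathcal O(\Delta^3)$ and hence $e^{n(F(s_0)-F(1))}=e^{-n\Delta^2/2}\big(1+\mathcal O(n\Delta^3)\big)=e^{-n\Delta^2/2}\big(1+\mathcal O(n^{-1/2+3\nu})\big)$. Together with $F''(s_0)=s_0^{-2}=(1+\Delta)^2=1+\mathcal O(n^{-1/2+\nu})$, the subleading term becomes $-\tfrac{1}{3\sqrt{2\pi n}}\,e^{-n\Delta^2/2}\big(1+\mathcal O(n^{-1/2+3\nu})\big)$, which supplies the ``$-1$'' inside $\tfrac{n\Delta^2-1}{3}$.

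For the error-function term, $e^{-nF(1)}$ leaves $\tfrac12\operatorname{erfc}(i\sqrt n\,\phi(1))$, into which I substitute Lemma~\ref{lem:asympPhi0} in the form $i\sqrt n\,\phi(1)=\sqrt n\,\tfrac{\Delta}{\sqrt2}-\sqrt n\,\tfrac{\Delta^2}{3\sqrt2}+\mathcal O(n^{-1+3\nu})$. Writing $\operatorname{erfc}(i\sqrt n\,\phi(1))-\operatorname{erfc}\!\big(\sqrt n\,\tfrac{\Delta}{\sqrt2}\big)=-\tfrac{2}{\sqrt\pi}\int_{\sqrt n\Delta/\sqrt2}^{\,i\sqrt n\phi(1)} e^{-t^2}\,dt$ and parametrizing this short (possibly complex) segment, one checks that $e^{-t^2}=e^{-n\Delta^2/2}\big(1+\mathcal O(n^{-1/2+3\nu})\big)$ along it, so the difference equals $\tfrac{2}{\sqrt\pi}\,e^{-n\Delta^2/2}\,\tfrac{\sqrt n\Delta^2}{3\sqrt2}\big(1+o(1)\big)=\tfrac{2}{\sqrt{2\pi n}}\cdot\tfrac{n\Delta^2}{3}\,e^{-n\Delta^2/2}\big(1+o(1)\big)$; halving it gives the ``$n\Delta^2$'' term. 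Adding the two contributions and collecting error terms produces the stated formula.

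The main obstacle is the error bookkeeping in this last step: one must expand around the exact point $\sqrt n\,\Delta/\sqrt2$ (rather than around $i\sqrt n\,\phi(1)$) so the Gaussian factor $e^{-n\Delta^2/2}$ is kept intact, estimate $e^{-t^2}$ uniformly along a segment that wanders into the complex plane when $\Delta$ is not real, and check that every remaining contribution is genuinely of lower order than $\tfrac{n\Delta^2}{3}$, uniformly over $\Delta=\mathcal O(n^{-1/2+\nu})$ — it is precisely the hypothesis $\nu<\tfrac16$ (equivalently $n\Delta^3\to0$) that renders all the cubic remainders negligible, so that $0<\nu<\tfrac16$ is exactly what the steepest descent procedure behind \eqref{eq:prop:asympIntaudX0pre} needs.
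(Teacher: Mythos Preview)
Your proposal is correct and follows essentially the same route as the paper: both start from \eqref{eq:prop:asympIntaudX0pre}, multiply by $e^{-nF(1)}$, Taylor-expand $\operatorname{erfc}(i\sqrt n\,\phi(1))$ around $\sqrt n\,\Delta/\sqrt2$ via Lemma~\ref{lem:asympPhi0}, and simplify the Gaussian subleading term. The only cosmetic difference is that you compute $F(s_0)-F(1)=\log(1+\Delta)-\Delta$ directly, whereas the paper uses the equivalent identity $F(s_0)-F(1)=\phi(1)^2$; the error bookkeeping and the final assembly are identical.
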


\begin{proof}
By the preceeding, we have
\begin{align} \label{eq:asympInttau0F1}
e^{-n F(1)} I_{n,0}^d(1+\Delta)
= \frac{1}{2} \operatorname{erfc}(i\sqrt n\phi(1)) 
+ \left(-\frac{1}{3}+\mathcal O(n^{-1+2\nu})\right) \frac{e^{-n \phi(1)^2}}{\sqrt{2\pi F''(s_0) n}} .
\end{align}
By Taylor expanding and Proposition \ref{lem:asympPhi0}, we have
\begin{align} \label{eq:asympInttau0F1b}
\operatorname{erfc}(i\sqrt n \phi(1)) &= \operatorname{erfc}\left(\sqrt n\frac{\Delta}{\sqrt 2}\right)
- \frac{2}{\sqrt \pi} \exp\left(- n \frac{\Delta^2}{2}\right) \left(-\sqrt n\frac{\Delta^2}{3\sqrt 2}+\mathcal O(n^{-1+3\nu})\right),
\end{align}
and similarly
\begin{align} \label{eq:asympInttau0F1c}
e^{-n \phi(1)^2} &= \exp\left(-n\frac{\Delta^2}{2}\right) (1+\mathcal O(n^{-\frac{1}{2}+3\nu})). 
\end{align}
Plugging \eqref{eq:asympInttau0F1b} and \eqref{eq:asympInttau0F1c} in \eqref{eq:asympInttau0F1}, and using that $F''(s_0) = s_0^2$ and $s_0^{-1} = 1+\Delta$, we arrive at the result. 
\end{proof}
Now we turn to our integral \eqref{eq:defIntaud} for the case $0<\tau<1$, where we deform $\gamma_0$ to the circle $|s|=|a|^{-1}$. 
As the reader may verify, similar to what we did for $\tau=0$, we can rewrite our integral as
\begin{multline} \label{eq:InRewrite3Terms}
I_{n,\tau}^d(z_\pm) = 
- \frac{1}{2\pi i} \oint_{\gamma_0} \frac{e^{n F(s)}}{s-\tau} \left(\frac{1}{(1-s^2)^\frac{d}{2}}-\frac{1}{(1-\tau^2)^\frac{d}{2}}\right) ds
- \frac{1}{2\pi i} \frac{1}{(1-\tau^2)^\frac{d}{2}} \int_{\phi(-r)}^{\phi(r)} \frac{e^{n F(s)}}{s-\tau} \left(1 - \frac{\phi'(s) (s-\tau)}{\phi(s)-\phi(\tau)}\right) ds\\
-\frac{1}{2\pi i} \frac{e^{n F(a^{-1})}}{(1-\tau^2)^\frac{d}{2}}\int_{-r}^r \frac{e^{-n s^2}}{s-\phi(\tau)} ds + \mathcal O(e^{n F(a^{-1})} e^{-c n})
\end{multline}
uniformly for $\hat z\in\partial\mathscr E_\tau$ as $n\to\infty$, for some constant $c>0$. 
The first two integrals on the RHS of \eqref{eq:InRewrite3Terms} can be asymptotically solved with a standard steepest descent procedure, where we rescale variables locally around $s=a^{-1}$ (with a factor $n^{-\frac{1}{2}+\nu}$). Again, the reader may verify that the assumption that $0<\nu<\frac{1}{6}$ is necessary for this procedure to work. We note that
\begin{align*}
\frac{1}{s-\tau} \left(1 - \frac{\phi'(s) (s-\tau)}{\phi(s)-\phi(\tau)}\right)
= \frac{\phi(s)-\phi(\tau) + \phi'(s)(\tau-s)}{(s-\tau) (\phi(s)-\phi(\tau))}
= -\frac{1}{2} \frac{\phi''(\tau)}{\phi'(\tau)} + \mathcal O(s-\tau). 
\end{align*}
Hence, executing the steepest descent method for the first two integrals, and applying Proposition \ref{prop:saddlePoleSteepest}, we get
\begin{align*}
I_{n,\tau}^d(z_\pm) = \frac{1}{2} \frac{e^{n F(\tau)}}{(1-\tau^2)^\frac{d}{2}} \operatorname{erfc}(i\sqrt n\phi(\tau))
- \frac{1}{\sqrt{\pi n}} \frac{\tau \sqrt{\sinh 2\xi_\tau}}{|\sinh(\xi_\tau+i\eta)|} \frac{e^{n F(a^{-1})}}{(1-\tau^2)^\frac{d}{2}}
\left(\frac{d \tau}{1-\tau^2} - \frac{1}{2}\frac{\phi''(\tau)}{\phi'(\tau)}+\mathcal O(n^{-\frac{1}{2}+3\nu})\right).
\end{align*}
We used here that it did not matter whether $\phi(\tau)$ was to the right or the left of the contour, by application of the residue theorem, since we started with a closed contour. 
In particular, we have
\begin{align*}
(1-\tau^2)^\frac{d}{2} e^{-n F(\tau)} I_{n,\tau}^d(z_\pm) = \frac{1}{2} \operatorname{erfc}(i\sqrt n\phi(\tau))
- \frac{1}{\sqrt{\pi n}} \frac{\tau \sqrt{\sinh 2\xi_\tau}}{|\sinh(\xi_\tau+i\eta)|} e^{n \phi(\tau)^2}
\left(\frac{d \tau}{1-\tau^2} - \frac{1}{2}\frac{\phi''(\tau)}{\phi'(\tau)}+\mathcal O(n^{-\frac{1}{2}+3\nu})\right).
\end{align*}
as $n\to\infty$, uniformly for $\hat z \in \partial \mathscr E_\tau$. 

Summarizing, we have the following theorem.

\begin{proposition}[$0<\tau<1$] \label{prop:asympIntaudX}
Let $\hat z \in \partial \mathscr E_\tau$. Let $p$ be as in Lemma \ref{lem:asympPhi}. Then we have
\begin{multline} \label{eq:prop:asympIntaudX}
(1-\tau^2)^\frac{d}{2} e^{-n F(\tau)} I_{n,\tau}^d(z_\pm) 
= \frac{1}{2} \operatorname{erfc}\left(\frac{|\sinh(\xi_\tau+i\eta)|}{\sqrt{\sinh 2\xi_\tau}} \sqrt n (\Delta_++\Delta_-)\right)\\
+ \frac{1}{\sqrt{\pi n}} \frac{\sqrt{\sinh 2\xi_\tau}}{|\sinh(\xi_\tau+i\eta)|} \exp\left(-\frac{|\sinh(\xi_\tau+i\eta)|^2}{\sinh 2\xi_\tau} n(\Delta_++\Delta_-)^2\right)\\
\times \left(n\frac{\Delta_+^2-\Delta_+\Delta_-+\Delta_-^2}{3}  
- \frac{1}{6} \frac{\sinh 2\xi_\tau}{|\sinh(\xi_\tau+i\eta)|^2}
- \tau^2\frac{d-1}{1-\tau^2} + \mathcal O(n^{-\frac{1}{2}+3\nu})\right),
\end{multline}
as $n\to\infty$, uniformly for $\hat z \in \partial \mathscr E_\tau$. 
\end{proposition}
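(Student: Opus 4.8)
The plan is to assemble the pieces prepared in this section and then finish with a Taylor expansion that runs exactly parallel to the proof of Proposition~\ref{prop:asympIntaudX0}. First I would justify deforming the contour $\gamma_0$ in~\eqref{eq:defIntaud} to the circle $|s|=|a|^{-1}$: by Lemma~\ref{lem:abAsympDelta} the remaining saddles $a$, $b$, $b^{-1}$ lie strictly outside this circle once $n$ is large, while $a^{-1}$ lies on it and $\tau=a^{-1}+\mathcal O(n^{-\frac12+\nu})$, and by Theorem~\ref{lem:as=1}(i) one has $\Re F(s)\le\Re F(a^{-1})$ on the circle with equality only at $a^{-1}$ (the assumption $\Delta_\pm\to 0$ forces $\xi_+,\xi_->0$). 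Next I would introduce the conformal map $\phi$ from~\eqref{eq:defConformalMap} and split $I_{n,\tau}^d$ into the three pieces displayed in~\eqref{eq:InRewrite3Terms}: a closed contour integral and a finite line integral whose integrands have become regular at $s=\tau$ (the pole cancelled by subtracting $(1-\tau^2)^{-d/2}$ and $\phi'(s)(s-\tau)/(\phi(s)-\phi(\tau))$ respectively), together with the model integral $\int_{-r}^r e^{-ns^2}(s-\phi(\tau))^{-1}\,ds$ and an exponentially small tail. The model integral is exactly the situation addressed by Proposition~\ref{prop:saddlePoleSteepest} with $p=\phi(\tau)$; it produces the term $\tfrac12(1-\tau^2)^{-d/2}e^{nF(\tau)}\operatorname{erfc}(i\sqrt n\,\phi(\tau))$ up to exponentially small errors, and because we started from a genuine closed contour it does not matter on which side of the path $\phi(\tau)$ lies — the two choices differ by a residue that cancels against the corresponding ambiguity in the deformation.

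For the first two pieces I would rescale $s-a^{-1}=n^{-\frac12+\nu}\sigma$ around the saddle and run the standard Laplace method; here the hypothesis $0<\nu<\frac16$ is what keeps both the cubic corrections of the local expansion and the $\mathcal O(n^{-\frac12+3\nu})$ remainders in the brackets under control. Only the value of the regularised integrand at the saddle contributes to the order we need, and it equals $\frac{d\tau}{1-\tau^2}-\frac12\frac{\phi''(\tau)}{\phi'(\tau)}$; the first summand is the logarithmic derivative of $(1-s^2)^{-d/2}$ evaluated at $s=\tau$, which is why it vanishes when $d=1$ and ultimately produces the factor $d-1$, while $\phi''(\tau)/\phi'(\tau)$ is read off from the expansion~\eqref{eq:defConformalMap} together with the value of $F'''(a^{-1})$ already recorded in~\eqref{eq:F''a-1Explicit3}--\eqref{eq:F''a-1Explicit4}. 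Combining the three pieces yields the intermediate identity stated just before the proposition, after which it remains to expand. Using Lemma~\ref{lem:asympPhi} I would write $i\sqrt n\,\phi(\tau)$ as its linear term $\tfrac{|\sinh(\xi_\tau+i\eta)|}{\sqrt{\sinh 2\xi_\tau}}\sqrt n(\Delta_++\Delta_-)$ plus the explicit quadratic correction plus $\mathcal O(n^{-1+3\nu})$, Taylor expand $\operatorname{erfc}$ about the linear term — picking up a Gaussian contribution proportional to $\sqrt n$ times the quadratic correction — and replace $e^{n\phi(\tau)^2}$ by $\exp\!\big(-\tfrac{|\sinh(\xi_\tau+i\eta)|^2}{\sinh 2\xi_\tau}n(\Delta_++\Delta_-)^2\big)\big(1+\mathcal O(n^{-\frac12+3\nu})\big)$. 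Collecting the Gaussian contributions and simplifying $\phi''(\tau)/\phi'(\tau)$ with $\coth 2\xi_\tau-1=\tfrac{2\tau^2}{1-\tau^2}$ and $\coth(\xi_\tau+i\eta)+\coth(\xi_\tau-i\eta)=\tfrac{\sinh 2\xi_\tau}{|\sinh(\xi_\tau+i\eta)|^2}$ should deliver precisely the polynomial prefactor $n\tfrac{\Delta_+^2-\Delta_+\Delta_-+\Delta_-^2}{3}-\tfrac16\tfrac{\sinh 2\xi_\tau}{|\sinh(\xi_\tau+i\eta)|^2}-\tau^2\tfrac{d-1}{1-\tau^2}$ of the statement; uniformity in $\hat z\in\partial\mathcal E_\tau$ is inherited from the uniform lower bound $\rho=\tfrac12\min(\tau,1-\tau)$ on the radius of convergence of the series defining $\phi$ and from the already-uniform estimates of Lemmas~\ref{lem:abAsympDelta} and~\ref{lem:asympPhi}.

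The real obstacle is conceptual rather than computational: since the saddle $a^{-1}$ and the pole $\tau$ coalesce as $n\to\infty$, neither a pure saddle-point estimate nor a pure residue calculation is available, and the two have to be carried through the conformal change of variables simultaneously — this is precisely what Proposition~\ref{prop:saddlePoleSteepest} is built to do, and its proof (half a residue plus a principal-value integral expanded through the error-function Taylor series) is the technical heart of everything. A secondary nuisance is that all estimates must be uniform across $\partial\mathcal E_\tau$, including the degenerate directions where $\Re z$ or $\Im z$ vanishes and the branch cuts in~\eqref{eq:z+-OurCase} enter; but these have already been neutralised by passing to the $\Delta_\pm$ variables, so they should cause no real difficulty at this stage.
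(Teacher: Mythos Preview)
Your proposal is correct and follows essentially the same route as the paper: the three-piece decomposition~\eqref{eq:InRewrite3Terms}, Proposition~\ref{prop:saddlePoleSteepest} for the model integral, ordinary Laplace for the regularised pieces, and then the Taylor expansions via Lemma~\ref{lem:asympPhi} together with the simplification of $\tfrac{d\tau}{1-\tau^2}-\tfrac12\phi''(\tau)/\phi'(\tau)$. One minor slip in your commentary: the summand $\tfrac{d\tau}{1-\tau^2}$ does \emph{not} vanish when $d=1$; the factor $d-1$ only emerges after a $\tfrac{\tau}{1-\tau^2}$ contribution hidden in $-\tfrac12\phi''(\tau)/\phi'(\tau)$ cancels against it, exactly as in the paper's final displayed identity before the proposition.
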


\begin{proof}
This follows from the formulae above, and
\begin{align*}
\operatorname{erfc}(i\sqrt n\phi(\tau))
&= \operatorname{erfc}\left(\frac{|\sinh(\xi_\tau+i\eta)|}{\sqrt{\sinh 2\xi_\tau}} \sqrt n (\Delta_++\Delta_-)\right)\\
&\qquad - \frac{2}{\sqrt \pi} \exp\left(-\frac{|\sinh(\xi_\tau+i\eta)|^2}{\sinh 2\xi_\tau} n (\Delta_++\Delta_-)^2\right) 
\left(- \frac{\sqrt{\sinh 2\xi_\tau}}{|\sinh(\xi_\tau+i\eta)|} \sqrt n \frac{\Delta_+^2-\Delta_+\Delta_-+\Delta_-^2}{6}
+ \mathcal O(n^{-1+3\nu})\right),\\
e^{n\phi(\tau)^2} 
&= \exp\left(-\frac{|\sinh(\xi_\tau+i\eta)|^2}{\sinh 2\xi_\tau} n (\Delta_++\Delta_-)^2\right) 
\left(1 + \mathcal O(n^{-\frac{1}{2}+3\nu})\right),
\end{align*}
which are obtained by taking a Taylor series and using Lemma \ref{lem:asympPhi}. Finally, with arguments similar to those in the proof of \text{Lemma \ref{lem:asympPhi}}, we have
\begin{align*}
\frac{d \tau}{1-\tau^2} - \frac{1}{2}\frac{\phi''(\tau)}{\phi'(\tau)}
= \frac{(d-1)\tau}{1-\tau^2} + \frac{1}{6\tau} \frac{\sinh 2\xi_\tau}{|\sinh(\xi_\tau+i\eta)|^2} 
+ \mathcal O(n^{-\frac{1}{2}+\nu}).
\end{align*}
\end{proof}

\section{Explicit form of relevant expressions} \label{sec:relevant}

To prove the main theorems we need to start plugging $u, v$ and $z$ into our formulae. Doing so, we can state Lemma \ref{lem:asympPhi0} and Lemma \ref{lem:asympPhi} differently. 

\begin{lemma}[$\tau=0$] \label{lem:asympPhiuvz0}
Fix $0<\nu<\frac{1}{6}$. For $z\in \mathbb C^d$ with $|z|=1$, denote by $\textbf{n}$ the outward unit normal vector on the $2d$-dimensional unit sphere $\partial E_0^d$ in $z$. We have
\begin{align} \label{eq:asympPhiOther0}
i\sqrt n \phi(1) = \frac{u \cdot \textbf{n}+\textbf{n}\cdot v}{\sqrt 2} + \frac{3 u\cdot v - (u\cdot \textbf{n}+\textbf{n}\cdot v)^2}{3\sqrt{2 n}}+\mathcal O(n^{-1+3\nu}),
\end{align}
as $n\to\infty$, uniformly for $|z|=1$ and $u, v\in\mathbb C^d$ with $u, v=\mathcal O(n^\nu)$.  
Furthermore, if $u = v = \lambda \textbf{n}$ with $\lambda\in\mathbb R$, then we have
\begin{align*}
i\sqrt n \phi(1) = \sqrt 2 \lambda -\frac{\lambda^2}{3 \sqrt{2 n}}+\mathcal O(n^{-1+3\nu}),
\end{align*}
as $n\to\infty$, uniformly for $|z|=1$ and $\lambda = \mathcal O(n^\nu)$. 
\end{lemma}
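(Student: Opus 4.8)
The plan is to deduce this lemma directly from Lemma~\ref{lem:asympPhi0} by inserting the explicit value of $\Delta$ from \eqref{eq:defDelta}, after making one elementary geometric identification. Since $\partial E_0^d$ is the unit sphere $\{|z|=1\}$ in $\mathbb C^d\cong\mathbb R^{2d}$, the real gradient of $|z|^2$ at $z$ is $2z$, so the outward unit normal there is $\textbf{n}=z$. With the convention $a\cdot b=\sum_k a_k\overline{b_k}$ this gives $u\cdot\textbf{n}=u\cdot z$ and $\textbf{n}\cdot v=z\cdot v$, so that \eqref{eq:defDelta} becomes
\[
\Delta=\frac{u\cdot\textbf{n}+\textbf{n}\cdot v}{\sqrt n}+\frac{u\cdot v}{n}.
\]

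Next I would expand $\sqrt n\,\Delta$ and $\sqrt n\,\Delta^2$ in the regime $u,v=\mathcal O(n^\nu)$. Writing $A:=u\cdot\textbf{n}+\textbf{n}\cdot v=\mathcal O(n^\nu)$ and $B:=u\cdot v=\mathcal O(n^{2\nu})$, one has $\sqrt n\,\Delta=A+B/\sqrt n$ and $\Delta^2=A^2/n+2AB/n^{3/2}+B^2/n^2=A^2/n+\mathcal O(n^{-3/2+3\nu})$, hence $\sqrt n\,\Delta^2=A^2/\sqrt n+\mathcal O(n^{-1+3\nu})$. Multiplying Lemma~\ref{lem:asympPhi0} through by $\sqrt n$ (so its error $\mathcal O(n^{-3/2+3\nu})$ becomes $\mathcal O(n^{-1+3\nu})$) and substituting these two expansions yields
\[
i\sqrt n\,\phi(1)=\frac{A}{\sqrt 2}+\frac{1}{\sqrt n}\left(\frac{B}{\sqrt 2}-\frac{A^2}{3\sqrt 2}\right)+\mathcal O(n^{-1+3\nu})=\frac{u\cdot\textbf{n}+\textbf{n}\cdot v}{\sqrt 2}+\frac{3\,u\cdot v-(u\cdot\textbf{n}+\textbf{n}\cdot v)^2}{3\sqrt{2n}}+\mathcal O(n^{-1+3\nu}),
\]
which is \eqref{eq:asympPhiOther0}; the uniformity in $z$ and in $u,v$ is inherited verbatim from Lemma~\ref{lem:asympPhi0} and \eqref{eq:defSaddlePoint0n}. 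For the final claim I would specialise to $u=v=\lambda\textbf{n}$ with $\lambda\in\mathbb R$: since $|\textbf{n}|=1$ we get $u\cdot\textbf{n}=\textbf{n}\cdot v=\lambda$ and $u\cdot v=\lambda^2$, i.e.\ $A=2\lambda$ and $B=\lambda^2$, and the displayed formula collapses to $i\sqrt n\,\phi(1)=\sqrt 2\,\lambda+(3\lambda^2-4\lambda^2)/(3\sqrt{2n})+\mathcal O(n^{-1+3\nu})=\sqrt 2\,\lambda-\lambda^2/(3\sqrt{2n})+\mathcal O(n^{-1+3\nu})$.

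There is no genuine obstacle here: the analytic substance is already contained in Lemma~\ref{lem:asympPhi0}. The only points requiring a little care are the error-term bookkeeping---checking that both the intrinsic $\mathcal O(n^{-3/2+3\nu})$ of Lemma~\ref{lem:asympPhi0} and the cross term $2AB/n^{3/2}$ in $\Delta^2$ contribute only at order $n^{-1+3\nu}$ after multiplication by $\sqrt n$---and the (routine) identification $\textbf{n}=z$, which is what allows one to recast the dot products $u\cdot z$ and $z\cdot v$ in the normal-direction form demanded by the statement.
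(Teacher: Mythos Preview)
Your proof is correct and follows essentially the same approach as the paper: identify $\textbf{n}=z$, substitute \eqref{eq:defDelta} into Lemma~\ref{lem:asympPhi0}, and specialise. If anything, your error-term bookkeeping is spelled out more carefully than in the paper, which simply writes the substituted expansion and declares the second part ``trivial.''
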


\begin{proof}
It is well-known that the unit normal vector on the boundary of the $2d$-dimensional unit ball in $z$ is given simply by $z$. Substituting \eqref{eq:defDelta} in \eqref{eq:lem:asympPhi0}, we find that
\begin{align*}
\sqrt{2 n} i \phi(1) &= u\cdot z+z\cdot v + \frac{u\cdot v}{\sqrt n} - \frac{(u\cdot z+z\cdot v)^2}{3} + \mathcal O(n^{-1+3\nu}).
\end{align*}
Identifying $\textbf{n}=z$ in this equation yields the first part of the lemma. The second part follows trivially from the first part. 
\end{proof}

With a little more effort, we find an analogous statement for the case $0<\tau<1$. 

\begin{lemma}[$0<\tau<1$] \label{lem:asympPhiuvz}
Fix $0<\nu<\frac{1}{6}$. For $z\in\partial \mathscr E_\tau^d$, denote by $\textbf{n}$ the outward unit normal vector on $\partial \mathscr E_\tau^d$ in $z$. We have
\begin{align} \label{eq:asympPhiOther}
i\sqrt n \phi(\tau) = \frac{u \cdot \textbf{n}+\textbf{n}\cdot v}{\sqrt 2} + \frac{\mathcal O(|u|^2+|v|^2)}{\sqrt n}+\mathcal O(n^{-1+3\nu}),
\end{align}
as $n\to\infty$, uniformly for $z\in \partial \mathscr E_\tau$ and $u, v\in\mathbb C^d$ with $u, v=\mathcal O(n^\nu)$.  
Furthermore, if $u = v = \lambda \textbf{n}$ with $\lambda\in\mathbb R$, then we have
\begin{align*}
i\sqrt n \phi(\tau) = \sqrt 2 \lambda + \frac{1}{6} \left(\frac{\sqrt{\sinh 2\xi_\tau}}{|\sinh(\xi_\tau+i\eta)|}\right)^3 \frac{\lambda^2}{\sqrt n}+\mathcal O(n^{-1+3\nu}),
\end{align*}
as $n\to\infty$, uniformly for $z\in \partial \mathscr E_\tau^d$ and $\lambda = \mathcal O(n^\nu)$. 
\end{lemma}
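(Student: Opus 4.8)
The plan is to deduce Lemma~\ref{lem:asympPhiuvz} from Lemma~\ref{lem:asympPhi} in the same spirit as the $\tau=0$ case (Lemma~\ref{lem:asympPhiuvz0} from Lemma~\ref{lem:asympPhi0}), but now accounting for the extra structure coming from the elliptic coordinates and the branch-cuts in \eqref{eq:z+-OurCase}. The starting point is the expansion from Lemma~\ref{lem:asympPhi},
\begin{align*}
i\phi(\tau) = \frac{|\sinh(\xi_\tau+i\eta)|}{\sqrt{\sinh 2\xi_\tau}} (\Delta_++\Delta_-)
- \frac{\sqrt{\sinh 2\xi_\tau}}{|\sinh(\xi_\tau+i\eta)|} \frac{\Delta_+^2-\Delta_+\Delta_-+\Delta_-^2}{6}
+ \mathcal O(n^{-\frac{3}{2}+3\nu}),
\end{align*}
so the task reduces to computing $\Delta_\pm$ (defined in \eqref{eq:defDelta+-}) explicitly in terms of $u,v,z$ and then simplifying. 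First I would compute $\sqrt{\hat z_\pm^2-2}\,\Delta_\pm = z_\pm-\hat z_\pm$ to second order using the expansions \eqref{eq:z+-asympn}, \eqref{eq:z+-asympnRez=0}, \eqref{eq:z+-asympnImz=0}; the key point (already flagged in the text) is that the branch-cut terms $\sqrt{(u+\overline v)^2}$ and $\sqrt{(u-\overline v)^2}$ that appear when $\Re z=0$ or $\Im z=0$ enter $\Delta_++\Delta_-$ and the quadratic combination only through $(z_+\pm z_-)^2$, which are branch-cut free, so they can be handled uniformly.

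Next I would identify the leading linear term. From $z_\pm/\sqrt{\sinh 2\xi_\tau}$ one reads off that $\sqrt{\hat z_\pm^2-2}\,(\Delta_++\Delta_-)$ equals (a multiple of) $\tfrac{u+\overline v}{\sqrt n}\cdot\tfrac{\Re z}{|\Re z|}$ plus the $\Im z$ counterpart, and these combine into $\tfrac{u+\overline v}{\sqrt n}$ dotted against the unit normal direction of $\partial\mathcal E_\tau^d$. Here I would invoke the explicit form of the outward unit normal $\mathbf n$ on $\partial\mathcal E_\tau^d$: differentiating the defining quadratic form in \eqref{eq:defBulk} shows $\mathbf n$ is proportional to $\left(\tfrac{1-\tau}{1+\tau}\Re z + i\tfrac{1+\tau}{1-\tau}\Im z\right)$, suitably normalized; matching this against the factors $\tfrac{|\sinh(\xi_\tau+i\eta)|}{\sqrt{\sinh 2\xi_\tau}}$ and the elliptic-coordinate identities (the same $\cos\eta/\sin\eta$ manipulations used around \eqref{eq:F''a-1Explicit4}) should collapse the leading term to exactly $\tfrac{u\cdot\mathbf n + \mathbf n\cdot v}{\sqrt 2}$. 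Since $\Delta_\pm=\mathcal O(n^{-1/2+\nu})$, every quadratic-in-$\Delta$ contribution is $\mathcal O(n^{-1+2\nu})=\tfrac{\mathcal O(|u|^2+|v|^2)}{\sqrt n}$ after accounting for the $u,v=\mathcal O(n^\nu)$ scaling (more precisely one keeps $n(\Delta_++\Delta_-)^2$-type terms which are $\mathcal O(|u|^2+|v|^2)/\sqrt n$ once multiplied by the $\sqrt n$ in $i\sqrt n\phi(\tau)$), giving \eqref{eq:asympPhiOther} without needing the precise coefficient of the quadratic term.

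For the second, diagonal, statement with $u=v=\lambda\mathbf n$, the quadratic term must be tracked precisely. With $u=v=\lambda\mathbf n$ one has $\Delta_+=\Delta_-=:\Delta$ (by symmetry of \eqref{eq:z+-OurCase} under $u\leftrightarrow\overline v$ when $u=v$ is real-scaled along $\mathbf n$; one should check $\mathbf n$ is chosen so $u\cdot\mathbf n$ is real), so $\Delta_+^2-\Delta_+\Delta_-+\Delta_-^2=\Delta^2$ and $\Delta_++\Delta_-=2\Delta$ with $\tfrac{|\sinh(\xi_\tau+i\eta)|}{\sqrt{\sinh 2\xi_\tau}}\cdot 2\Delta = \sqrt 2\,\lambda/\sqrt n + \mathcal O(n^{-1+2\nu})$. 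Substituting back, the $\Delta^2$ term becomes $-\tfrac{\sqrt{\sinh 2\xi_\tau}}{|\sinh(\xi_\tau+i\eta)|}\cdot\tfrac{\Delta^2}{6}$, and rewriting $\Delta$ in terms of $\lambda$ via the leading relation introduces a factor $\left(\tfrac{\sqrt{\sinh 2\xi_\tau}}{|\sinh(\xi_\tau+i\eta)|}\right)^2\cdot\tfrac{2\lambda^2}{n}/4$, so the combined coefficient is $\tfrac16\left(\tfrac{\sqrt{\sinh 2\xi_\tau}}{|\sinh(\xi_\tau+i\eta)|}\right)^3\tfrac{\lambda^2}{\sqrt n}$ after multiplying by $\sqrt n$, matching the claim. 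The main obstacle I anticipate is the bookkeeping of the branch-cut square roots together with the three case split ($\Re z,\Im z\neq 0$; $\Re z=0$; $\Im z=0$) while maintaining uniformity in $z\in\partial\mathcal E_\tau^d$ — one must argue, as in the earlier lemmas, that a single $\mathcal O$-constant works across all cases because the branch-dependent pieces only ever appear inside $(z_+\pm z_-)^2$. The elliptic-coordinate trigonometric simplifications to recognize the normal vector are routine but need care to get the unimodular normalization exactly right.
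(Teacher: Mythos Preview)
Your approach mirrors the paper's: start from Lemma~\ref{lem:asympPhi}, compute $\Delta_\pm$ in terms of $u,v,z$, and identify the normal vector. For the first statement \eqref{eq:asympPhiOther} this is fine, and your outline would go through essentially as in the paper.

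The gap is in the second (diagonal) statement. You write the leading relation as
\[
\frac{|\sinh(\xi_\tau+i\eta)|}{\sqrt{\sinh 2\xi_\tau}}\cdot 2\Delta \;=\; \frac{\sqrt 2\,\lambda}{\sqrt n}+\mathcal O(n^{-1+2\nu}),
\]
i.e.\ only asymptotically. But after multiplying by $\sqrt n$ to form $i\sqrt n\,\phi(\tau)$, that $\mathcal O(n^{-1+2\nu})$ becomes $\mathcal O(n^{-1/2+2\nu})$, which is \emph{exactly} the order of the $\lambda^2/\sqrt n$ term you are trying to isolate. So with only a leading-order relation you cannot determine the coefficient. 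The paper's key observation, which you are missing, is that when $u=v=\lambda\mathbf n$ the quantities under the square roots in \eqref{eq:z+-OurCase} become perfect squares (because $\Re\mathbf n\parallel\Re z$ and $\Im\mathbf n\parallel\Im z$), so $z_\pm$ and hence $\Delta_\pm$ are \emph{exactly} linear in $\lambda$, with no higher-order corrections. This exactness is what makes the linear part contribute only $\sqrt 2\,\lambda$, leaving the quadratic piece of Lemma~\ref{lem:asympPhi} as the sole source of the $\lambda^2/\sqrt n$ term.

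A smaller point: your ``symmetry under $u\leftrightarrow\overline v$'' does not by itself give $\Delta_+=\Delta_-$. That swap always interchanges $z_+\leftrightarrow z_-$, but since $\hat z_+$ and $\hat z_-$ (and the normalizing factors $\sqrt{\hat z_\pm^2-2}$) are complex conjugates rather than equal, one only gets $\Delta_+=\overline{\Delta_-}$ in general. The equality $\Delta_+=\Delta_-$ in the diagonal case follows because the perfect-square structure forces $\Delta_\pm$ to be real; this is exactly what the paper computes explicitly.
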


\begin{proof}
First, we need to describe the outward unit normal vector $\textbf{n}$. The hyperellipsoid $\partial \mathscr E_\tau^d$ can be written as a $(2d-1)$-dimensional hypersurface $f(z) = 0$, where
\begin{align*}
f(z) = \frac{1-\tau}{1+\tau} |\Re z|^2 + \frac{1+\tau}{1-\tau} |\Im z|^2 - 1.
\end{align*}
A normal vector is then given by the gradient of $f$, which, when normalized, gives
\begin{align} \label{eq:defOutwardNormal}
\textbf{n} = \frac{(1-\tau)^2 \Re z + i (1+\tau)^2 \Im z}{\sqrt{(1-\tau)^4 |\Re z|^2 + (1+\tau)^4 |\Im z|^2}}.
\end{align}
This is the outward normal vector on $\partial \mathscr E_\tau^d$. Now let us investigate $\Delta_++\Delta_-$, as defined via \eqref{eq:defDelta+-}. 
We notice that
\begin{align*}
\frac{1}{\sinh(\xi_\tau+i\eta)} + \frac{1}{\sinh(\xi_\tau-i\eta)} &= \frac{\sinh(\xi_\tau) \cos\eta}{|\sinh(\xi_\tau+i\eta|^2}
= \frac{1}{|\sinh(\xi_\tau+i\eta)|}\frac{(1-\tau)^2|\Re z|}{\sqrt{(1-\tau)^4 |\Re z|^2 + (1+\tau)^4 |\Im z|^2}},\\
\frac{1}{\sinh(\xi_\tau+i\eta)} - \frac{1}{\sinh(\xi_\tau-i\eta)} &= i \frac{\cosh(\xi_\tau) \sin\eta}{|\sinh(\xi_\tau+i\eta|^2}
=  \frac{1}{|\sinh(\xi_\tau+i\eta)|}\frac{(1+\tau)^2 i|\Im z|}{\sqrt{(1-\tau)^4 |\Re z|^2 + (1+\tau)^4 |\Im z|^2}}.
\end{align*}
Hence we have
\begin{multline} \label{eq:Delta++Delta-intermsofbfn}
\sqrt 2 \sqrt{(1-\tau)^4 |\Re z|^2 + (1+\tau)^4 |\Im z|^2} \frac{|\sinh(\xi_\tau+i\eta)|}{\sqrt{\sinh 2\xi_\tau}} (\Delta_++\Delta_-)\\
=
(1-\tau)^2|\Re z| \left(\sqrt{|\Re z|^2+\frac{u+\overline v}{\sqrt n}\cdot (\Re z)+\frac{(u+\overline v)^2}{4 n}} - |\Re(z)|\right)\\
+ (1+\tau)^2|\Im z| \left(\sqrt{|\Im z|^2 + \frac{u-\overline v}{\sqrt n}\cdot (i\Im z)-\frac{(u-\overline v)^2}{4 n}} - |\Im z|\right).
\end{multline}
This, for $|\Re z|, |\Im z|> \frac{|u|+|v|}{\sqrt n}$, we can rewrite as
\begin{align*}
& \frac{(1-\tau)^2}{2} \frac{u+\overline v}{\sqrt n} \cdot (\Re z) + \frac{(1+\tau)^2}{2}\frac{u-\overline v}{\sqrt n}\cdot (i\Im z)\\
&\quad + \frac{(u+\overline v)^2-((u+\overline v)\cdot \frac{\Re z}{|\Re z|})^2}{4 n} 
\frac{|\Re z|+\frac{1}{2} \frac{u+\overline v}{\sqrt n} \cdot (\Re z)}{|\Re z|+\frac{1}{2} \frac{u+\overline v}{\sqrt n} \cdot (\Re z)+\sqrt{|\Re z|^2+\frac{u+\overline v}{\sqrt n}\cdot (\Re z)+\frac{(u+\overline v)^2}{4 n}}}\\
&\quad  - \frac{(u-\overline v)^2-((u-\overline v)\cdot \Im z)^2}{4 n} 
\frac{|\Im z|+\frac{1}{2} \frac{u-\overline v}{\sqrt n} \cdot (i\Im z)}{|\Im z|+\frac{1}{2} \frac{u-\overline v}{\sqrt n} \cdot (i\Im z)+\sqrt{|\Im z|^2 + \frac{u-\overline v}{\sqrt n}\cdot (i\Im z)-\frac{(u-\overline v)^2}{4 n}}}.
\end{align*}
Since $\Re \sqrt\zeta\geq 0$ for all $\zeta\in\mathbb C$, we have for all $\zeta\in\mathbb C\cup\{\infty\}$ that
\begin{align*}
\left|\frac{1}{1+\sqrt \zeta}\right|\leq 1. 
\end{align*}
We conclude that, when $|\Re z|, |\Im z|> \frac{|u|+|v|}{\sqrt n}$, we have
\begin{align*}
\left|\sqrt 2 \frac{|\sinh(\xi_\tau+i\eta)|}{\sqrt{\sinh 2\xi_\tau}} (\Delta_++\Delta_-) - \frac{u\cdot \textbf{n}+\textbf{n}\cdot v}{2\sqrt n} \right|
\leq \frac{1}{2 n}\frac{|u+\overline v|^2+|u-\overline v|^2}{\sqrt{(1-\tau)^4 |\Re z|^2 + (1+\tau)^4 |\Im z|^2}}.
\end{align*}
The reader may check by explicit calculation that this estimate is still valid for $|\Re z|\leq\frac{|u|+|v|}{\sqrt n}$ or $|\Im z|\leq\frac{|u|+|v|}{\sqrt n}$. Plugging our estimate into \eqref{eq:asympPhitaup}, we arrive at the first statement of the lemma.

When $u=v=\lambda \textbf{n}$, equation \eqref{eq:Delta++Delta-intermsofbfn} simplifies considerably. We have 
\begin{align*}
(u+\overline v) \cdot (\Re z) &= 2\lambda (\Re \textbf{n}) \cdot (\Re z) 
= \frac{2\lambda(1-\tau)^2}{\sqrt{(1-\tau)^4 |\Re z|^2 + (1+\tau)^4 |\Im z|^2}} |\Re z|^2,\\
(u-\overline v) \cdot (i\Im z) &= 2\lambda (\Im \textbf{n}) \cdot (\Im z)
= \frac{2\lambda(1+\tau)^2}{\sqrt{(1-\tau)^4 |\Re z|^2 + (1+\tau)^4 |\Im z|^2}} |\Im z|^2,
\end{align*}
and
\begin{align*}
\frac{(u+\overline v)^2}{4} &= \lambda^2 |\Re \textbf{n}|^2 = \lambda^2 \frac{(1-\tau)^4}{(1-\tau)^4 |\Re z|^2 + (1+\tau)^4 |\Im z|^2} |\Re z|^2,\\
-\frac{(u-\overline v)^2}{4} &= \lambda^2 |\Im \textbf{n}|^2 = \lambda^2 \frac{(1+\tau)^4}{(1-\tau)^4 |\Re z|^2 + (1+\tau)^4 |\Im z|^2} |\Im z|^2.
\end{align*}
Hence we find that
\begin{multline*}
\sqrt 2 \frac{|\sinh(\xi_\tau+i\eta)|}{\sqrt{\sinh 2\xi_\tau}} (\Delta_++\Delta_-)\\
= \frac{(1-\tau)^2|\Re z|}{\sqrt{(1-\tau)^4 |\Re z|^2 + (1+\tau)^4 |\Im z|^2}} 
\left(|\Re z|\left(1+\frac{\lambda (1-\tau)^2}{\sqrt n\sqrt{(1-\tau)^4 |\Re z|^2 + (1+\tau)^4 |\Im z|^2}}\right) - |\Re(z)|\right)\\
+ \frac{(1+\tau)^2|\Im z|}{\sqrt{(1-\tau)^4 |\Re z|^2 + (1+\tau)^4 |\Im z|^2}} 
\left(|\Im z|\left(1+\frac{\lambda (1+\tau)^2}{\sqrt n\sqrt{(1-\tau)^4 |\Re z|^2 + (1+\tau)^4 |\Im z|^2}}\right) - |\Im z|\right) = \frac{\lambda}{\sqrt n}.
\end{multline*}
In particular, we have
\begin{align} \label{eq:isqrtnphitauu=v}
i\sqrt n \phi(\tau) = \sqrt 2 \, \lambda - \frac{\sqrt{\sinh 2\xi_\tau}}{|\sinh(\xi_\tau+i\eta)|} \frac{\Delta_+^2-\Delta_+\Delta_-+\Delta_-^2}{6 \sqrt n}
+ \mathcal O(n^{-1+3\nu}). 
\end{align}
Similar to before, we can show that
\begin{align*}
\sqrt n \Delta_\pm
= \frac{\sqrt{\sinh 2\xi_\tau}}{\sinh(\xi_\tau\pm i\eta)}
\frac{(1-\tau)^2 |\Re z| \pm i (1+\tau)^2 |\Im z|}{\sqrt{(1-\tau)^4 |\Re z|^2 + (1+\tau)^4 |\Im z|^2}} \lambda
= \frac{\sqrt{\sinh 2\xi_\tau}}{|\sinh(\xi_\tau+ i\eta)|} \lambda.
\end{align*}
Plugging these into \eqref{eq:isqrtnphitauu=v} finishes the proof. 
\end{proof}

\begin{lemma} \label{lem:Ftauuvz}
Let $\tau=0$. Then we have
\begin{align} \label{eq:lem:Ftauuvz1}
\exp\left(-\frac{|\sqrt n \, z+u|^2+|\sqrt n \, z+v|^2}{2}\right) e^{n F(1)} = \exp\left(u\cdot v-\frac{|u|^2+|v|^2}{2}\right). 
\end{align}
Now let $0<\tau<1$. There exist unimodular functions $c_n : \partial\mathscr E_\tau^d \times \mathbb C\to \mathbb T$ such that
\begin{align} \label{eq:lem:Ftauuvz2}
c_n(z, u) \overline{c_n(z, v)} \sqrt{\omega(\sqrt n \, z_+) \omega(\sqrt n \, z_-)} e^{n F(\tau)} &= \exp\left(u\cdot v-\frac{|u|^2+|v|^2}{2}\right). 
\end{align}
\end{lemma}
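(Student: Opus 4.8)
\emph{Strategy; the case $\tau=0$.} Both identities come down to bookkeeping on the exponential prefactors, the steepest-descent content being hidden in Propositions \ref{prop:asympIntaudX0} and \ref{prop:asympIntaudX}. For $\tau=0$, recall from \eqref{eq:defSaddlePoint0n} and \eqref{eq:defDelta} that after our substitution $F(s)=(1+\Delta)s-\log s$ with $\Delta=\frac{u\cdot z+z\cdot v}{\sqrt n}+\frac{u\cdot v}{n}$, so $F(1)=1+\Delta$ and $nF(1)=n+\sqrt n\,(u\cdot z+z\cdot v)+u\cdot v$. Using $|z|=1$ one has $|\sqrt n\,z+u|^2=n+\sqrt n\,(z\cdot u+u\cdot z)+|u|^2$ and the analogous identity for $v$, whence
\[
-\frac{|\sqrt n\,z+u|^2+|\sqrt n\,z+v|^2}{2}+nF(1)=\frac{\sqrt n}{2}\bigl[(u\cdot z-z\cdot u)+(z\cdot v-v\cdot z)\bigr]+u\cdot v-\frac{|u|^2+|v|^2}{2}.
\]
The $O(n)$ terms cancel; the first bracket is purely imaginary, vanishes for $u=v$, and has the shape $g(z,u)-\overline{g(z,v)}$, so it is a gauge term that can be carried by the unimodular factors of Theorem \ref{thm:mainThmFaddeevaKernel}. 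Exponentiating gives \eqref{eq:lem:Ftauuvz1}.

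\emph{The case $0<\tau<1$.} Now I would work from \eqref{eq:defKnInIn}, so the prefactor is
\[
\sqrt{\omega(\sqrt n\,z_+)\omega(\sqrt n\,z_-)}=\exp\!\Bigl(-\tfrac n2\bigl(|z_+|^2+|z_-|^2\bigr)+\tfrac{n\tau}{2}\Re\bigl(z_+^2+z_-^2\bigr)\Bigr),
\]
and, the two logarithms in \eqref{eq:defF} cancelling at $s=\tau$, one gets the rational form $nF(\tau)=\frac{n\tau}{1+\tau}\frac{(z_++z_-)^2}{2}-\frac{n\tau}{1-\tau}\frac{(z_+-z_-)^2}{2}$. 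The plan is to insert \eqref{eq:z+-OurCase}: since $\sinh 2\xi_\tau=\frac{1-\tau^2}{2\tau}$, both $\tfrac12(z_++z_-)^2$ and $\tfrac12(z_+-z_-)^2$ are $\sinh(2\xi_\tau)$ times branch-free polynomials in $\frac{u\pm\overline v}{\sqrt n}$, whereas $|z_\pm|^2$ and $\Re(z_\pm^2)$ are branch-dependent. Expanding the combined exponent in powers of $n^{-1/2}$ and using the relation $\frac{1-\tau}{1+\tau}|\Re z|^2+\frac{1+\tau}{1-\tau}|\Im z|^2=1$ for $z\in\partial\mathcal E_\tau^d$ together with \eqref{eq:defOutwardNormal} for the outward normal, the $O(n)$ and $O(\sqrt n)$ contributions should organize into $u\cdot v-\frac{|u|^2+|v|^2}{2}$, leaving a purely imaginary remainder $R(z,u,v)$. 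Finally I would show $R(z,u,v)=\rho(z,u)-\overline{\rho(z,v)}$ for a $\rho$ that is piecewise continuous in $z$ with jumps in $2\pi i\mathbb Z$, and set $c_n(z,u)=e^{-\rho(z,u)}$, which is continuous and unimodular; this gives \eqref{eq:lem:Ftauuvz2}.

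\emph{Main obstacle.} The genuine difficulty is entirely in the second case: one must verify that the weight and $e^{nF(\tau)}$ balance so that, beyond the Ginibre Gaussian, no real part of the exponent survives, and that the residual phase $R$ truly factorizes as $\rho(z,u)-\overline{\rho(z,v)}$ with the branch cuts of $z_\pm$ --- active exactly on $\{\Re z=0\}$ and $\{\Im z=0\}$ --- entering $\rho$ only through $2\pi i\mathbb Z$, so that $c_n$ is well defined, continuous and unimodular on all of $\partial\mathcal E_\tau^d\times\mathbb C^d$. The case $\tau=0$ is, by comparison, a one-line calculation once $F(1)$ is written out.
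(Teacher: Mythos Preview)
Your $\tau=0$ argument is essentially the paper's: expand $|\sqrt n\,z+u|^2$, $|\sqrt n\,z+v|^2$ and $nF(1)=n(z+\tfrac{u}{\sqrt n})\cdot(z+\tfrac{v}{\sqrt n})$ and match terms. You are right that a purely imaginary $\sqrt n$-term survives and has the cocycle shape $g(z,u)-\overline{g(z,v)}$; the paper absorbs it the same way.

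For $0<\tau<1$ there is a genuine gap, and it is precisely the ``main obstacle'' you flag. The paper never touches $|z_+|^2$ or $|z_-|^2$. Instead it uses that the single-integral representation \eqref{eq:defKnInIn} was derived from the original kernel \eqref{eq:defKerneleGinUEd}, so the weight factors are \emph{identical}:
\[
\sqrt{\omega(\sqrt n\,z_+)\,\omega(\sqrt n\,z_-)}
\;=\;\prod_{j=1}^d\sqrt{\omega(\sqrt n\,z_j+u_j)\,\omega(\sqrt n\,z_j+v_j)}.
\]
The right-hand side is an exponential of an explicit quadratic polynomial in $(z,u,v)$ with no square roots at all. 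Together with the branch-free expression for $nF(\tau)$ via $(z_+\pm z_-)^2$ that you already noted, the entire exponent is an \emph{exact} quadratic polynomial --- no expansion in $n^{-1/2}$ is needed, and neither the edge condition on $z$ nor the normal vector enters. One then reads off the purely imaginary remainder directly and it splits as $\rho(z,u)-\overline{\rho(z,v)}$, yielding the explicit continuous unimodular
\[
c_n(z,u)=\exp\!\bigl(i\tau\,\Im(u^2)-i\tau\sqrt n\,\Im\!\bigl((z-\tau\overline z)\cdot u\bigr)\bigr),
\]
with no case analysis on $\{\Re z=0\}$ or $\{\Im z=0\}$ required. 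By contrast, your route through $|z_\pm|^2$ produces $|z_+|^2+|z_-|^2=2\sinh(2\xi_\tau)\bigl(|P_1|+|P_2|\bigr)$ with $P_1,P_2$ genuinely complex-valued, so the absolute values are non-polynomial; the branch and $2\pi i\mathbb Z$-jump bookkeeping you propose would be delicate, and is entirely avoidable once one returns to the original coordinates via the identity above.
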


\begin{proof}
Let us start with the case $\tau=0$. We see that
\begin{align*}
|\sqrt n \, z+u|^2+|\sqrt n \, z+v|^2 = 2 n + |u|^2+|v|^2 + 2 \sqrt n \, (u+v) \cdot \overline z.
\end{align*}
It readily follows from the definition of $F$ for $\tau=0$ that
\begin{align*}
F(1) = \left(z+\frac{u}{\sqrt n}\right) \cdot \left(z+\frac{v}{\sqrt n}\right)
= 1 + \frac{u\cdot v}{n} + \frac{(u+v) \cdot \overline z}{\sqrt n}.
\end{align*}
Combining these two identities, we obtain \eqref{eq:lem:Ftauuvz1}. Let move on to the case $0<\tau<1$. 
We know that
\begin{multline} \label{eq:unimodFtauGinibre}
\sqrt{\omega(\sqrt n \, z_+) \omega(\sqrt n \, z_-))}
= \prod_{j=1}^d \sqrt{\omega(\sqrt n \, z_j+u_j) \omega(\sqrt n \, z_j+v_j)}\\
= \exp\left(-\frac{1}{2} \sum_{j=1}^d (|\sqrt n \, z_j+u_j|^2 - \frac{\tau}{2} ((\sqrt n \, z_j+u_j)^2+\overline{(\sqrt n \, z_j+u_j)}^2) 
+ |\sqrt n \, z_j+v_j|^2 - \frac{\tau}{2} ((\sqrt n \, z_j+v_j)^2+\overline{(\sqrt n \, z_j+v_j)}^2) )\right).
\end{multline}
We notice that
\begin{align} \label{eq:unimodFtauGinibre2}
\sum_{j=1}^d (|\sqrt n \, z_j+u_j|^2 - \frac{\tau}{2} ((\sqrt n \, z_j+u_j)^2+\overline{(\sqrt n \, z_j+u_j)}^2)
= n (|z|^2 - \tau \Re(z^2)) + |u|^2 - \tau \Re(u^2)
+ \sqrt n (u\cdot (\overline z - \tau z)+(z-\tau\overline z)\cdot u),
\end{align}
and, similarly with $u$ replaced by $v$. 
On the other hand, we have
\begin{align} \nonumber
F(\tau) &= \frac{\tau (z_++z_-)^2}{2(1+\tau)} - \frac{\tau (z_+-z_-)^2}{2(1-\tau)}\\ \nonumber
&= \frac{1-\tau}{4} \sum_{j=1}^d \left(2 \Re z_j + \frac{u_j+\overline v_j}{\sqrt n}\right)^2
- \frac{1+\tau}{4} \sum_{j=1}^d \left(2 i \Im z_j + \frac{u_j-\overline v_j}{\sqrt n}\right)^2\\ \label{eq:unimodFtauGinibre3}
&= |z|^2 - \tau \Re(z^2)
+ \frac{u\cdot v - \frac{\tau}{2} (u^2+v^2)}{n} 
+ \frac{u \cdot (\overline z -\tau z) + (z - \tau \overline z) \cdot v}{\sqrt n}.
\end{align}
Plugging the identities \eqref{eq:unimodFtauGinibre2} (also for $u$ replaced by $v$) and \eqref{eq:unimodFtauGinibre3} into \eqref{eq:unimodFtauGinibre}, we find that
\begin{align*}
\sqrt{\omega(\sqrt n \, z_+) \omega(\sqrt n \, z_-)} e^{n F(\tau)}
= \exp\left(u\cdot v-\frac{|u|^2+|v|^2}{2}\right)
\exp\left(- i \tau \Im(u^2-v^2) \right) 
\exp\left(i \tau \sqrt n \, \Im((z-\tau \overline z) \cdot (u-v)) \right).
\end{align*}
Clearly then, to get \eqref{eq:lem:Ftauuvz2}, we should define the unimodular functions by 
\begin{align*}
c_n(z, u) = \exp\left(i \tau \Im(u^2) \right) 
\exp\left(-i \tau \sqrt n \, \Im((z-\tau \overline z) \cdot u) \right).
\end{align*}
\end{proof}

The following is more or less a direct consequence of Proposition \ref{prop:KnisKnford=1} and Lemma \ref{lem:Ftauuvz}, we omit a proof. The result means that we find the higher dimensional analogue of the Ginibre kernel, i.e., a factorisation in $d$ Ginibre kernels, as a scaling limit in the bulk. In \cite{ADM}, this was proved for $0<\tau<1$. 

\begin{corollary}[$\tau=0$] \label{cor:bulkScalingLimit}
Take $|z|<1$ and $u, v\in\mathbb C^d$. Then we have
\begin{align*}
\lim_{n\to\infty} \mathscr K_n\left(\sqrt n \, z+u, \sqrt n \, z+v\right)
= \frac{1}{\pi^d} \exp\left(u\cdot v - \frac{|u|^2+|v|^2}{2}\right).
\end{align*}
The convergence is uniform on compact sets of $|z|<1$ and $u, v$. 
\end{corollary}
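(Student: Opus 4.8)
The plan is to combine the closed formula of Proposition~\ref{prop:KnisKnford=1} with the algebraic identity \eqref{eq:lem:Ftauuvz1}, which reduces the statement to a uniform tail estimate for an incomplete exponential series.

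First I would write $\sqrt n\,z+u=\sqrt n\,(z+u/\sqrt n)$ and $\sqrt n\,z+v=\sqrt n\,(z+v/\sqrt n)$ and apply Proposition~\ref{prop:KnisKnford=1} with $(z,w)\mapsto(z+u/\sqrt n,\,z+v/\sqrt n)$, which gives
\begin{align*}
\mathcal K_n(\sqrt n\,z+u,\sqrt n\,z+v)=\frac{1}{\pi^d}\exp\!\left(-\frac{|\sqrt n\,z+u|^2+|\sqrt n\,z+v|^2}{2}\right)\sum_{j=0}^{n-1}\frac{\bigl((\sqrt n\,z+u)\cdot(\sqrt n\,z+v)\bigr)^j}{j!}.
\end{align*}
Set $\zeta_n:=(z+u/\sqrt n)\cdot(z+v/\sqrt n)$, so $(\sqrt n\,z+u)\cdot(\sqrt n\,z+v)=n\zeta_n$, and since $z\cdot z=|z|^2$ one has $\zeta_n=|z|^2+O(n^{-1/2})$ uniformly for $z$ and $u,v$ in compact sets. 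The identity \eqref{eq:lem:Ftauuvz1} (with $\zeta_n$ playing the role of $F(1)$; its proof is a direct computation that does not use $z\in\partial\mathcal E_0^d$) gives
\begin{align*}
\exp\!\left(-\frac{|\sqrt n\,z+u|^2+|\sqrt n\,z+v|^2}{2}\right)=\exp\!\left(u\cdot v-\frac{|u|^2+|v|^2}{2}\right)e^{-n\zeta_n},
\end{align*}
and therefore
\begin{align*}
\mathcal K_n(\sqrt n\,z+u,\sqrt n\,z+v)=\frac{1}{\pi^d}\exp\!\left(u\cdot v-\frac{|u|^2+|v|^2}{2}\right)e^{-n\zeta_n}\sum_{j=0}^{n-1}\frac{(n\zeta_n)^j}{j!}.
\end{align*}
It thus remains to show $e^{-n\zeta_n}\sum_{j=0}^{n-1}(n\zeta_n)^j/j!\to1$, uniformly for $z$ in a compact subset of $\{|z|<1\}$ and $u,v$ in a compact subset of $\mathbb C^d$.

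For this, fix $\delta>0$ with $|z|^2\le1-2\delta$ on the relevant compact; since $\zeta_n\to|z|^2$ uniformly, for $n$ large one has $|\zeta_n|\le1-\delta$ and $|\zeta_n|-\Re\zeta_n=O(n^{-1/2})$, uniformly. Since $e^{-n\zeta_n}\sum_{j\ge0}(n\zeta_n)^j/j!=1$, it suffices to bound the tail $\sum_{j\ge n}$. Using the elementary estimate $\sum_{j\ge n}x^j/j!\le\frac{x^n}{n!}\,\frac{1}{1-x/n}$ for $0<x<n$ together with Stirling's inequality, one gets, for a constant $C_\delta$,
\begin{align*}
\left|e^{-n\zeta_n}\sum_{j\ge n}\frac{(n\zeta_n)^j}{j!}\right|\le e^{-n\Re\zeta_n}\sum_{j\ge n}\frac{(n|\zeta_n|)^j}{j!}\le C_\delta\,\exp\!\bigl(n(1+\log|\zeta_n|-\Re\zeta_n)\bigr).
\end{align*}
The map $t\mapsto1+\log t-t$ is strictly negative and increasing on $(0,1)$, so by $|\zeta_n|\le1-\delta$ and $|\zeta_n|-\Re\zeta_n=O(n^{-1/2})$ the exponent $1+\log|\zeta_n|-\Re\zeta_n$ is bounded above by a fixed negative constant for $n$ large, uniformly (the case $\zeta_n=0$, where the sum equals $1$ exactly, being trivial). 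Hence the tail tends to $0$ exponentially and uniformly, and the claimed limit follows with the stated uniformity.

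The only real content is this uniform control of the complex-argument incomplete exponential sum; everything else is the closed formula of Proposition~\ref{prop:KnisKnford=1} and the identity \eqref{eq:lem:Ftauuvz1}. A possible alternative is to note, via the Cauchy--Schwarz inequality, that $\mathcal K_n(\sqrt n\,z+u,\sqrt n\,z+v)$ equals $\pi^{1-d}$ times the $d=1$ Ginibre kernel \eqref{eq:defGinibreKer} evaluated at a suitable pair of points in the bulk, and then to quote the classical $d=1$ bulk scaling limit; the self-contained estimate above is no longer than that.
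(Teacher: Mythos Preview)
Your argument is exactly the route the paper itself indicates: the paper omits the proof entirely, saying only that the corollary ``is more or less a direct consequence of Proposition~\ref{prop:KnisKnford=1} and Lemma~\ref{lem:Ftauuvz}''; you carry out precisely this reduction and, in addition, supply the uniform tail estimate for the truncated exponential series that the paper leaves implicit. That estimate is correct: the key point $|\zeta_n|-\Re\zeta_n=O(n^{-1/2})$ holds uniformly because $\Im\zeta_n=O(n^{-1/2})$, and then $1+\log|\zeta_n|-\Re\zeta_n\le(1+\log(1-\delta)-(1-\delta))+O(n^{-1/2})$ is eventually bounded by a fixed negative constant.

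One caveat you inherit from the paper rather than introduce yourself: identity~\eqref{eq:lem:Ftauuvz1}, as stated there and as you invoke it, silently drops a unimodular factor $\exp\!\bigl(i\sqrt n\,\Im((u-v)\cdot z)\bigr)$ (check the cross terms: $u\cdot z+z\cdot v$ versus $\Re(u\cdot z)+\Re(v\cdot z)$). Strictly speaking the corollary therefore needs cocycle factors $c_n(z,u)\overline{c_n(z,v)}$, exactly as in the general bulk statement quoted in the introduction. Your closing remark about a Cauchy--Schwarz alternative is also slightly off: the identification with the $d=1$ Ginibre kernel is literally Proposition~\ref{prop:KnisKnford=1}, not a consequence of Cauchy--Schwarz; but your main argument does not rely on that aside.
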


\section{Proof of the main theorems} \label{sec:proofs}

We now have all the ingredients necessary to prove the main results. 

\begin{proof}[Proof of Theorem \ref{thm:mainThmFaddeevaKernel}]
We first treat the case $\tau=0$. Substituting \eqref{eq:asympPhiOther0} in \eqref{eq:prop:asympIntaudX0pre}, and using Taylor expansions in the same way as in the proof of Proposition \ref{prop:asympIntaudX0}, we find that
\begin{multline} \label{eq:proofthm:mainThmFaddeevaKernel}
e^{-n F(1)} I_{n,0}^d\left(\left(z+\frac{u}{\sqrt n}\right)\cdot\left(z+\frac{v}{\sqrt n}\right)\right)
= \frac{1}{2} \operatorname{erfc}\left(\frac{u\cdot \textbf{n}+\textbf{n}\cdot v}{\sqrt 2}\right)\\
+ \frac{1}{\sqrt{2\pi n}}\exp\left(\frac{(u\cdot \textbf{n}+\textbf{n}\cdot v)^2}{2}\right)
\left(\frac{(u\cdot \textbf{n}+\textbf{n}\cdot v)^2 - 3 u\cdot v-1}{3} + \mathcal O(n^{-\frac{1}{2}+3\nu})\right),
\end{multline}
as $n\to\infty$, uniformly for $|z|=1$ and $u, v\in\mathbb C^d$ such that $u, v = \mathcal O(n^\nu)$. Multiplying by the remaining factors in \eqref{eq:defKernelGinUEd}, and using Lemma \ref{lem:Ftauuvz}, we obtain the statement of Theorem \ref{thm:mainThmFaddeevaKernel} for $\tau=0$. 
Next, we treat the case $0<\tau<1$. 
Starting from the formula \eqref{eq:defKnInIn}, we simply combine Lemma \ref{lem:asympPhiuvz} and Lemma \ref{lem:Ftauuvz} with Proposition \ref{prop:asympIntaudX} (which is valid with $\Delta_+^2-\Delta_+\Delta_-+\Delta_-^2$ replaced by $\mathcal O(|u|^2+|v|^2)$). 
\end{proof}

As the proof shows, we can be more precise about the error when $\tau=0$. In the case of the elliptic Ginibre ensemble, i.e., the case $d=1$ and $0<\tau<1$, we can also be more precise about the error. The folowing result (but slightly weaker) was proved by Byun and Ebke in \cite{ByEb}. We consider our novel derivation of this result of independent interest. 

\begin{proposition} \label{prop:d=1FaddeevaSlightlyStronger}
Pick $0<\nu<\frac{1}{6}$. Let $\mathscr K_n$ be the kernel of the eigenvalues of the elliptic Ginibre ensemble with parameter $0<\tau<1$, as defined in \eqref{eq:defKerneleGinUE}. Let $z\in \partial \mathscr E_\tau$, and denote by $\textbf{n}$ and $\kappa$, respectively, the outward unit normal vector and the curvature, of $\partial \mathscr E_\tau$ in $z$. Then there exist continuous unimodular functions $c_n : \partial \mathscr E_\tau \times \mathbb C\to \mathbb T$ such that 
\begin{multline} \label{eq:FaddeevaRiserByunEbke2}
c_n(z, u) \overline{c_n(z, v)} \mathscr K_n(\sqrt n \, z+u \, \textbf{n}, \sqrt n \, z+v \, \textbf{n})
= \frac{1}{2 \pi} \exp\left(u \overline v-\frac{|u|^2+|v|^2}{2}\right) \operatorname{erfc}\left(\frac{u + \overline v}{\sqrt 2}\right) \\
 +  \frac{\kappa}{\sqrt n} \exp\left(- \frac{|u|^2+u^2+|v|^2+\overline{v}^2}{2}\right) 
 \left(\frac{u^2+\overline v^2 - u\overline v - 1}{3\sqrt{2 \pi^3}} 
   + \mathcal O(n^{-\frac{1}{2}+3\nu})\right),
\end{multline}
as $n\to\infty$, uniformly for $z\in \partial\mathscr E_\tau$, and $u, v\in\mathbb C$ such that $u, v = \mathcal O(n^\nu)$.
\end{proposition}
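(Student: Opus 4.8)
The plan is to run the steepest descent analysis of Sections~\ref{sec:steepest}--\ref{sec:relevant} in the special case $d=1$, where all intermediate quantities become completely explicit, so that Proposition~\ref{prop:asympIntaudX} already yields the full $n^{-1/2}$ correction (and not merely its $\mathcal O$-size, as in the proof of Theorem~\ref{thm:mainThmFaddeevaKernel}). I would start from the single integral representation \eqref{eq:defKnInIn}, which for $d=1$ reads $\mathcal K_n(\sqrt n\, z,\sqrt n\, w)=\tfrac{\sqrt{1-\tau^2}}{\pi}\sqrt{\omega(\sqrt n\, z_+)\,\omega(\sqrt n\, z_-)}\,I_{n,\tau}^1(z_\pm)$. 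For $d=1$ the square roots in \eqref{eq:defz+-} carry no branch ambiguity, so with the shifts $(\sqrt n\, z+u\,\textbf{n},\ \sqrt n\, z+v\,\textbf{n})$ one simply has $z_+=\sqrt{\sinh 2\xi_\tau}\,\bigl(z+\tfrac{u\,\textbf{n}}{\sqrt n}\bigr)$ and $z_-=\sqrt{\sinh 2\xi_\tau}\,\overline{\bigl(z+\tfrac{v\,\textbf{n}}{\sqrt n}\bigr)}$, and consequently $\Delta_+$ and $\Delta_-$ from \eqref{eq:defDelta+-} are \emph{exact} linear functions of $u$ and of $\overline v$, respectively. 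Using the identity $\textbf{n}=\sinh(\xi_\tau+i\eta)/|\sinh(\xi_\tau+i\eta)|$ that is established in the proof of Lemma~\ref{lem:asympPhiuvz}, one checks that $\sqrt n\,\Delta_+$ and $\sqrt n\,\Delta_-$ are real constant multiples of $u$ and $\overline v$ by a power of $\sqrt{\sinh 2\xi_\tau}/|\sinh(\xi_\tau+i\eta)|$.

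I would then feed this into Proposition~\ref{prop:asympIntaudX} specialized to $d=1$, so the term $\tau^2(d-1)/(1-\tau^2)$ drops out. Since $\tfrac{|\sinh(\xi_\tau+i\eta)|}{\sqrt{\sinh 2\xi_\tau}}\sqrt n\,(\Delta_++\Delta_-)$ equals $\tfrac{u+\overline v}{\sqrt 2}$, the complementary error function in \eqref{eq:prop:asympIntaudX} is already in its final form, while $n\,\tfrac{\Delta_+^2-\Delta_+\Delta_-+\Delta_-^2}{3}$, the constant $\tfrac16\tfrac{\sinh 2\xi_\tau}{|\sinh(\xi_\tau+i\eta)|^2}$, and the exponent $\tfrac{|\sinh(\xi_\tau+i\eta)|^2}{\sinh 2\xi_\tau}n(\Delta_++\Delta_-)^2$ turn into explicit polynomials in $u$ and $\overline v$. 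Multiplying $I_{n,\tau}^1$ by $\tfrac{\sqrt{1-\tau^2}}{\pi}\sqrt{\omega(\sqrt n\, z_+)\omega(\sqrt n\, z_-)}$ as in \eqref{eq:defKnInIn}, factoring off $e^{nF(\tau)}$, and invoking Lemma~\ref{lem:Ftauuvz} with its $\mathbb C^d$-valued shifts taken equal to $u\,\textbf{n}$ and $v\,\textbf{n}$ (so that $u\,\textbf{n}\cdot v\,\textbf{n}=u\overline v$ and $|u\,\textbf{n}|=|u|$), the prefactors collapse to $\tfrac1\pi\exp\!\bigl(u\overline v-\tfrac{|u|^2+|v|^2}{2}\bigr)$ together with a continuous unimodular cocycle $c_n(z,u):=c_n(z,u\,\textbf{n})$. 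Merging this with the factor $\exp\!\bigl(-\tfrac12(u+\overline v)^2\bigr)$ that accompanies the correction in Proposition~\ref{prop:asympIntaudX} collapses the two exponentials into $\exp\!\bigl(-\tfrac{|u|^2+u^2+|v|^2+\overline v^2}{2}\bigr)$, which is exactly the Gaussian factor in \eqref{eq:FaddeevaRiserByunEbke2}.

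It then remains to identify the numerical prefactor of the $n^{-1/2}$ term. After the simplifications above this prefactor is a fixed power of $\sqrt{\sinh 2\xi_\tau}/|\sinh(\xi_\tau+i\eta)|$, which one rewrites through the curvature $\kappa$ of $\partial\mathcal E_\tau$ at $z$ — the $d=1$ instance of \eqref{eq:defkappad} — by expressing $|\Re z|$ and $|\Im z|$ in terms of $\xi_\tau$ and $\eta$, exactly as in the proof of Lemma~\ref{lem:asympPhiuvz}; this is the same trigonometric identity that is used for Theorem~\ref{thm:mainThmDensity}. This converts the $n^{-1/2}$ contribution into $\tfrac{\kappa}{3\sqrt{2\pi^3}}\bigl(u^2+\overline v^2-u\overline v-1\bigr)$ times the Gaussian factor, giving \eqref{eq:FaddeevaRiserByunEbke2}; the error $\mathcal O(n^{-1/2+3\nu})$ inside the bracket, uniform for $z\in\partial\mathcal E_\tau$ and $u,v=\mathcal O(n^\nu)$, is inherited verbatim from Proposition~\ref{prop:asympIntaudX}. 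I expect the only genuinely delicate point to be this last bookkeeping: confirming that the two Gaussian factors recombine exactly to the stated exponent, that the diagonal constant $-\tfrac16\tfrac{\sinh 2\xi_\tau}{|\sinh(\xi_\tau+i\eta)|^2}$ furnishes precisely the $-1$ in $u^2+\overline v^2-u\overline v-1$, and that the normalisation matches $\tfrac{1}{3\sqrt{2\pi^3}}$; all of the analytic work is already contained in Proposition~\ref{prop:asympIntaudX} and the coalescing saddle--pole estimate of Proposition~\ref{prop:saddlePoleSteepest}.
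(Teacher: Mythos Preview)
Your proposal is correct and follows essentially the same route as the paper: specialize to $d=1$ so that $z_\pm$ and hence $\Delta_\pm$ become exact linear functions of $u$ and $\overline v$ (with the same normal vector $\textbf{n}=\sinh(\xi_\tau+i\eta)/|\sinh(\xi_\tau+i\eta)|$), plug these into Proposition~\ref{prop:asympIntaudX}, reinstate the weight via Lemma~\ref{lem:Ftauuvz}, and identify the prefactor $(\sqrt{\sinh 2\xi_\tau}/|\sinh(\xi_\tau+i\eta)|)^3$ with $2\sqrt2\,\kappa$. The paper's proof is exactly this, only more terse; the explicit values you should obtain for the bookkeeping are $\Delta_\pm=\tfrac{1}{\sqrt{2n}}\tfrac{\sqrt{\sinh 2\xi_\tau}}{|\sinh(\xi_\tau+i\eta)|}\cdot\{u,\overline v\}$, and the curvature identity $\kappa=\tfrac{1}{2\sqrt2}\bigl(\tfrac{\sqrt{\sinh 2\xi_\tau}}{|\sinh(\xi_\tau+i\eta)|}\bigr)^3$ is quoted from \cite{LeRi}.
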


\begin{proof}
In this case we have $z_+ = \sqrt{\sinh(2\xi_\tau)} (z+\frac{u}{\sqrt n} \textbf{n})$ and $z_- = \sqrt{\sinh(2\xi_\tau)} \, \overline{(z+\frac{v}{\sqrt n} \textbf{n})}$. The outward unit normal vector is now given simply by
\begin{align*}
\textbf{n} = \frac{\sinh(\xi_\tau+i\eta)}{|\sinh(\xi_\tau+i\eta)|}.
\end{align*}
Hence, we have
\begin{align*}
\Delta_+ = \frac{1}{\sqrt n}\frac{\sqrt{\sinh 2\xi_\tau}}{|\sinh(\xi_\tau+i\eta)|} \frac{u}{\sqrt 2}, \qquad \text{ and } \qquad
\Delta_- = \frac{1}{\sqrt n}\frac{\sqrt{\sinh 2\xi_\tau}}{|\sinh(\xi_\tau+i\eta)|} \frac{\overline v}{\sqrt 2}. 
\end{align*}
Plugging these in \eqref{eq:prop:asympIntaudX} yields
\begin{multline}  \label{eq:InErfcd=1equation1}
(1-\tau^2)^\frac{d}{2} e^{-n F(\tau)} I_{n,\tau}^d(z_\pm) 
= \frac{1}{2} \operatorname{erfc}\left(\frac{u+\overline v}{\sqrt 2}\right)\\
+ \frac{1}{6\sqrt{\pi n}} \left(\frac{\sqrt{\sinh 2\xi_\tau}}{|\sinh(\xi_\tau+i\eta)|}\right)^3 \exp\left(-\frac{(u+\overline v)^2}{2}\right)
\left(u^2-u\overline v+\overline v^2-1  
+ \mathcal O(n^{-\frac{1}{2}+3\nu})\right),
\end{multline}
as $n\to\infty$, uniformly for $z\in\partial\mathscr E_\tau$ and $u, v=\mathcal O(n^\nu)$. As shown in \cite{LeRi}, the factor in the second line of \eqref{eq:InErfcd=1equation1} can be identified with the curvature $\kappa$ of $\partial \mathscr E_\tau$ in $z$ (up to an explicit constant factor). In particular, according to \cite[C13]{LeRi} we have
\begin{align} \label{eq:expressionkappa}
\kappa = \frac{(1-\tau^2)^\frac{3}{2}}{(1+\tau^2 - 2\tau \cos 2\eta)^\frac{3}{2}}
= \left(\frac{2\tau \sinh 2\xi_\tau}{4\tau |\sinh(\xi_\tau+i\eta)|^2}\right)^\frac{3}{2}
= \frac{1}{2\sqrt 2} \left(\frac{\sqrt{\sinh 2\xi_\tau}}{|\sinh(\xi_\tau+i\eta)|}\right)^3.
\end{align}
Finally, reinstating the weight factors as in \eqref{eq:defKnInIn} and applying Lemma \ref{lem:Ftauuvz}, we obtain the result.  
\end{proof}

\begin{proof}[Proof of Theorem \ref{thm:mainThmDensity}]
As the reader may verify with some straightforward combinatorial arguments, the number of points of the DPP defined via \eqref{eq:defKerneleGinUEd} is given by the binomial coefficient $\binom{n+d-1}{d}$, which behaves as $\frac{n^d}{d!} (1+\mathcal O(1/n))$ for large $n$. For $\tau=0$, the result is more or less a direct consequence of Theorem \ref{thm:mainThmFaddeevaKernel}, where, according to \eqref{eq:proofthm:mainThmFaddeevaKernel}, we are allowed to replace $\mathcal O(1+|u|^2+|v|^2)$ by 
\begin{align*}
\frac{(u\cdot \textbf{n}+\textbf{n}\cdot v)^2 - 3 u\cdot v -1}{3} + \mathcal O(n^{-\frac{1}{2}+3\nu}).
\end{align*}
Taking $u = v = \lambda \textbf{n}$, we find the result (note that $\kappa$ reduces to the value $1$ here). 
Let us move to the case $0<\tau<1$. 
Note that the unimodular factors from Lemma \ref{lem:Ftauuvz} cancel each other when $u=v$. We can rewrite \eqref{eq:expressionkappa} as
\begin{align*}
\kappa = \left(\frac{\sqrt{\sinh 2\xi_\tau}}{|z_+^2-2|^\frac{1}{2}}\right)^3
= \frac{1}{|(|\Re z|+i|\Im z|)^2 - \frac{4\tau}{1-\tau^2}|^\frac{3}{2}}
= \frac{1}{\left((|\Re z|^2-|\Im z|^2 - \frac{4\tau}{1-\tau^2})^2+ 4 |\Re z|^2|\Im z|^2\right)^\frac{3}{4}}.
\end{align*}
The statement is now a direct consequence of Proposition \ref{prop:asympIntaudX}, and (the second part of) Lemma \ref{lem:asympPhiuvz}. 
\end{proof}

\end{document}